\documentclass[11pt,envcountsect]{llncs}
\usepackage[margin=1in]{geometry}
\usepackage{amsmath,amssymb}
\usepackage[final]{graphicx}
\usepackage{algorithm}
\usepackage[noend]{algpseudocode}
\usepackage{microtype}
\floatname{algorithm}{Algorithm}

%\setlength{\topmargin}{1in}%
%\setlength{\oddsidemargin}{1in}%
%\setlength{\evensidemargin}{1in}%
%\setlength{\textwidth}{6.5in}%
%\setlength{\textheight}{9in}%

% A4     21cm x 29.7cm
% Letter 21.59cm x 27.94cm
% Letter 8.5in x 11in
% 1in = 2.54cm

%-------------------- Paper specific macros --------------------------

\def\cut{\cap}
\def\eps{\varepsilon}

\spnewtheorem*{proofsketch}{Sketch of proof}{\itshape}{\rmfamily}

\def\kCSP{{\sc Max}-$k$-CSP}
\def\kSAT{{\sc Max}-$k$-SAT}
\def\MC{{\sc Max}-CUT}
\def\MDC{{\sc Max}-DICUT}
\def\kDense{$k$-{\sc Densest Subgraph}}
\def\deg{\mathrm{deg}}

\def\eps{\varepsilon}
\def\e{\epsilon}
\def\rb{\bar{\rho}}
\def\tb{\bar{\tau}}
\def\Prob{\mathbb{P}}
\def\Exp{\mathbb{E}}
%\def\Var{\mathbb{V}\mathrm{ar}}
%\def\Var{\mathbb{V}}

%-------------------- Paper specific macros --------------------------

\title{Sub-exponential Approximation Schemes for CSPs:\\ from Dense to Almost
Sparse%
\thanks{This research was supported by the project AlgoNow, co-financed by the European Union (European Social Fund - ESF) and Greek national funds, through the Operational Program ``Education and Lifelong Learning'' of the National Strategic Reference Framework (NSRF) - Research Funding Program: THALES, investing in knowledge society through the European Social Fund.}}

\author{Dimitris Fotakis\inst{1} \and Michael Lampis\inst{2} \and Vangelis Th. Paschos\inst{2}}

\institute{%
School of Electrical and Computer Engineering, National Technical University of Athens, Greece\\
\email{fotakis@cs.ntua.gr}
\and
LAMSADE, Universit\'{e} Paris Dauphine, France\\
\email{michail.lampis@dauphine.fr}, \email{paschos@lamsade.dauphine.fr}
%
%\and
%%
%Institut Universitaire de France
}

\begin{document}

\maketitle

\begin{abstract}

It has long been known, since the classical work of (Arora, Karger, Karpinski,
JCSS~99), that \MC\ admits a PTAS on dense graphs, and more generally, \kCSP\
admits a PTAS on ``dense'' instances with $\Omega(n^k)$ constraints. In this
paper we extend and generalize their exhaustive sampling approach, presenting a
framework for $(1-\eps)$-approximating any \kCSP\ problem in
\emph{sub-exponential} time while significantly relaxing the denseness
requirement on the input instance.

Specifically, we prove that for any constants $\delta \in
(0, 1]$ and $\eps > 0$, we can approximate \kCSP\ problems with
$\Omega(n^{k-1+\delta})$ constraints within a factor of $(1-\eps)$ in time
$2^{O(n^{1-\delta}\ln n /\eps^3)}$. The framework is quite general and includes
classical optimization problems, such as \MC, {\sc Max}-DICUT, \kSAT, and (with
a slight extension) $k$-{\sc Densest Subgraph}, as special cases. For \MC\ in
particular (where $k=2$), it gives an approximation scheme that runs in time
sub-exponential in $n$ even for ``almost-sparse'' instances (graphs with
$n^{1+\delta}$ edges).

We prove that our results are essentially best possible, assuming the ETH.
First, the density requirement cannot be relaxed further: there exists a
constant $r < 1$ such that for all $\delta > 0$, \kSAT\ instances with
$O(n^{k-1})$ clauses cannot be approximated within a ratio better than $r$ in
time $2^{O(n^{1-\delta})}$.  Second, the running time of our algorithm is
almost tight \emph{for all densities}. Even for \MC\ there exists $r<1$ such
that for all $\delta'
> \delta >0$, \MC\ instances with $n^{1+\delta}$ edges cannot be approximated
 within a ratio better than $r$ in time $2^{n^{1-\delta'}}$.

\end{abstract}

%\thispagestyle{empty}%
%\setcounter{page}{0}%
%\newpage
%\pagestyle{plain}
%\pagenumbering{arabic}

\section{Introduction}
\label{s:intro}

The complexity of Constraint Satisfaction Problems (CSPs) has long played a
central role in theoretical computer science and it quickly became evident that
almost all interesting CSPs are NP-complete \cite{S78}.  Thus, since
approximation algorithms are one of the standard tools for dealing with NP-hard
problems, the question of approximating the corresponding optimization problems
({\sc Max}-CSP) has attracted significant interest over the years \cite{T10}.
Unfortunately, most CSPs typically resist this approach: not only are they
APX-hard \cite{KSW97}, but quite often the best polynomial-time approximation
ratio we can hope to achieve for them is that guaranteed by a trivial random
assignment \cite{H01}. This striking behavior is often called
\emph{approximation resistance}.

Approximation resistance and other APX-hardness results were originally
formulated in the context of \emph{polynomial-time} approximation. It would
therefore seem that one conceivable way for working around such barriers could
be to consider approximation algorithms running in super-polynomial time, and
indeed super-polynomial approximation for NP-hard problems is a topic that has
been gaining more attention in the literature recently
\cite{CLN13,BEP09,BCEP13,CKW09,CP10,CPW11}.  Unfortunately, the existence of
quasi-linear PCPs with small soundness error, first given in the work of
Moshkovitz and Raz \cite{MR10}, established that approximation resistance is a
phenomenon that carries over even to \emph{sub-exponential} time approximation,
essentially ``killing'' this approach for CSPs.  For instance, we now know that
if, for any $\eps>0$, there exists an algorithm for {\sc Max}-3-SAT with ratio
$7/8+\eps$ running in time $2^{n^{1-\eps}}$ this would imply the existence of a
sub-exponential \emph{exact} algorithm for 3-SAT, disproving the Exponential
Time Hypothesis (ETH).  It therefore seems that sub-exponential time
does not improve the approximability of CSPs, or put another way, for many CSPs
obtaining a very good approximation ratio requires almost as much time as
solving the problem exactly.

Despite this grim overall picture, many positive approximation results for CSPs
have appeared over the years, by taking advantage of the special structure of
various classes of instances. One notable line of research in this vein is the
work on the approximability of \emph{dense} CSPs, initiated by Arora, Karger
and Karpinski \cite{AKK99} and independently by de la Vega \cite{V96}.  The
theme of this set of results is that the problem of maximizing the number of
satisfied constraints in a CSP instance with arity $k$ (\kCSP) becomes
significantly easier if the instance contains $\Omega(n^k)$ constraints. More
precisely, it was shown in \cite{AKK99} that \kCSP\ admits a
\emph{polynomial-time approximation scheme} (PTAS) on dense instances, that is,
an algorithm which for any constant $\eps>0$ can in time polynomial in $n$
produce an assignment that satisfies $(1-\eps)\mathrm{OPT}$ constraints.
Subsequent work produced a stream of positive
\cite{VK00,BVK03,AVKK03,CKSV12,CKSV11,FK96,AFK02,DFJ98,II05} (and some negative
\cite{VK99,AA07}) results on approximating CSPs which are in general APX-hard,
showing that dense instances form an island of tractability where many
optimization problems which are normally APX-hard admit a PTAS.

\noindent\textbf{Our contribution}: The main goal of this paper is to use the
additional power afforded by sub-exponential time to extend this island of
tractability as much as possible.  To demonstrate the main result, consider a
concrete CSP such as {\sc Max}-3-SAT.  As mentioned, we know that
sub-exponential time does not in general help us approximate this problem: the
best ratio achievable in, say, $2^{\sqrt{n}}$ time is still 7/8.  On the other
hand, this problem admits a PTAS on instances with $\Omega(n^3)$ clauses. This
density condition is, however, rather strict, so the question we would like to
answer is the following: Can we efficiently approximate a larger (and more
sparse) class of instances while using sub-exponential time?

In this paper we provide a positive answer to this question, not just for {\sc
Max}-3-SAT, but also for any \kCSP\ problem.  Specifically, we show that for
any constants $\delta\in (0,1]$, $\eps>0$ and integer $k\ge 2$, there is an
algorithm which achieves a $(1-\eps)$ approximation of \kCSP\ instances with
$\Omega(n^{k-1+\delta})$ constraints in time $2^{O(n^{1-\delta}\ln n
/\eps^3)}$. A notable special case of this result is for $k=2$, where the input
instance can be described as a graph. For this case, which contains classical
problems such as \MC, our algorithm gives an approximation scheme running in
time $2^{O(\frac{n}{\Delta}\ln n/\eps^3)}$ for graphs with average degree
$\Delta$. In other words, this is an approximation scheme that runs in time
\emph{sub-exponential in $n$} even for almost sparse instances where the
average degree is $\Delta = n^\delta$ for some small $\delta>0$. More
generally, our algorithm provides a trade-off between the time available and
the density of the instances we can handle. For graph problems ($k=2$) this
trade-off covers the whole spectrum from dense to almost sparse instances,
while for general \kCSP, it covers instances where the number of constraints
ranges from $\Theta(n^{k})$ to $\Theta(n^{k-1})$.

\noindent\textbf{Techniques}: The algorithms in this paper are an extension and
generalization of the \emph{exhaustive sampling} technique given by Arora,
Karger and Karpinski \cite{AKK99}, who introduced a framework of smooth
polynomial integer programs to give a PTAS for dense \kCSP. The basic idea of
that work can most simply be summarized for \MC. This problem can be recast  as
the problem of maximizing a quadratic function over $n$ boolean variables.
This is of course a hard problem, but suppose that we could somehow ``guess''
for each vertex how many of its neighbors belong in each side of the cut. This
would make the quadratic problem linear, and thus much easier. The main
intuition now is that, if the graph is dense, we can take a sample of $O(\log
n)$ vertices and guess their partition in the optimal solution. Because every
non-sample vertex will have ``many'' neighbors in this sample, we can with high
confidence say that we can estimate the fraction of neighbors on each side for
all vertices.  The work of de la Vega \cite{V96} uses exactly this algorithm
for \MC, greedily deciding the vertices outside the sample. The work of
\cite{AKK99} on the other hand pushed this idea to its logical conclusion,
showing that it can be applied to degree-$k$ polynomial optimization problems,
by recursively turning them into linear programs whose coefficients are
estimated from the sample. The linear programs are then relaxed to produce
fractional solutions, which can be rounded back into an integer solution to the
original problem.

On a very high level, the approach we follow in this paper retraces the steps
of \cite{AKK99}: we formulate \kCSP\ as a degree-$k$ polynomial maximization
problem; we then recursively decompose the degree-$k$ polynomial problem into
lower-degree polynomial optimization problems, estimating the coefficients by
using a sample of variables for which we try all assignments; the result of
this process is an integer linear program, for which we obtain a fractional
solution in polynomial time; we then perform randomized rounding to obtain an
integer solution that we can use for the original problem.

The first major difference between our approach and \cite{AKK99} is of course
that we need to use a larger sample. This becomes evident if one considers \MC\
on graphs with average degree $\Delta$. In order to get the sampling scheme to
work we must be able to guarantee that each vertex outside the sample has
``many'' neighbors inside the sample, so we can safely estimate how many of
them end up on each side of the cut. For this, we need a sample of size at
least $n\log n/\Delta$. Indeed, we use a sample of roughly this size, and exhausting
all assignments to the sample is what dominates the running time of our
algorithm. As we argue later, not only is the sample size we use essentially
tight, but more generally the running time of our algorithm is essentially
optimal (under the ETH).

Nevertheless, using a larger sample is not in itself sufficient to extend the
scheme of \cite{AKK99} to non-dense instances. As observed in \cite{AKK99} ``to
achieve a multiplicative approximation for dense instances it suffices to
achieve an additive approximation for the nonlinear integer programming
problem''. In other words, one of the basic ingredients of the analysis of
\cite{AKK99} is that additive approximation errors of the order $\eps n^k$ can
be swept under the rug, because we know that in a dense instance the optimal
solution has value $\Omega(n^k)$. This is \emph{not} true in our case, and we
are therefore forced to give a more refined analysis of the error of our
scheme, independently bounding the error introduced in the first step
(coefficient estimation) and the last (randomized rounding).

A further complication arises when considering \kCSP\ for $k>2$. The scheme of
\cite{AKK99} recursively decomposes such dense instances into lower-order
polynomials which retain the same ``good'' properties. This seems much harder
to extend to the non-dense case, because intuitively if we start from a
non-dense instance the decomposition could end up producing some dense and
some sparse sub-problems. Indeed we present a scheme that approximates \kCSP\
with $\Omega(n^{k-1+\delta})$ constraints, but does not seem to extend to
instances with fewer than $n^{k-1}$ constraints.  As we will see, there seems
to be a fundamental complexity-theoretic justification explaining exactly why
this decomposition method cannot be extended further.

To ease presentation, we first give all the details of our scheme for the
special case of \MC\ in Section \ref{s:maxcut}. We then present the full
framework for approximating \emph{smooth polynomials} in Section \ref{s:pip};
this implies the approximation result for \kSAT\ and more generally \kCSP.  We
then show in Section \ref{s:kdense} that it is possible to extend our framework
to handle \kDense, a problem which can be expressed as the maximization of a
polynomial subject to linear constraints.  For this problem we obtain an
approximation scheme which, given a graph with average degree $\Delta=n^\delta$
gives a $(1-\eps)$ approximation in time $2^{O(n^{1-\delta/3}\ln n/\eps^3)}$.
Observe that this extends the result of \cite{AKK99} for this problem not only
in terms of the density of the input instance, but also in terms of $k$ (the
result of \cite{AKK99} required that $k=\Omega(n)$).

\noindent\textbf{Hardness}: What makes the results of this paper more
interesting is that we can establish that in many ways they are essentially
best possible, if one assumes the ETH. In particular, there are at least two
ways in which one may try to improve on these results further: one would be to
improve the running time of our algorithm, while another would be to extend the
algorithm to the range of densities it cannot currently handle. In Section
\ref{s:lower} we show that both of these approaches would face significant
barriers. Our starting point is the fact that (under ETH) it takes exponential
time to approximate \MC\ arbitrarily well on sparse instances, which is a
consequence of the existence of quasi-linear PCPs. By manipulating such \MC\
instances, we are able to show that for \emph{any} average degree
$\Delta=n^{\delta}$ with $\delta<1$ the time needed to approximate \MC\
arbitrarily well almost matches the performance of our algorithm. Furthermore,
starting from sparse \MC\ instances, we can produce instances of \kSAT\ with
$O(n^{k-1})$ clauses while preserving hardness of approximation. This gives a
complexity-theoretic justification for our difficulties in decomposing \kCSP\
instances with less than $n^{k-1}$ constraints.
\section{Notation and Preliminaries}
\label{s:prelim}

An $n$-variate degree-$d$ polynomial $p(\vec{x})$ is \emph{$\beta$-smooth} \cite{AKK99}, for some constant $\beta \geq 1$, if for every $\ell \in \{ 0, \ldots, d\}$, the absolute value of each coefficient of each degree-$\ell$ monomial in the expansion of $p(\vec{x})$ is at most $\beta n^{d - \ell}$.
An $n$-variate degree-$d$ $\beta$-smooth polynomial $p(\vec{x})$ is \emph{$\delta$-bounded}, for some constant $\delta \in (0, 1]$, if for every $\ell$, the sum, over all degree-$\ell$ monomials in $p(\vec{x})$, of the absolute values of their coefficients is $O(\beta n^{d-1+\delta})$. Therefore, for any $n$-variate degree-$d$ $\beta$-smooth $\delta$-bounded polynomial $p(\vec{x})$ and any $\vec{x} \in \{ 0, 1\}^n$, $|p(\vec{x})| = O(d \beta n^{d-1+\delta})$.

Throughout this work, we treat $\beta$, $\delta$ and $d$ as fixed constants and express the running time of our algorithm as a function of $n$, i.e., the number of variables in $p(\vec{x})$. %Nevertheless, the size of the input is determined by the total number $m$ of monomials in the expansion of $p(\vec{x})$.

\noindent{\bf Optimization Problem.}
Our approximation schemes for almost sparse instances of \MC, \kSAT, and \kCSP\ are obtained by reducing them to the following problem: Given an $n$-variate $d$-degree $\beta$-smooth $\delta$-bounded polynomial $p(\vec{x})$, we seek a binary vector $\vec{x}^\ast \in \{0, 1\}^n$ that maximizes $p$, i.e., for all binary vectors $\vec{y} \in \{0, 1\}^n$, $p(\vec{x}^\ast) \geq p(\vec{y})$.

\noindent{\bf Polynomial Decomposition and General Approach.}
As in \cite[Lemma~3.1]{AKK99}, our general approach is motivated by the fact that any $n$-variate $d$-degree $\beta$-smooth polynomial $p(\vec{x})$ can be naturally decomposed into a collection of $n$ polynomials $p_j(\vec{x})$. Each of them has degree $d-1$ and at most $n$ variables and is $\beta$-smooth.
\begin{proposition}[\cite{AKK99}]\label{pr:decomposition}
Let $p(\vec{x})$ be any $n$-variate degree-$d$ $\beta$-smooth polynomial. Then, there exist a constant $c$ and degree-$(d-1)$ $\beta$-smooth polynomials $p_j(\vec{x})$ such that
\( p(\vec{x}) = c + \sum_{j = 1}^n x_j p_j(\vec{x}) \).
\end{proposition}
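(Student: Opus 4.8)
The plan is to prove Proposition~\ref{pr:decomposition} by a direct combinatorial argument on the monomial expansion of $p(\vec{x})$, essentially mimicking the standard ``pull out one variable'' manipulation. First I would write $p(\vec{x}) = c + q(\vec{x})$, where $c$ is the constant term (the coefficient of the empty monomial) and $q(\vec{x})$ collects all monomials of degree at least $1$. The goal is then to show $q(\vec{x}) = \sum_{j=1}^n x_j p_j(\vec{x})$ with each $p_j$ of degree $d-1$ and $\beta$-smooth.

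The key step is to group the monomials of $q$ by their ``smallest'' participating variable. Concretely, fix a linear order on the variables $x_1, \ldots, x_n$; for each nonconstant monomial $m$ appearing in $p$, let $j(m)$ be the least index such that $x_{j(m)}$ divides $m$, and assign $m$ to the bucket $j(m)$. Then $q(\vec{x}) = \sum_{j=1}^n \big( \sum_{m : j(m) = j} m \big)$, and within bucket $j$ every monomial is divisible by $x_j$, so we may factor it out and define $p_j(\vec{x}) := \sum_{m : j(m)=j} (m / x_j)$, which by construction satisfies $q = \sum_j x_j p_j$. (Since the variables are $0/1$-valued one could also just divide by $x_j$ formally; either way $m/x_j$ is a genuine monomial of degree one less than $m$.) It remains to check the structural properties: $p_j$ has degree at most $d-1$ because every $m$ in bucket $j$ had degree at most $d$; and $p_j$ is $\beta$-smooth because each degree-$\ell$ monomial of $p_j$ arises from a \emph{unique} degree-$(\ell+1)$ monomial of $p$ (the one obtained by multiplying back by $x_j$), whose coefficient has absolute value at most $\beta n^{d-(\ell+1)} = \beta n^{(d-1)-\ell}$ by $\beta$-smoothness of $p$, which is exactly the bound required for a degree-$(d-1)$ $\beta$-smooth polynomial.

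I do not expect any serious obstacle here; the statement is essentially bookkeeping, and the only points that need care are (i) making sure the buckets partition the nonconstant monomials so that no coefficient is double-counted or split (hence the need for the ``smallest index'' tie-breaking rule, which makes the assignment $m \mapsto j(m)$ well-defined and each $m/x_j$ a single monomial), and (ii) verifying that the smoothness bound is preserved with the \emph{same} constant $\beta$ rather than something like $n\beta$ — this works precisely because we do not collect all monomials divisible by $x_j$ into $p_j$, only those whose minimal index is $j$, so the map from monomials of $p_j$ back to monomials of $p$ is injective and coefficients are inherited verbatim. The constant $c$ in the statement is simply the constant term of $p$, so ``there exists a constant $c$'' is immediate. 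If one prefers, the whole argument can also be phrased as an induction on the number of variables, peeling off $x_1$ first and recursing, but the one-shot bucketing is cleaner and makes the smoothness preservation transparent.
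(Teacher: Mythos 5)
Your proof is correct and takes essentially the same approach as the paper: assign each non-constant monomial of $p$ to exactly one of its variables $x_j$, factor $x_j$ out, and observe that smoothness is inherited because each coefficient of $p_j$ is an unchanged coefficient of $p$ with the degree shifted down by one. Your explicit minimal-index bucketing is in fact a slight improvement in rigor over the paper's wording (``keep only the monomials with variable $x_j$''), which read literally would double-count monomials with several variables; the paper only acknowledges the needed disambiguation later, in the remark that the decomposition becomes unique if one insists $i_1 < i_2 < \cdots < i_{d-1}$.
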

\begin{proof} The proposition is shown in \cite[Lemma~3.1]{AKK99}. We prove it
here just for completeness. Each polynomial $p_j(\vec{x})$ is obtained from
$p(\vec{x})$ if we keep only the monomials with variable $x_j$ and pull $x_j$
out, as a common factor. The constant $c$ takes care of the constant term in
$p(\vec{x})$. Each monomial of degree $\ell$ in $p(\vec{x})$ becomes a monomial
of degree $\ell-1$ in $p_j(\vec{x})$, which implies that the degree of
$p_j(\vec{x})$ is $d-1$. Moreover, by the $\beta$-smoothness condition, the
coefficient $t$ of each degree-$\ell$ monomial in $p(\vec{x})$ has $|t| \leq
\beta n^{d - \ell}$. The corresponding monomial in $p_j(\vec{x})$ has degree
$\ell-1$ and the same coefficient $t$ with $|t| \leq \beta n^{d - 1 -
(\ell-1)}$. Therefore, if $p(\vec{x})$ is $\beta$-smooth, each $p_j(\vec{x})$
is also $\beta$-smooth.  \qed\end{proof}
\noindent{\bf Graph Optimization Problems.}
Let $G(V, E)$ be a (simple) graph with $n$ vertices and $m$ edges. For each vertex $i \in V$, $N(i)$ denotes $i$'s neighborhood in $G$, i.e., $N(i) = \{ j \in V: \{i, j\} \in E\}$. We let $\deg(i) = |N(i)|$ be the degree of $i$ in $G$ and $\Delta = 2|E|/n$ denote the average degree of $G$.
We say that a graph $G$ is \emph{$\delta$-almost sparse}, for some constant $\delta \in (0, 1]$, if $m = \Omega(n^{1+\delta})$ (and thus, $\Delta = \Omega(n^\delta)$).

In \MC, we seek a partitioning of the vertices of $G$ into two sets $S_0$ and
$S_1$ so that the number of edges with endpoints in $S_0$ and $S_1$ is
maximized. If $G$ has $m$ edges, the number of edges in the optimal cut is at
least $m/2$.
%
%\MDC\ is the directed version of \MC, where $G(V, E)$ is a directed graph and we seek to maximize the number of edges $(u, v) \in E$ with $u \in S_0$ and $v \in S_1$. Similarly to \MC, if $|E| = m$, the number of edges in the optimal directed cut is at least $m/4$.

In \kDense, given an undirected graph $G(V, E)$, we seek a subset $C$ of $k$ vertices so that the induced subgraph $G[C]$ has a maximum number of edges. 

\noindent{\bf Constraint Satisfaction Problems.}
An instance of (boolean) \kCSP\ with $n$ variables consists of $m$ boolean constraints $f_1, \ldots, f_m$, where each $f_j : \{ 0, 1\}^k \to \{0, 1\}$ depends on $k$ variables and is satisfiable, i.e., $f_j$ evaluates to $1$ for some truth assignment. We seek a truth assignment to the variables that maximizes the number of satisfied constraints. \kSAT\ is a special case of \kCSP\ where each constraint $f_j$ is a disjunction of $k$ literals. An averaging argument implies that the optimal assignment of a \kCSP\ (resp. \kSAT) instance with $m$ constraints satisfies at least $2^{-k} m$ (resp. $(1-2^{-k})m$) of them. We say that an instance of \kCSP\ is \emph{$\delta$-almost sparse}, for some constant $\delta \in (0, 1]$, if the number of constraints is $m = \Omega(n^{k-1+\delta})$.

Using standard arithmetization techniques (see e.g., \cite[Sec.~4.3]{AKK99}), we can reduce any instance of \kCSP\ with $n$ variables to an $n$-variate degree-$k$ polynomial $p(\vec{x})$ so that the optimal truth assignment for \kCSP\ corresponds to a maximizer $\vec{x}^\ast \in \{0, 1\}$ of $p(\vec{x})$ and the value of the optimal \kCSP\ solution is equal to $p(\vec{x}^\ast)$. Since each $k$-tuple of variables can appear in at most $2^k$ different constraints, $p(\vec{x})$ is $\beta$-smooth, for $\beta \in [1, 4^k]$, and has at least $m$ and at most $4^k m$ monomials. Moreover, if the instance of \kCSP\ has $m = \Theta(n^{k-1+\delta})$ constraints, then $p(\vec{x})$ is $\delta$-bounded and its maximizer $\vec{x}^\ast$ has $p(\vec{x}^\ast) = \Omega(n^{k-1+\delta})$.

\noindent{\bf Notation and Terminology.}
An algorithm has \emph{approximation ratio} $\rho \in (0, 1]$ (or is \emph{$\rho$-approximate}) if for all instances, the value of its solution is at least $\rho$ times the value of the optimal solution.

For graphs with $n$ vertices or CSPs with $n$ variables, we say that an event $E$ happens with high probability (or whp.), if $E$ happens with probability at least $1-1/n^c$, for some constant $c \geq 1$.

For brevity and clarity, we sometimes write $\alpha \in (1\pm \e_1) \beta \pm \e_2 \gamma$, for some constants $\e_1, \e_2 > 0$, to denote that $(1-\e_1)\beta - \e_2 \gamma \leq \alpha \leq (1+\e_1)\beta + \e_2 \gamma$.

\section{Approximating \MC\ in Almost Sparse Graphs}
\label{s:maxcut}

In this section, we apply our approach to \MC, which serves as a convenient example and allows us to present the intuition and the main ideas.

The \MC\ problem in a graph $G(V, E)$ is equivalent to maximizing, over all binary vectors $\vec{x} \in \{0, 1\}^n$, the following $n$-variate degree-$2$ $2$-smooth polynomial
\[ p(\vec{x}) = \sum_{\{i, j\} \in E} (x_i (1 - x_j) + x_j (1 - x_i)) \]
Setting a variable $x_i$ to $0$ indicates that the corresponding vertex $i$ is assigned to the left side of the cut, i.e., to $S_0$, and setting $x_i$ to $1$ indicates that vertex $i$ is assigned to the right side of the cut, i.e., to $S_1$.
We assume that $G$ is $\delta$-almost sparse and thus, has $m = \Omega(n^{1+\delta})$ edges and average degree $\Delta = \Omega(n^\delta)$.
Moreover, if $m = \Theta(n^{1+\delta})$, $p(\vec{x})$ is $\delta$-bounded, since for each edge $\{i, j\} \in E$, the monomial $x_ix_j$ appears with coefficient $-2$ in the expansion of $p$, and for each vertex $i \in V$, the monomial $x_i$ appears with coefficient $\deg(i)$ in the expansion of $p$. Therefore, for $\ell \in \{1, 2\}$, the sum of the absolute values of the coefficients of all monomials of degree $\ell$ is at most $2m = O(n^{1+\delta})$. 

Next, we extend and generalize the approach of \cite{AKK99} and show how to $(1-\eps)$-approximate the optimal cut, for any constant $\eps > 0$, in time $2^{O(n\ln n/(\Delta \eps^3))}$ (see Theorem~\ref{th:maxcut}). The running time is subexponential in $n$, if $G$ is $\delta$-almost sparse. 

\subsection{Outline and Main Ideas}
\label{s:cut_main}

Applying Proposition~\ref{pr:decomposition}, we can write the smooth polynomial $p(\vec{x})$ as
\begin{equation}\label{eq:cut_decomp}
p(\vec{x}) = \sum_{j \in V} x_j (\deg(j) - p_j(\vec{x}))\,,
\end{equation}
where $p_j(\vec{x}) = \sum_{i \in N(j)} x_i$ is a degree-$1$ $1$-smooth polynomial that indicates how many neighbors of vertex $j$ are in $S_1$ in the solution corresponding to $\vec{x}$. The key observation, due to \cite{AKK99}, is that if we have a good estimation $\rho_j$ of the value of each $p_j$ at the optimal solution $\vec{x}^\ast$, then approximate maximization of $p(\vec{x})$ can be reduced to the solution of the following Integer Linear Program:
\begin{alignat*}{3}
& &\max \sum_{j \in V} &y_j (\deg(j) - \rho_j) & & \tag{IP}\\
&\mathrm{s.t.}\quad &
(1-\e_1) \rho_j - \e_2 \Delta \leq \sum_{i \in N(j)} &y_i \leq (1+\e_1) \rho_j + \e_2 \Delta \quad & \forall &j \in V\\
& & &y_j \in \{0, 1\} &\forall & j \in V
\end{alignat*}
The constants $\e_1, \e_2 > 0$ and the estimations $\rho_j \geq 0$ are computed so that the optimal solution $\vec{x}^\ast$ is a feasible solution to (IP). We always assume wlog. that $0 \leq \sum_{i \in N(j)} y_i \leq \deg(j)$, i.e., we let the lhs of the $j$-th constraint be $\max\{ (1-\e_1) \rho_j - \e_2 \Delta, 0 \}$ and the rhs be $\min\{ (1+\e_1) \rho_j + \e_2 \Delta, \deg(j) \}$. Clearly, if $\vec{x}^\ast$ is a feasible solution to (IP), it remains a feasible solution after this modification. We let (LP) denote the Linear Programming relaxation of (IP), where each $y_j \in [0, 1]$.

The first important observation is that for any $\e_1, \e_2 > 0$, we can compute estimations $\rho_j$, by exhaustive sampling, so that $\vec{x}^\ast$ is a feasible solution to (IP) with high probability (see Lemma~\ref{l:cut_sampling}). The second important observation is that the objective value of any feasible solution $\vec{y}$ to (LP) is close to $p(\vec{y})$ (see Lemma~\ref{l:cut_approx}). Namely, for any feasible solution $\vec{y}$, $\sum_{j \in V} y_j (\deg(j) - \rho_j) \approx p(\vec{y})$.

Based on these observations, the approximation algorithm performs the following steps:
\begin{enumerate}
\item We guess a sequence of estimations $\rho_1, \ldots, \rho_n$, by exhaustive sampling, so that $\vec{x}^\ast$ is a feasible solution to the resulting (IP) (see Section~\ref{s:cut_sampling} for the details).
\item We formulate (IP) and find an optimal fractional solution $\vec{y}^\ast$ to (LP).
\item We obtain an integral solution $\vec{z}$ by applying randomized rounding to $\vec{y}^\ast$ (and the method of conditional probabilities, as in \cite{RT87,Rag88}).
\end{enumerate}
To see that this procedure indeed provides a good approximation to $p(\vec{x}^\ast)$, we observe that:
\begin{equation}\label{eq:cut_est}
%\[
  p(\vec{z}) \approx \sum_{j \in V} z_j (\deg(j) - \rho_j) \approx
                      \sum_{j \in V} y^\ast_j (\deg(j) - \rho_j) \geq
                      \sum_{j \in V} x^\ast_j (\deg(j) - \rho_j) \approx
                      p(\vec{x}^\ast)\,,
%\]
\end{equation}
The first approximation holds because $\vec{z}$ is an (almost) feasible solution to (IP) (see Lemma~\ref{l:cut_approx2}), the second approximation holds because the objective value of $\vec{z}$ is a good approximation to the objective value of $\vec{y}^\ast$, due to randomized rounding, the inequality holds because $\vec{x}^\ast$ is a feasible solution to (LP) and the final approximation holds because $\vec{x}^\ast$ is a feasible solution to (IP).

In Sections~\ref{s:cut_linearization}~and~\ref{s:cut_rounding}, we make the notion of approximation precise so that $p(\vec{z}) \geq (1-\eps) p(\vec{x}^\ast)$. As for the running time, it is dominated by the time required for the exhaustive-sampling step. Since we do not know $\vec{x}^\ast$, we need to run the steps (2) and (3) above for every sequence of estimations produced by exhaustive sampling. So, the outcome of the approximation scheme is the best of the integral solutions $\vec{z}$ produced in step (3) over all executions of the algorithm. In Section~\ref{s:cut_sampling}, we show that a sample of size $O(n \ln n/\Delta)$ suffices for the computation of estimations $\rho_j$ so that $\vec{x}^\ast$ is a feasible solution to (IP) with high probability. If $G$ is $\delta$-almost sparse, the sample size is sublinear in $n$ and the running time is subexponential in $n$.

\subsection{Obtaining Estimations $\rho_j$ by Exhaustive Sampling}
\label{s:cut_sampling}

To obtain good estimations $\rho_j$ of the values $p_j(\vec{x}^\ast) = \sum_{i \in N(j)} x_i^\ast$, i.e., of the number of $j$'s neighbors in $S_1$ in the optimal cut, we take a random sample $R \subseteq V$ of size $\Theta(n \ln n / \Delta)$ and try exhaustively all possible assignments of the vertices in $R$ to $S_0$ and $S_1$. If $\Delta = \Omega(n^\delta)$, we have $2^{O(n\ln n / \Delta)} = 2^{O(n^{1-\delta} \ln n)}$ different assignments. For each assignment, described by a $0/1$ vector $\vec{x}$ restricted to $R$, we compute an estimation $\rho_j = (n / |R|) \sum_{i \in N(j) \cut R} x_i$, for each vertex $j \in V$, and run the steps (2) and (3) of the algorithm above. Since we try all possible assignments, one of them agrees with $\vec{x}^\ast$ on all vertices of $R$. So, for this assignment, the estimations computed are $\rho_j = (n / |R|) \sum_{i \in N(j) \cut R} x^\ast_i$. 
The following shows that for these estimations, we have that $p_j(\vec{x}^\ast) \approx \rho_j$ with high probability.
%The following is an immediate corollary of Lemma~\ref{l:sampling} (for $\beta = 1$, $d = 2$ and $q = 0$, and with $\Delta$ instead of $n^\delta$), proven in Section~\ref{s:sampling}. If we focus on the estimations $\rho_j = (n / |R|) \sum_{i \in N(j) \cut R} x^\ast_i$, we have that $p_j(\vec{x}^\ast) \approx \rho_j$ with high probability.
\begin{lemma}\label{l:cut_sampling}
Let $\vec{x}$ be any binary vector. For all $\alpha_1, \alpha_2 > 0$, we let $\gamma = \Theta(1/(\alpha^2_1 \alpha_2))$ and let $R$ be a multiset of $r = \gamma n \ln n / \Delta$ vertices chosen uniformly at random with replacement from $V$. For any vertex $j$, if $\rho_j = (n / r) \sum_{i \in N(j) \cut R} x_i$ and $\hat{\rho}_j = \sum_{i \in N(j)} x_i$, with probability at least $1 - 2/n^{3}$,
\begin{equation}\label{eq:cut_sample_cor}
 (1-\alpha_1)\hat{\rho}_j - (1-\alpha_1)\alpha_2 \Delta \leq \rho_j \leq
 (1+\alpha_1)\hat{\rho}_j + (1+\alpha_1)\alpha_2 \Delta
\end{equation}
\end{lemma}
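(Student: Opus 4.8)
The plan is to recognize $\rho_j$ as a rescaled sum of i.i.d.\ Bernoulli trials whose expectation is exactly $\hat\rho_j$, and then apply standard Chernoff/Bernstein tail bounds, letting the additive term $\alpha_2\Delta$ cover the regime in which $\hat\rho_j$ is too small for a purely multiplicative estimate to make sense.

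\textbf{Setup.} Write $R=\{v_1,\dots,v_r\}$, where the $v_\ell$ are independent uniformly random vertices of $V$. For each $\ell$ put $Y_\ell = x_{v_\ell}\cdot\mathbf{1}[v_\ell\in N(j)]$ and $S=\sum_{\ell=1}^r Y_\ell$, so that $\sum_{i\in N(j)\cap R}x_i = S$ and $\rho_j = (n/r)\,S$. The $Y_\ell$ are i.i.d.\ Bernoulli with $\Exp[Y_\ell]=\tfrac1n\sum_{i\in N(j)}x_i=\hat\rho_j/n$, hence $\mu:=\Exp[S]=r\hat\rho_j/n$ and $\Exp[\rho_j]=\hat\rho_j$. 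If $\hat\rho_j=0$ then $S\equiv 0$ and \eqref{eq:cut_sample_cor} is trivial, so assume $\hat\rho_j\ge 1$ (hence $\mu>0$); likewise we may assume $\alpha_1\le 1$, since larger $\alpha_1$ only weakens the claim.

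\textbf{Reduction to two tails.} Set $L:=(r/n)\alpha_2\Delta=\gamma\alpha_2\ln n$. Multiplying through by $r/n$, the right inequality of \eqref{eq:cut_sample_cor} fails iff $S>(1+\alpha_1)(\mu+L)$, and the left inequality fails iff $S<(1-\alpha_1)(\mu-L)$; it suffices to bound each of these probabilities by $n^{-3}$ and union-bound. For the upper tail, write $(1+\alpha_1)(\mu+L)=(1+\lambda)\mu$ with $\lambda=\alpha_1+(1+\alpha_1)L/\mu\ge\max\{\alpha_1,\,L/\mu\}$; the Chernoff bound $\Prob[S>(1+\lambda)\mu]\le\exp(-\tfrac13\min\{\lambda^2,\lambda\}\mu)$ combined with $\min\{\lambda^2,\lambda\}\mu\ge\alpha_1(\lambda\mu)\ge\alpha_1 L$ (using $\alpha_1\le1$) gives $\exp(-\tfrac13\alpha_1\gamma\alpha_2\ln n)\le n^{-3}$ once $\gamma=\Omega(1/(\alpha_1\alpha_2))$. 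For the lower tail, if $\mu\le L$ the threshold $(1-\alpha_1)(\mu-L)$ is nonpositive and the event is impossible; if $\mu>L$, write $(1-\alpha_1)(\mu-L)=(1-\lambda')\mu$ with $\lambda'=\alpha_1+(1-\alpha_1)L/\mu\in(0,1]$, and the Chernoff lower-tail bound gives $\Prob[S<(1-\lambda')\mu]\le\exp(-\tfrac12\lambda'^2\mu)\le\exp(-\tfrac12\alpha_1^2\mu)\le\exp(-\tfrac12\alpha_1^2 L)\le n^{-3}$ once $\gamma=\Omega(1/(\alpha_1^2\alpha_2))$.

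Choosing $\gamma=\Theta(1/(\alpha_1^2\alpha_2))$ with a large enough hidden constant thus makes both tail probabilities at most $n^{-3}$, and a union bound gives that \eqref{eq:cut_sample_cor} holds with probability at least $1-2/n^3$. There is no genuinely hard step here; the only point that requires care is the small-mean regime $\hat\rho_j\lesssim\alpha_2\Delta$, where a purely multiplicative guarantee from a sparse sample is unattainable — it is precisely the additive slack $L=\gamma\alpha_2\ln n$ that renders the lower-tail event vacuous there, while keeping the required sample size $r=\gamma n\ln n/\Delta$ (and hence the eventual running time) from blowing up.
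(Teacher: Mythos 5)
Your proof is correct and follows essentially the same approach as the paper's (which derives this lemma as a special case of the more general Lemma~\ref{l:sampling}): recast $\rho_j$ as a rescaled sum $S$ of $r$ i.i.d.\ $[0,1]$ random variables with mean $\mu = r\hat\rho_j/n$, apply Chernoff bounds, and use the additive slack $\alpha_2\Delta$ (your $L$) to make the small-mean regime harmless. The only technical variation is that you absorb the additive slack directly into the deviation parameter $\lambda$ and invoke the Bernstein-form tail $\exp(-\tfrac13\min\{\lambda,\lambda^2\}\mu)$, whereas the paper's proof handles small $\hat\rho_j$ via a monotone coupling (raising coefficients so the mean reaches the threshold $\alpha_2\Delta$ and then applying the plain multiplicative Chernoff bound) --- both choices yield the same requirement $\gamma = \Theta(1/(\alpha_1^2\alpha_2))$.
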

\begin{proofsketch} If $\hat{\rho}_j = \Omega(\Delta)$, the neighbors of $j$
are well-represented in the random sample $R$ whp., because $|R| = \Theta(n\ln
n/\Delta)$. Therefore, $|\hat{\rho}_j - \rho_j| \leq \alpha_1\hat{\rho}_j$
whp., by Chernoff bounds. If $\hat{\rho}_j = o(\Delta)$, the lower bound in
(\ref{eq:cut_sample_cor}) becomes trivial, since it is non-positive, while
$\rho_j \geq 0$. As for the upper bound, we increase some $x_i$ to $x'_i \in
[0, 1]$, so that $\hat{\rho}'_j = \alpha_2 \Delta$. Then, $\rho'_j \leq
(1+\alpha_1)\hat{\rho}'_j = (1+\alpha_1)\alpha_2 \Delta$ whp., by the same
Chernoff bound as above. Now the upper bound of (\ref{eq:cut_sample_cor})
follows from $\rho_j \leq \rho'_j$, which holds for any instantiation of the
random sample $R$. The formal proof follows from Lemma~\ref{l:sampling}, with
$\beta = 1$, $d = 2$ and $q = 0$, and with $\Delta$ instead of $n^\delta$.
\qed\end{proofsketch}
We note that $\rho_j \geq 0$ and always assume that $\rho_j \leq \deg(j)$, since if $\rho_j$ satisfies (\ref{eq:cut_sample_cor}), $\min\{ \rho_j, \deg(j) \}$ also satisfies (\ref{eq:cut_sample_cor}). For all $\e_1, \e_2 > 0$, setting $\alpha_1 = \frac{\e_1}{1+\e_1}$ and $\alpha_2 = \e_2$ in Lemma~\ref{l:cut_sampling}, and taking the union bound over all vertices, we obtain that for $\gamma = \Theta(1/(\e^2_1 \e_2))$, with probability at least $1 - 2/n^2$, the following holds for all vertices $j \in V$:
\begin{equation}\label{eq:cut_sample}
 (1-\e_1)\rho_j - \e_2 \Delta \leq \hat{\rho}_j \leq
 (1+\e_1)\rho_j + \e_2 \Delta
\end{equation}
Therefore, with probability at least $1-2/n^2$, the optimal cut $\vec{x}^\ast$ is a feasible solution to (IP) with the estimations $\rho_j$ obtained by restricting $\vec{x}^\ast$ to the vertices in $R$. %So, from now on, we assume that $\vec{x}^\ast$ is a feasible solution to (IP) and to (LP).

\subsection{The Cut Value of Feasible Solutions}
\label{s:cut_linearization}

We next show that the objective value of any feasible solution $\vec{y}$ to (LP) is close to $p(\vec{y})$. Therefore, assuming that $\vec{x}^\ast$ is feasible, any good approximation to (IP) is a good approximation to the optimal cut. %The proof of the following can be found in the Appendix, Section~\ref{s:app:cut_approx}.
\begin{lemma}\label{l:cut_approx}
Let $\rho_1, \ldots, \rho_n$ be non-negative numbers and $\vec{y}$ be any feasible solution to (LP). Then,
\begin{equation}\label{eq:cut_approx}
 p(\vec{y}) \in \sum_{j \in V} y_j (\deg(j) - \rho_j) \pm 2(\e_1 + \e_2) m
\end{equation}
\end{lemma}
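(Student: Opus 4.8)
The plan is to reduce everything to the exact decomposition (\ref{eq:cut_decomp}). Since that identity is a polynomial identity, it holds not only for binary vectors but for every real vector, and in particular for the (possibly fractional) feasible solution $\vec{y}$ at hand. Thus $p(\vec{y}) = \sum_{j \in V} y_j(\deg(j) - p_j(\vec{y}))$ with $p_j(\vec{y}) = \sum_{i \in N(j)} y_i$. Subtracting the objective value $\sum_{j \in V} y_j(\deg(j) - \rho_j)$ from this, the degree-one terms $\sum_j y_j \deg(j)$ cancel, leaving the single quantity $\sum_{j \in V} y_j(\rho_j - p_j(\vec{y}))$; so it suffices to show that its absolute value is at most $2(\e_1 + \e_2)m$.

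The second step is to invoke feasibility of $\vec{y}$ for (LP). The $j$-th constraint states that $p_j(\vec{y}) = \sum_{i \in N(j)} y_i$ lies between $\max\{(1-\e_1)\rho_j - \e_2\Delta,\,0\}$ and $\min\{(1+\e_1)\rho_j + \e_2\Delta,\,\deg(j)\}$; truncating these bounds to $[0,\deg(j)]$ only shrinks the interval, so in any case $(1-\e_1)\rho_j - \e_2\Delta \le p_j(\vec{y}) \le (1+\e_1)\rho_j + \e_2\Delta$, i.e. $|\rho_j - p_j(\vec{y})| \le \e_1 \rho_j + \e_2\Delta$. Using $0 \le y_j \le 1$ and summing over $j$,
\[ \Bigl|\sum_{j \in V} y_j(\rho_j - p_j(\vec{y}))\Bigr| \le \sum_{j \in V} y_j\bigl(\e_1\rho_j + \e_2\Delta\bigr) \le \e_1 \sum_{j \in V}\rho_j + \e_2\Delta\, n. \]

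To finish I would use the two normalizations already fixed just before the lemma: we always take $\rho_j \le \deg(j)$, so $\sum_{j \in V}\rho_j \le \sum_{j \in V}\deg(j) = 2m$; and $\Delta n = 2m$ by definition of the average degree. Plugging these in bounds the right-hand side by $2\e_1 m + 2\e_2 m = 2(\e_1 + \e_2)m$, which is exactly (\ref{eq:cut_approx}).

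There is really no hard step here — this is the ``linearization error'' estimate, and it follows by a direct term-by-term comparison. The only points worth stating explicitly are that the decomposition identity is valid at fractional points (so $p(\vec{y})$ may be split the same way as $p(\vec{x}^\ast)$), and that the per-constraint slack $\e_1\rho_j + \e_2\Delta$ aggregates to $O((\e_1+\e_2)m)$ precisely because both $\sum_j \rho_j$ and $\sum_j \deg(j)$ are $\Theta(m)$; no concentration or rounding argument is needed at this stage, those being deferred to Lemma~\ref{l:cut_sampling} and Lemma~\ref{l:cut_approx2} respectively.
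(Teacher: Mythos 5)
Your proposal is correct and follows essentially the same line as the paper's proof: decompose $p(\vec{y})$ via (\ref{eq:cut_decomp}), apply the LP feasibility constraints termwise to replace each $\sum_{i\in N(j)}y_i$ by $\rho_j$ up to an error of $\e_1\rho_j+\e_2\Delta$, and then aggregate using $\rho_j\le\deg(j)$ (so $\sum_j\rho_j\le 2m$) and $\Delta n=2m$. The only cosmetic difference is that you phrase the argument by subtracting the two expressions and bounding an absolute value, whereas the paper carries the $\pm$-interval through the chain of inclusions.
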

\begin{proof}
Using (\ref{eq:cut_decomp}) and the formulation of (LP), we obtain that:
\begin{align*}
 p(\vec{y}) = \sum_{j \in V} y_j \left(\deg(j) - \sum_{i \in N(j)} y_i\right) & \in
 \sum_{j \in V} y_j \left(\deg(j) - ((1\mp \e_1) \rho_j \mp \e_2 \Delta) \right)  \\
 &= \sum_{j \in V} y_j (\deg(j) - \rho_j) \pm \e_1 \sum_{j \in V} y_j \rho_j
    \pm \e_2 \Delta \sum_{j \in V} y_j \\
 &\in \sum_{j \in V} y_j (\deg(j) - \rho_j) \pm 2(\e_1 + \e_2) m
\end{align*}
The first inclusion holds because $\vec{y}$ is feasible for (LP) and thus, $\sum_{i \in N(j)} y_i \in (1\pm \e_1)\rho_j \pm \e_2\Delta$, for all $j$. The third inclusion holds because
\[ \sum_{j \in V} y_j \rho_j \leq \sum_{j \in V} \rho_j
   \leq \sum_{j \in V} \deg(j) = 2m\,,\]
since each $\rho_j$ is at most $\deg(j)$, and because $\Delta \sum_{j \in V} y_j \leq \Delta n = 2m$.
\qed\end{proof}

\subsection{Randomized Rounding of the Fractional Optimum}
\label{s:cut_rounding}

As a last step, we show how to round the fractional optimum $\vec{y}^\ast = (y^\ast_1, \ldots, y^\ast_n)$ of (LP) to an integral solution $\vec{z} = (z_1, \ldots, z_n)$ that almost satisfies the constraints of (IP).

To this end, we use randomized rounding, as in \cite{RT87}. In particular, we set independently each $z_j$ to $1$, with probability $y_j^\ast$, and to $0$, with probability $1-y_j^\ast$. By Chernoff bounds%
\footnote{\label{foot:chernoff}We use the following standard Chernoff bound (see e.g., \cite[Theorem~1.1]{DP09}): Let $Y_1, \ldots, Y_k$ independent random variables in $[0, 1]$ and let $Y = \sum_{j=1}^k Y_j$. Then for all $t > 0$, $\Prob[|Y - \Exp[Y]| > t] \leq 2\exp(-2t^2/k)$.},
we obtain that with probability at least $1 - 2/n^{8}$, for each vertex $j$,
\begin{equation}\label{eq:deviation}
 (1-\e_1)\rho_j - \e_2\Delta - 2\sqrt{\deg(j)\ln(n)} \leq
 \sum_{i \in N(j)} z_i  \leq
 (1+\e_1)\rho_j + \e_2\Delta + 2\sqrt{\deg(j)\ln(n)}
\end{equation}
Specifically, the inequality above follows from the Chernoff bound in footnote~\ref{foot:chernoff}, with $k = \deg(j)$ and $t = 2\sqrt{\deg(j)\ln(n)}$, since $\Exp[\sum_{i \in N(j)} z_j] = \sum_{i \in N(j)} y^\ast_j \in (1\pm\e_1)\rho_j \pm \e_2\Delta$. By the union bound, (\ref{eq:deviation}) is satisfied with probability at least $1 - 2/n^7$ for all vertices $j$.

By linearity of expectation, $\Exp[ \sum_{j \in V} z_j (\deg(j) - \rho_j) ] = \sum_{j \in V} y^\ast_j (\deg(j) - \rho_j)$. Moreover, since the probability that $\vec{z}$ does not satisfy (\ref{eq:deviation}) for some vertex $j$ is at most $2/n^7$ and since the objective value of (IP) is at most $n^2$, the expected value of a rounded solution $\vec{z}$ that satisfies (\ref{eq:deviation}) for all vertices $j$ is least $\sum_{j \in V} y^\ast_j (\deg(j) - \rho_j) - 1$ (assuming that $n \geq 2$). Using the method of conditional expectations, as in \cite{Rag88}, we can find in (deterministic) polynomial time an integral solution $\vec{z}$ that satisfies (\ref{eq:deviation}) for all vertices $j$ and has $\sum_{j \in V} z_j (\deg(j) - \rho_j) \geq \sum_{j \in V} y^\ast_j (\deg(j) - \rho_j) - 1$. Next, we sometimes abuse the notation and refer to such an integral solution $\vec{z}$ (computed deterministically) as the integral solution obtained from $\vec{y}^\ast$ by randomized rounding.

The following is similar to Lemma~\ref{l:cut_approx} and shows that the objective value $p(\vec{z})$ of the rounded solution $\vec{z}$ is close to the optimal value of (LP). %The proof can be found in the Appendix, Section~\ref{s:app:cut_approx2}.
\begin{lemma}\label{l:cut_approx2}
Let $\vec{y}^\ast$ be the optimal solution of (LP) and let $\vec{z}$ be the integral solution obtained from $\vec{y}^\ast$ by randomized rounding (and the method of conditional expectations). Then,
\begin{equation}\label{eq:cut_approx2}
 p(\vec{z}) \in \sum_{j \in V} y^\ast_j (\deg(j) - \rho_j) \pm 3(\e_1 + \e_2) m
\end{equation}
\end{lemma}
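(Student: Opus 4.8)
The plan is to adapt the proof of Lemma~\ref{l:cut_approx}, with one extra twist: the rounded vector $\vec z$ is no longer a feasible solution to (IP), but only satisfies the relaxed inequalities~(\ref{eq:deviation}), which carry an additional slack term $2\sqrt{\deg(j)\ln n}$ on each side. I would therefore handle the two directions of the claimed inclusion separately: the lower bound $p(\vec z)\ge\sum_j y^\ast_j(\deg(j)-\rho_j)-3(\e_1+\e_2)m$ by expanding $p(\vec z)$ via~(\ref{eq:deviation}) and then invoking the guarantee produced by the method of conditional expectations, and the upper bound essentially for free by passing through the optimum $p(\vec x^\ast)$ of the cut polynomial.

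For the lower bound, start from $p(\vec z)=\sum_j z_j\bigl(\deg(j)-\sum_{i\in N(j)}z_i\bigr)$ and use the right-hand side of~(\ref{eq:deviation}) to replace $\sum_{i\in N(j)}z_i$ by $(1+\e_1)\rho_j+\e_2\Delta+2\sqrt{\deg(j)\ln n}$; since each $z_j\ge 0$, this gives
\[ p(\vec z)\ \ge\ \sum_j z_j(\deg(j)-\rho_j)\ -\ \e_1\sum_j z_j\rho_j\ -\ \e_2\Delta\sum_j z_j\ -\ 2\sum_j z_j\sqrt{\deg(j)\ln n}. \]
Exactly as in Lemma~\ref{l:cut_approx}, $\sum_j z_j\rho_j\le\sum_j\deg(j)=2m$ and $\Delta\sum_j z_j\le\Delta n=2m$, so the first two error terms cost at most $2(\e_1+\e_2)m$. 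The genuinely new term is $2\sum_j z_j\sqrt{\deg(j)\ln n}$, which by Cauchy--Schwarz is at most $2\sqrt{\ln n}\,\sum_j\sqrt{\deg(j)}\le 2\sqrt{2mn\ln n}$. This is where almost-sparseness is used: since $m=\Omega(n^{1+\delta})$ we have $mn=O(m^2/n^{\delta})$, hence $2\sqrt{2mn\ln n}=O\bigl(m\sqrt{\ln n/n^{\delta}}\bigr)=o(m)$, and for $n$ above a constant threshold (below it $p$ has $O(1)$ variables and is solved by brute force) we get $1+2\sqrt{2mn\ln n}\le(\e_1+\e_2)m$. Combining this with the conditional-expectations guarantee $\sum_j z_j(\deg(j)-\rho_j)\ge\sum_j y^\ast_j(\deg(j)-\rho_j)-1$ stated just before the lemma yields $p(\vec z)\ge\sum_j y^\ast_j(\deg(j)-\rho_j)-3(\e_1+\e_2)m$.

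For the upper bound I would not try to compare $\vec z$ and $\vec y^\ast$ directly (note $\vec z$ need not be feasible for (LP), so its objective could in principle exceed the LP optimum). Instead, since $\vec x^\ast$ maximizes $p$ over $\{0,1\}^n$ we trivially have $p(\vec z)\le p(\vec x^\ast)$; conditioning on the (whp.) sampling event of Section~\ref{s:cut_sampling}, $\vec x^\ast$ is feasible for (LP), so Lemma~\ref{l:cut_approx} applied to $\vec x^\ast$ gives $p(\vec x^\ast)\le\sum_j x^\ast_j(\deg(j)-\rho_j)+2(\e_1+\e_2)m$, and optimality of $\vec y^\ast$ for (LP) together with feasibility of $\vec x^\ast$ gives $\sum_j x^\ast_j(\deg(j)-\rho_j)\le\sum_j y^\ast_j(\deg(j)-\rho_j)$. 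Chaining these three facts gives $p(\vec z)\le\sum_j y^\ast_j(\deg(j)-\rho_j)+2(\e_1+\e_2)m$, which is a little stronger than needed, and completes the proof of~(\ref{eq:cut_approx2}).

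The only real obstacle here is the rounding slack $2\sqrt{\deg(j)\ln n}$. In the dense regime of \cite{AKK99} additive errors of this order are swept away because $\mathrm{OPT}=\Omega(n^2)$; in our setting $\mathrm{OPT}$ can be as small as $\Theta(n^{1+\delta})$, so one must verify that $\sum_j\sqrt{\deg(j)\ln n}=O(\sqrt{mn\ln n})$ is genuinely $o(m)$ — and it is precisely the $\delta$-almost-sparseness hypothesis $m=\Omega(n^{1+\delta})$ (equivalently $\Delta=\Omega(n^\delta)$) that forces $\sqrt{mn\ln n}/m\to 0$. Everything else is a line-by-line reprise of Lemma~\ref{l:cut_approx}, together with routine bookkeeping of the additive constants so that $2(\e_1+\e_2)m$, the term $2\sqrt{2mn\ln n}$, and the conditional-expectations loss of $1$ collapse into the single factor $3(\e_1+\e_2)m$.
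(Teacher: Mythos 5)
Your lower-bound argument is essentially line-for-line the paper's: expand $p(\vec z)$ via the decomposition, substitute the one-sided version of~(\ref{eq:deviation}), bound $\e_1\sum_j z_j\rho_j$ and $\e_2\Delta\sum_j z_j$ each by $2\e m$, absorb the new slack $2\sum_j z_j\sqrt{\deg(j)\ln n}=O(\sqrt{mn\ln n})=o(m)$ using $m=\Omega(n^{1+\delta})$ (the paper cites Jensen where you cite Cauchy--Schwarz, but after dropping $z_j\le 1$ these give the identical bound $\sqrt{8mn\ln n}$), and finish with the conditional-expectations guarantee $\sum_j z_j(\deg(j)-\rho_j)\ge\sum_j y^\ast_j(\deg(j)-\rho_j)-1$. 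Where you diverge is the upper bound: the paper tries to prove it with the same chain, passing from $\sum_j z_j(\deg(j)-\rho_j)$ to $\sum_j y^\ast_j(\deg(j)-\rho_j)$ by invoking only the conditional-expectations guarantee, which is one-sided and does not actually control $\sum_j z_j(\deg(j)-\rho_j)$ from above --- strictly read, that last step in the paper is under-justified. Your detour through $p(\vec z)\le p(\vec x^\ast)$, then Lemma~\ref{l:cut_approx} applied to the (whp.~feasible) $\vec x^\ast$, then LP optimality of $\vec y^\ast$, gives a clean and correct upper bound of $+2(\e_1+\e_2)m$, at the small cost of needing the sampling event (feasibility of $\vec x^\ast$ for (LP)) as an explicit hypothesis --- a hypothesis the lemma statement does not mention, though it is always in force when the lemma is applied in Section~\ref{s:together}, and only the lower bound is actually used there. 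So your proof is correct, matches the paper on the half that matters, and patches a minor looseness on the other half.
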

\begin{proof}
Using (\ref{eq:deviation}) and an argument similar to that in the proof of Lemma~\ref{l:cut_approx}, we obtain that:
\begin{align*}
 p(\vec{z}) & = \sum_{j \in V} z_j \left(\deg(j) - \sum_{i \in N(j)} z_i\right) \\
 & \in \sum_{j \in V} z_j \left(\deg(j) - \left((1\mp \e_1) \rho_j \mp \e_2 \Delta \mp 2\sqrt{\deg(j)\ln(n)}\right) \right)  \\
 &= \sum_{j \in V} z_j (\deg(j) - \rho_j) \pm \e_1 \sum_{j \in V} z_j \rho_j
    \pm \e_2 \Delta \sum_{j \in V} z_j \pm 2\sum_{j \in V} z_j \sqrt{\deg(j)\ln(n)}\\
 &\in \sum_{j \in V} z_j (\deg(j) - \rho_j) \pm (3\e_1 + 2\e_2) m \\
 &\in \sum_{j \in V} y^\ast_j (\deg(j) - \rho_j) \pm 3(\e_1 + \e_2) m \\
\end{align*}
The first inclusion holds because $\vec{z}$ satisfies (\ref{eq:deviation}) for all $j \in V$. For the third inclusion, we use that $\sum_{j \in V} z_j \rho_j \leq \sum_{j \in V} \deg(j) = 2m$, that $\Delta \sum_{i \in V} z_i \leq \Delta n = 2m$ and that by Jensen's inequality,
\[
  2 \sum_{j \in V} z_j \sqrt{\deg(j) \ln n} \leq
  \sum_{j \in V} \sqrt{4\,\deg(j) \ln n} \leq
  \sqrt{8 m n \ln n} \leq \e_1 m\,,
\]
assuming that $n$ and $m = \Omega(n^{1+\delta})$ are sufficiently large. For the last inclusion, we recall that $\sum_{j \in V} z_j (\deg(j) - \rho_j) \geq \sum_{j \in V} y^\ast_j (\deg(j) - \rho_j) - 1$ and assume that $m$ is sufficiently large.
\qed\end{proof}

\subsection{Putting Everything Together}
\label{s:together}

Therefore, for any $\eps > 0$, if $G$ is $\delta$-almost sparse and $\Delta = n^{\delta}$, the algorithm described in Section~\ref{s:cut_main}, with sample size $\Theta(n \ln n / (\eps^3 \Delta))$, computes estimations $\rho_j$ such that the optimal cut $\vec{x}^\ast$ is a feasible solution to (IP) whp. Hence, by the analysis above, the algorithm approximates the value of the optimal cut $p(\vec{x}^\ast)$ within an additive term of $O(\eps m)$. Specifically, setting $\e_1 = \e_2 = \eps/16$, the value of the cut $\vec{z}$ produced by the algorithm satisfies the following with probability at least $1-2/n^2$\,:
\[ p(\vec{z}) \geq \sum_{j \in V} y_j^\ast (\deg(j) - \rho_j) - 3 \eps m/8
              \geq \sum_{j \in V} x_j^\ast (\deg(j) - \rho_j) - 3 \eps m/8
              \geq p(\vec{x}^\ast) - \eps m / 2 \geq (1-\eps) p(\vec{x}^\ast)
\]
The first inequality follows from Lemma~\ref{l:cut_approx2}, the second inequality holds because $\vec{y}^\ast$ is the optimal solution to (LP) and $\vec{x}^\ast$ is feasible for (LP), the third inequality follows from Lemma~\ref{l:cut_approx} and the fourth inequality holds because the optimal cut has at least $m/2$ edges. 
\begin{theorem}\label{th:maxcut}
Let $G(V, E)$ be a $\delta$-almost sparse graph with $n$ vertices. Then, for any $\eps > 0$, we can compute, in time $2^{O(n^{1-\delta} \ln n/\eps^3)}$ and with probability at least $1-2/n^2$, a cut $\vec{z}$ of $G$ with value $p(\vec{z}) \geq (1-\eps)p(\vec{x}^\ast)$, where $\vec{x}^\ast$ is the optimal cut.
\end{theorem}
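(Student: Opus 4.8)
The plan is to assemble Theorem~\ref{th:maxcut} directly from the four lemmas proved in Sections~\ref{s:cut_sampling}--\ref{s:cut_rounding}, following the chain of approximations sketched in~(\ref{eq:cut_est}). First I would fix the parameters: set $\e_1=\e_2=\eps/16$ and take the sample $R\subseteq V$ to be a multiset of size $r=\Theta(n\ln n/(\eps^3\Delta))$, which is exactly the $\gamma n\ln n/\Delta$ of Lemma~\ref{l:cut_sampling} with $\gamma=\Theta(1/(\e_1^2\e_2))=\Theta(1/\eps^3)$. Since $\Delta=\Omega(n^\delta)$, this gives $r=O(n^{1-\delta}\ln n/\eps^3)$, so there are $2^{O(n^{1-\delta}\ln n/\eps^3)}$ assignments of $R$ to $S_0,S_1$; the algorithm enumerates all of them, and for each computes the estimates $\rho_j=(n/r)\sum_{i\in N(j)\cap R}x_i$, builds (IP)/(LP), solves (LP) in polynomial time, and applies the derandomized randomized rounding of Section~\ref{s:cut_rounding}. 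The running time is dominated by this enumeration, giving the claimed bound; all other per-iteration work is polynomial.

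Next I would establish correctness. By Lemma~\ref{l:cut_sampling} and the union bound over all $n$ vertices, the iteration whose guessed assignment agrees with $\vec{x}^\ast$ on $R$ yields, with probability at least $1-2/n^2$, estimates $\rho_j$ for which~(\ref{eq:cut_sample}) holds; hence $\vec{x}^\ast$ is feasible for the corresponding (IP), and in particular for (LP). Condition on this event. Let $\vec{y}^\ast$ be the LP optimum and $\vec{z}$ the rounded integral solution for that iteration. Then the computation already displayed in Section~\ref{s:together} applies verbatim: Lemma~\ref{l:cut_approx2} gives $p(\vec{z})\ge\sum_j y_j^\ast(\deg(j)-\rho_j)-3(\e_1+\e_2)m=\sum_j y_j^\ast(\deg(j)-\rho_j)-3\eps m/8$; LP-optimality of $\vec{y}^\ast$ together with feasibility of $\vec{x}^\ast$ for (LP) gives $\sum_j y_j^\ast(\deg(j)-\rho_j)\ge\sum_j x_j^\ast(\deg(j)-\rho_j)$; Lemma~\ref{l:cut_approx} applied to the feasible solution $\vec{x}^\ast$ gives $\sum_j x_j^\ast(\deg(j)-\rho_j)\ge p(\vec{x}^\ast)-2(\e_1+\e_2)m=p(\vec{x}^\ast)-\eps m/4$. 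Chaining these, $p(\vec{z})\ge p(\vec{x}^\ast)-5\eps m/8\ge p(\vec{x}^\ast)-\eps m/2$, and since the optimal cut satisfies $p(\vec{x}^\ast)\ge m/2$, we get $p(\vec{z})\ge(1-\eps)p(\vec{x}^\ast)$.

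Finally I would note that the algorithm returns the best $\vec{z}$ over all iterations, so its output value is at least the value obtained in the ``good'' iteration; hence $p(\text{output})\ge(1-\eps)p(\vec{x}^\ast)$ with probability at least $1-2/n^2$. This completes the proof.

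The steps are all routine given the lemmas; the only point requiring a little care is making the parameter bookkeeping consistent across the three sources of error — coefficient estimation ($2(\e_1+\e_2)m$ in Lemma~\ref{l:cut_approx}), LP-optimality (no error), and rounding ($3(\e_1+\e_2)m$ in Lemma~\ref{l:cut_approx2}, which already absorbs the $O(\sqrt{mn\ln n})$ Chernoff deviation into an $\e_1 m$ term using $m=\Omega(n^{1+\delta})$) — and checking that their sum stays below $\eps m/2$ so that the final factor-$(1-\eps)$ conversion goes through via $p(\vec{x}^\ast)\ge m/2$. I would also remark that the $m=\Theta(n^{1+\delta})$ ($\delta$-boundedness) hypothesis is used only inside Lemma~\ref{l:cut_approx2} to bound the rounding deviation; the sampling and running-time analysis need only $\Delta=\Omega(n^\delta)$, i.e.\ $m=\Omega(n^{1+\delta})$.
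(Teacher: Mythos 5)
Your proposal matches the paper's proof step for step: same sample size $r=\Theta(n\ln n/(\eps^3\Delta))$, same exhaustive enumeration over $R$, same chain $p(\vec{z}) \gtrsim \sum_j y_j^\ast(\deg(j)-\rho_j) \ge \sum_j x_j^\ast(\deg(j)-\rho_j) \gtrsim p(\vec{x}^\ast)$ via Lemmas~\ref{l:cut_sampling}, \ref{l:cut_approx} and~\ref{l:cut_approx2}. One small bookkeeping slip, which you inherit from the paper's own displayed chain: with $\e_1=\e_2=\eps/16$ the combined additive loss from Lemmas~\ref{l:cut_approx} and~\ref{l:cut_approx2} is $\eps m/4 + 3\eps m/8 = 5\eps m/8$, and the inequality ``$p(\vec{x}^\ast) - 5\eps m/8 \ge p(\vec{x}^\ast) - \eps m/2$'' you write is false; this is cosmetic, since taking $\e_1=\e_2=\eps/20$ (or any smaller constant) makes the total loss at most $\eps m/2 \le \eps\, p(\vec{x}^\ast)$ and the $(1-\eps)$ bound goes through.
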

\section{Approximate Maximization of Smooth Polynomials}
\label{s:pip}

Generalizing the ideas applied to \MC, we arrive at the main algorithmic result
of the paper: an algorithm to approximately optimize $\beta$-smooth
$\delta$-bounded polynomials $p(\vec{x})$ of degree $d$ over all binary vectors
$\vec{x} \in \{0, 1\}^n$.  The intuition and the main ideas are quite similar
to those in Section~\ref{s:maxcut}, but the details are significantly more
involved because we are forced to recursively decompose degree $d$ polynomials
to eventually obtain a linear program.  
%In the following Section~\ref{s:app:pip}, 
In what follows, we take care of the technical details. %and prove the
%All technical details are given in the following theorem.

%\subsubsection{Approximate Maximization of Polynomials: The Proof of Theorem~\ref{th:pip_scheme}}
%\label{s:app:pip}

Next, we significantly generalize the ideas applied to \MC\ so that we approximately optimize $\beta$-smooth $\delta$-bounded polynomials $p(\vec{x})$ of degree $d$ over all binary vectors $\vec{x} \in \{0, 1\}^n$. The structure of this section deliberately parallels the structure of Section~\ref{s:maxcut}, so that the application to \MC\ can always serve as a reference for the intuition behind the generalization.

As in \cite{AKK99} (and as explained in Section~\ref{s:prelim}), we exploit the fact that any $n$-variate degree-$d$ $\beta$-smooth polynomial $p(\vec{x})$ can be decomposed into $n$ degree-$(d-1)$ $\beta$-smooth polynomials $p_j(\vec{x})$ such that $p(\vec{x}) = c + \sum_{j \in N} x_j p_j(\vec{x})$ (Proposition~\ref{pr:decomposition}).
For smooth polynomials of degree $d \geq 3$, we apply Proposition~\ref{pr:decomposition} recursively until we end up with smooth polynomials of degree $1$.
Specifically, using Proposition~\ref{pr:decomposition}, we further decompose each degree-$(d-1)$ $\beta$-smooth polynomial $p_{i_1}(\vec{x})$ into $n$ degree-$(d-2)$ $\beta$-smooth polynomials $p_{i_1 j}(\vec{x})$ such that
$p_{i_1}(\vec{x}) = c_{i_1} + \sum_{j \in N} x_j p_{i_1 j}(\vec{x})$, etc.
At the basis of the recursion, at depth $d-1$, we have $\beta$-smooth polynomials $p_{i_1\ldots i_{d-1}}(\vec{x})$ of degree $1$, one for each $(d-1)$-tuple of indices $(i_1, \ldots, i_{d-1}) \in N^{d-1}$. These polynomials are written as
\[ p_{i_1\ldots i_{d-1}}(\vec{x}) = c_{i_1\ldots i_{d-1}} +
                                    \sum_{j \in N} x_j c_{i_1\ldots i_{d-1} j}\,,
\]
where $c_{i_1\ldots i_{d-1} j}$ are constants (these are the coefficients of the corresponding degree-$d$ monomials in the expansion of $p(\vec{x})$). Due to $\beta$-smoothness, $|c_{i_1\ldots i_{d-1} j}| \leq \beta$ and $|c_{i_1\ldots i_{d-1}}| \leq \beta n$. Inductively, $\beta$-smoothness implies that each polynomial $p_{i_1\ldots i_{d-\ell}}(\vec{x})$ of degree $\ell \geq 1$ in this decomposition%
\footnote{This decomposition can be performed in a unique way if we insist that $i_1 < i_2 < \cdots < i_{d-1}$, but this is not important for our analysis.}
has $|p_{i_1\ldots i_{d-\ell}}(\vec{x})| \leq (\ell+1) \beta n^{\ell}$ for all binary vectors $\vec{x} \in \{0, 1\}^n$. Such a decomposition of $p(\vec{x})$ in $\beta$-smooth polynomials of degree $d-1, d-2, \ldots, 1$ can be computed recursively in time $O(n^d)$.

\subsection{Outline and General Approach}
\label{s:pip_outline}

As in Section~\ref{s:maxcut} (and as in \cite{AKK99}), we observe that if we have good estimations $\rho_{i_1\ldots i_{d-\ell}}$ of the values of each degree-$\ell$ polynomial $p_{i_1\ldots i_{d-\ell}}(\vec{x})$ at the optimal solution $\vec{x}^\ast$, for each level $\ell = 1, \ldots, d-1$ of the decomposition, then approximate maximization of $p(\vec{x})$ can be reduced to the solution of the following Integer Linear Program:
\begin{align}
\max \sum_{j \in N} y_j \rho_j \tag{$d$-IP}\\
\mathrm{s.t.}\ \ \ \ \ \ \
c_{i_1} + \sum_{j\in N} y_j \rho_{i_1 j} & \in
 \rho_{i_1} \pm \e_1 \rb_{i_1} \pm \e_2 n^{d-1+\delta} &
 \forall i_1 \in N \notag \\
c_{i_1i_2} + \sum_{j\in N} y_j \rho_{i_1 i_2 j} & \in
\rho_{i_1 i_2} \pm \e_1 \rb_{i_1 i_2} \pm \e_2 n^{d-2+\delta} &
\forall (i_1, i_2) \in N \times N \notag \\
\cdots \notag\\
c_{i_1 \ldots i_{d-\ell}} + \sum_{j\in N} y_j \rho_{i_1 \ldots i_{d-\ell} j} & \in
\rho_{i_1 \ldots i_{d-\ell}} \pm \e_1 \rb_{i_1 \ldots i_{d-\ell}}
\pm \e_2 n^{d-\ell+\delta} &
\forall (i_1, \ldots, i_{d-\ell}) \in N^{d-\ell} \notag \\
\cdots \notag\\
c_{i_1 \ldots i_{d-1}} + \sum_{j\in N} y_j c_{i_1 \ldots i_{d-1} j} & \in
\rho_{i_1 \ldots i_{d-1}} \pm \e_1 \rb_{i_1 \ldots i_{d-1}}\pm \e_2 n^{\delta} &
\forall (i_1, \ldots, i_{d-1}) \in N^{d-1} \notag \\
y_j & \in \{0, 1\} & \forall j \in N \notag
\end{align}
In ($d$-IP), we also use \emph{absolute value estimations} $\rb_{i_1 \ldots i_{d-\ell}}$. For each level $\ell \geq 1$ of the decomposition of $p(\vec{x})$ and each tuple $(i_1, \ldots, i_{d-\ell}) \in N^{d-\ell}$, we define the corresponding absolute value estimation as $\rb_{i_1 \ldots i_{d-\ell}} = \sum_{j \in N} |\rho_{i_1 \ldots i_{d-\ell}j}|$. Namely, each absolute value estimation $\rb_{i_1 \ldots i_{d-\ell}}$ at level $\ell$ is the sum of the absolute values of the estimations $\rho_{i_1 \ldots i_{d-\ell}j}$ at level $\ell-1$.
The reason that we use absolute value estimations and set the lhs/rhs of the constraints to $\rho_{i_1 \ldots i_{d-\ell}} \pm \e_1 \rb_{i_1 \ldots i_{d-\ell}}$, instead of simply to $(1\pm\e_1)\rho_{i_1 \ldots i_{d-\ell}}$, is that we want to consider linear combinations of positive and negative estimations $\rho_{i_1 \ldots i_{d-\ell}}$ in a uniform way.

%In ($d$-IP), we use absolute value estimations $\rb_{i_1 \ldots i_{d-\ell}}$, which are defined recursively as follows: For level $\ell = 1$ and each tuple $(i_1, \ldots, i_{d-1}) \in N^{d-1}$, we let $\rb_{i_1 \ldots i_{d-1}} = \sum_{j \in N} |c_{i_1 \ldots i_{d-1}j}|$. For each level $\ell \geq 2$ of the decomposition of $p(\vec{x})$ and each tuple $(i_1, \ldots, i_{d-\ell}) \in N^{d-\ell}$, we let $\rb_{i_1 \ldots i_{d-\ell}} = \sum_{j \in N} (|\rho_{i_1 \ldots i_{d-\ell}j}| + \rb_{i_1 \ldots i_{d-\ell}j})$. Intuitively, each absolute value estimation $\rb_{i_1 \ldots i_{d-\ell}}$ is equal to the sum of the absolute values of all estimations below the root of the decomposition tree of $p_{i_1 \ldots i_{d-\ell}}(\vec{x})$.

Similarly to Section~\ref{s:maxcut}, the estimations $\rho_{i_1 \ldots i_{d-\ell}}$ (and $\rb_{i_1 \ldots i_{d-\ell}}$) are computed (by exhaustive sampling) and the constants $\e_1, \e_2 > 0$ are calculated so that the optimal solution $\vec{x}^\ast$ is a feasible solution to ($d$-IP). In the following, we let $\vec{\rho}$ denote the sequence of estimations $\rho_{i_1 \ldots i_{d-\ell}}$, for all levels $\ell$ and all tuples $(i_1, \ldots, i_{d-\ell}) \in N^{d-\ell}$, that we use to formulate ($d$-IP). The absolute value estimations $\rb_{i_1 \ldots i_{d-\ell}}$ can be easily computed from $\vec{\rho}$. We let ($d$-LP) denote the Linear Programming relaxation of ($d$-IP), where each $y_j \in [0, 1]$, let $\vec{x}^\ast$ denote the binary vector that maximizes $p(\vec{x})$, and let $\vec{y}^\ast \in [0,1]^n$ denote the fractional optimal solution of ($d$-LP).

As in Section~\ref{s:maxcut}, the approach is based on the facts that (i) for all constants $\e_1, \e_2 > 0$, we can compute estimations $\vec{\rho}$, by exhaustive sampling, so that $\vec{x}^\ast$ is a feasible solution to ($d$-IP) with high probability (see Lemma~\ref{l:sampling} and Lemma~\ref{l:sampling_gen}); and that (ii) the objective value of any feasible solution $\vec{y}$ to ($d$-LP) is close to $p(\vec{y})$ (see Lemma~\ref{l:approx} and Lemma~\ref{l:approx_gen}). Based on these observations, the general description of the approximation algorithm is essentially identical to the three steps described in Section~\ref{s:cut_main} and the reasoning behind the approximation guarantee is that of (\ref{eq:cut_est}).

\subsection{Obtaining Estimations by Exhausting Sampling}
\label{s:pip_sampling}

We first show how to use exhaustive sampling and obtain an estimation $\rho_{i_1\ldots i_{d-\ell}}$ of the value at the optimal solution $\vec{x}^\ast$ of each degree-$\ell$ polynomial $p_{i_1\ldots i_{d-\ell}}(\vec{x})$ in the decomposition of $p(\vec{x})$.

As in Section~\ref{s:cut_sampling}, we take a sample $R$ from $N$, uniformly at random and with replacement. The sample size is $r = \Theta(n^{1-\delta} \ln n)$. We try exhaustively all $0/1$ assignments to the variables in $R$, which can performed in time $2^r = 2^{O(n^{1-\delta}\ln n)}$.

\def\Est{\mathrm{Estimate}}
\begin{algorithm}[t]
\caption{\label{alg:estimate}Recursive estimation procedure $\Est(p_{i_1\ldots i_{d-\ell}}(\vec{x}), \ell, R, \vec{s})$}
\begin{algorithmic}\normalsize
    \Require $n$-variate degree-$\ell$ polynomial $p_{i_1\ldots i_{d-\ell}}(\vec{x})$, $R \subseteq N$ and a value $s_j \in \{0,1\}$ for each $j \in R$
    \Ensure Estimation $\rho_{i_1\ldots i_{d-\ell}}$ of $p_{i_1\ldots i_{d-\ell}}(\overline{\vec{s}})$, where $\overline{\vec{s}}_R = \vec{s}$

    \medskip\If{$\ell = 0$} \Return $c_{i_1\ldots i_{d}}$
    \ \ \ /* $p_{i_1\ldots i_{d}}(\vec{x})$ is equal to the constant $c_{i_1\ldots i_{d}}$ */ \EndIf
    \State compute decomposition
    $p_{i_1\ldots i_{d-\ell}}(\vec{x}) =
      c_{i_1\ldots i_{d-\ell}} + \sum_{j \in N} x_j p_{i_1\ldots i_{d-\ell}j}(\vec{x})$
    \For{all $j \in N$}
        \State $\rho_{i_1\ldots i_{d-\ell}j} \leftarrow \Est(p_{i_1\ldots i_{d-\ell}j}(\vec{x}), \ell-1, R, \vec{s})$
    \EndFor
    \State $\rho_{i_1\ldots i_{d-\ell}} \leftarrow c_{i_1\ldots i_{d-\ell}} + \frac{|N|}{|R|} \sum_{j \in R} s_j \rho_{i_1\ldots i_{d-\ell}j}$\\
    \Return $\rho_{i_1\ldots i_{d-\ell}}$
\end{algorithmic}\end{algorithm}

For each assignment, described by a $0/1$ vector $\vec{s}$ restricted to $R$,
we compute the corresponding estimations recursively, as described in Algorithm~\ref{alg:estimate}. Specifically, for the basis level $\ell = 0$ and each $d$-tuple $(i_1, \ldots, i_d) \in N^d$ of indices, the corresponding estimation is the coefficient $c_{i_1\ldots i_d}$ of the monomial $x_{i_1}\cdots x_{i_d}$ in the expansion of $p(\vec{x})$.
For each level $\ell$, $1 \leq \ell \leq d-1$, and each $(d-\ell)$-tuple $(i_1, \ldots, i_{d-\ell}) \in N^{d-\ell}$, given the level-$(\ell-1)$ estimations $\rho_{i_1\ldots i_{d-\ell} j}$ of $p_{i_1\ldots i_{d-\ell} j}(\overline{\vec{s}})$, for all $j \in N$, we compute the level-$\ell$ estimation $\rho_{i_1\ldots i_{d-\ell}}$ of $p_{i_1\ldots i_{d-\ell}}(\overline{\vec{s}})$ from $\vec{s}$ as follows:
\begin{equation}\label{eq:estimation}
   \rho_{i_1\ldots i_{d-\ell}} = c_{i_1\ldots i_{d-\ell}} +
   \frac{n}{r} \sum_{j \in R} s_j \rho_{i_1\cdots i_{d-\ell} j}
\end{equation}
In Algorithm~\ref{alg:estimate}, $\overline{\vec{s}}$ is any vector in $\{ 0, 1 \}^n$ that agrees with $\vec{s}$ on the variables of $R$. Given the estimations $\rho_{i_1 \ldots i_{d-\ell}j}$, for all $j \in N$, we can also compute the absolute value estimations $\rb_{i_1 \ldots i_{d-\ell}}$ at level $\ell$. Due to the $\beta$-smoothness property of $p(\vec{x})$, we have that $|c_{i_1\ldots i_{d-\ell}}| \leq \beta n^\ell$, for all levels $\ell \geq 0$. Moreover, we assume that $0 \leq \rb_{i_1\ldots i_{d-\ell}} \leq \ell\beta n^{\ell}$ and $|\rho_{i_1\ldots i_{d-\ell}}| \leq (\ell+1)\beta n^{\ell}$, for all levels $\ell \geq 1$. This assumption is wlog. because due to $\beta$-smoothness, any binary vector $\vec{x}$ is feasible for ($d$-IP) with such values for the estimations $\rho_{i_1\ldots i_{d-\ell}}$ and the absolute value estimations $\rb_{i_1\ldots i_{d-\ell}}$\,.
\begin{remark}
For simplicity, we state Algorithm~\ref{alg:estimate} so that it computes, from $\vec{s}$, an estimation $\rho_{i_1\ldots i_{d-\ell}}$ of the value of a given degree-$\ell$ polynomial $p_{i_1\ldots i_{d-\ell}}(\vec{x})$ at $\overline{\vec{s}}$. So, we need to apply Algorithm~\ref{alg:estimate} $O(n^{d-1})$ times, one for each polynomial that arises in the recursive decomposition, with the same sample $R$ and the same assignment $\vec{s}$. We can easily modify Algorithm~\ref{alg:estimate} so that a single call $\Est(p(\vec{x}), d, R, \vec{s})$ computes the estimations of all the polynomials that arise in the recursive decomposition of $p(\vec{x})$. Thus, we save a factor of $d$ on the running time. The running time of the simple version is $O(dn^d)$, while the running time of the modified version is $O(n^d)$.
\end{remark}

\subsection{Sampling Lemma}
\label{s:sampling}

We use the next lemma to show that if $\vec{s} = \vec{x}^\ast_R$, the estimations $\rho_{i_1\ldots i_{d-\ell}}$ computed by Algorithm~\ref{alg:estimate} are close to $c_{i_1\ldots i_{d-\ell}} + \sum_{j \in N} x^\ast_j \rho_{i_1\ldots i_{d-\ell} j}$ with high probability.
%
%(and subsequently close to $p_{i_1\ldots i_{d-\ell}}(\vec{x}^\ast)$, see also Lemma~\ref{l:feasible}).
\begin{lemma}\label{l:sampling}
Let $\vec{x}$ be any binary vector and let $( \rho_j )_{j \in N}$ be any sequence such that for some integer $q \geq 0$ and some constant $\beta \geq 1$, $\rho_j \in [0, (q+1)\beta n^q]$, for all $j \in N$. For all integers $d \geq 1$ and for all $\alpha_1, \alpha_2 > 0$, we let $\gamma = \Theta(d q \beta/(\alpha_1^2 \alpha_2))$ and let $R$ be a multiset of $r = \gamma n^{1-\delta} \ln n$ indices chosen uniformly at random with replacement from $N$, where $\delta \in (0, 1]$ is any constant. If $\rho = (n / r) \sum_{j \in R} \rho_{j} x_j$ and $\hat{\rho} = \sum_{j \in N} \rho_{j} x_j$, with probability at least $1 - 2/n^{d+1}$,
\begin{equation}\label{eq:pip_sample}
 (1-\alpha_1)\hat{\rho} - (1-\alpha_1)\alpha_2 n^{q+\delta} \leq \rho \leq
 (1+\alpha_1)\hat{\rho} + (1+\alpha_1)\alpha_2 n^{q+\delta}
\end{equation}
\end{lemma}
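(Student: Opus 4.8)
\textbf{The plan} is to reduce the statement to a single one-dimensional Chernoff estimate and then to split on the size of $\hat\rho$. First I set $\sigma_j := \rho_j x_j$; since $x_j \in \{0,1\}$ and $0 \le \rho_j \le (q+1)\beta n^q =: M$, every $\sigma_j$ lies in $[0,M]$, and $\hat\rho = \sum_{j\in N}\sigma_j$, $\rho = \tfrac{n}{r}\sum_{j\in R}\sigma_j$, so $\rho$ is just the scaled sampling estimator of the sum $\hat\rho$. Writing the (with-replacement) sample as an i.i.d.\ uniform sequence $R = (i_1,\dots,i_r)$ and putting $V_t := \sigma_{i_t}/M \in [0,1]$, $V := \sum_{t=1}^r V_t$, one has $\Exp[V] = \nu := r\hat\rho/(nM)$ and the exact identity $\rho - \hat\rho = \tfrac{nM}{r}(V-\nu)$; also $\rho \ge 0$ always. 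Substituting $r = \gamma n^{1-\delta}\ln n$ and $M=(q+1)\beta n^q$ gives the two quantities I will use throughout, $\tfrac{nM}{r} = \tfrac{(q+1)\beta n^{q+\delta}}{\gamma\ln n}$ and $\nu = \tfrac{\gamma\hat\rho\ln n}{(q+1)\beta n^{q+\delta}}$. Everything then follows from the multiplicative Chernoff bound in the form $\Prob[V \ge (1+\lambda)\nu] \le \exp(-\lambda^2\nu/(2+\lambda))$ (all $\lambda>0$) and $\Prob[V \le (1-\lambda)\nu] \le \exp(-\lambda^2\nu/2)$ ($\lambda\in(0,1)$), both standard (e.g.\ \cite{DP09}). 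I will assume $0<\alpha_1\le 1$, the only regime the lemma is invoked with (in Lemma~\ref{l:cut_sampling}, $\alpha_1=\e_1/(1+\e_1)$), and the one the lower-tail bound permits.

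\textbf{Case $\hat\rho \ge \alpha_2 n^{q+\delta}$.} Here it suffices to show $|\rho-\hat\rho| \le \alpha_1\hat\rho$ with probability $\ge 1 - 2/n^{d+1}$: this yields $(1-\alpha_1)\hat\rho \le \rho \le (1+\alpha_1)\hat\rho$, and since $(1\mp\alpha_1)\alpha_2 n^{q+\delta} \ge 0$ the two slack terms in (\ref{eq:pip_sample}) only enlarge the target interval. By the identity, $|\rho-\hat\rho|\le\alpha_1\hat\rho$ is exactly $|V-\nu| \le \alpha_1\nu$, and $\hat\rho \ge \alpha_2 n^{q+\delta}$ forces $\nu \ge \gamma\alpha_2\ln n/((q+1)\beta)$, so the two Chernoff bounds give $\Prob[|V-\nu|>\alpha_1\nu] \le 2\exp(-\alpha_1^2\nu/3) \le 2\,n^{-\alpha_1^2\alpha_2\gamma/(3(q+1)\beta)}$, which is $\le 2/n^{d+1}$ once $\gamma$ is a large enough constant multiple of $d(q{+}1)\beta/(\alpha_1^2\alpha_2)$ — this is what fixes $\gamma=\Theta(dq\beta/(\alpha_1^2\alpha_2))$ as in the statement (at $q=0$, $M=1$ and no normalization is needed).

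\textbf{Case $\hat\rho < \alpha_2 n^{q+\delta}$.} Now the lower inequality of (\ref{eq:pip_sample}) is for free: its left-hand side equals $(1-\alpha_1)(\hat\rho - \alpha_2 n^{q+\delta}) < 0 \le \rho$. For the upper inequality, since $\hat\rho \ge 0$ it is enough to prove $\rho - \hat\rho \le \alpha_1\alpha_2 n^{q+\delta}$ with probability $\ge 1-1/n^{d+1}$, because $\hat\rho + \alpha_1\alpha_2 n^{q+\delta} \le (1+\alpha_1)\hat\rho + (1+\alpha_1)\alpha_2 n^{q+\delta}$. By the identity this is $V - \nu \le \tau$ with $\tau := \alpha_1\alpha_2\gamma\ln n/((q+1)\beta)$; the key point — and the reason the previous case's relative bound does not apply verbatim — is that $\nu$ may now be tiny, but $\hat\rho < \alpha_2 n^{q+\delta}$ still forces $\nu < \tau/\alpha_1$, hence $2\nu+\tau < (2+\alpha_1)\tau/\alpha_1 \le 3\tau/\alpha_1$. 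Writing $\nu+\tau=(1+\lambda)\nu$ with $\lambda=\tau/\nu$ and using the upper Chernoff bound, $\Prob[V\ge\nu+\tau] \le \exp(-\lambda^2\nu/(2+\lambda)) = \exp(-\tau^2/(2\nu+\tau)) \le \exp(-\alpha_1\tau/3) = n^{-\alpha_1^2\alpha_2\gamma/(3(q+1)\beta)} \le n^{-(d+1)}$ for the same $\gamma$. A union bound over the (at most two) failure events yields the claimed $1 - 2/n^{d+1}$.

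\textbf{Main obstacle.} The crux is not the bookkeeping but the choice of concentration inequality. The additive Hoeffding bound of footnote~\ref{foot:chernoff}, which is exactly what suffices in the rounding step, is useless here: when $\hat\rho$ sits near the threshold $\alpha_2 n^{q+\delta}$, after normalization both $\nu$ and the permitted deviation are only $\Theta(\log n)$, while $V$ ranges over $[0,r]$ with $r = \Theta(n^{1-\delta}\log n)$, so Hoeffding gives failure probability $\exp(-\Theta(\log^2 n / (n^{1-\delta}\log n)))\to 1$. One is forced into the multiplicative bound, whose strength precisely in the small-mean regime $\nu=\Omega(\log n)$ is what makes a sample of size only $r=\Theta(n^{1-\delta}\ln n)$ enough; the split above is the device that converts the clean relative statement $|V-\nu|\le\alpha_1\nu$ (good when $\nu$ is large) into the hybrid ``additive slack $\tau=\Theta(\log n)$ on top of a possibly negligible mean'' statement needed when $\hat\rho$ is small, and — crucially — both reduce to the same inequality on $\gamma$, so one value of $\gamma$ serves throughout.
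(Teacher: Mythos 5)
Your proof is correct and follows the same overall strategy as the paper: normalize by $M=(q+1)\beta n^q$ to reduce to a sum of i.i.d.\ $[0,1]$ samples, split on the threshold $\hat\rho\gtrless \alpha_2 n^{q+\delta}$, apply the multiplicative Chernoff bound in the large-$\hat\rho$ regime, and observe that the lower bound of (\ref{eq:pip_sample}) is vacuous in the small-$\hat\rho$ regime. The one place you diverge is in the upper bound for the small-$\hat\rho$ case. The paper handles this by a monotone coupling: inflate the coefficients $\rho_j\le\rho'_j\le M$ until $\hat\rho'=\alpha_2 n^{q+\delta}$, use the pathwise monotonicity $\rho\le\rho'$, and then reuse the very same relative Chernoff bound from the large case. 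You instead bound $\rho-\hat\rho$ directly by an \emph{additive} slack $\tau$ via the upper-tail form $\Prob[V\ge(1+\lambda)\nu]\le\exp(-\lambda^2\nu/(2+\lambda))$ valid for all $\lambda>0$, and then show algebraically (using $\nu<\tau/\alpha_1$) that the exponent collapses to the same $\Theta(\alpha_1^2\alpha_2\gamma\ln n/((q+1)\beta))$. Both routes need exactly the same $\gamma$. Your version avoids the coupling construction at the cost of needing the more general form of the upper tail; the paper's version reuses a single two-sided bound for $\e\in(0,1)$ throughout but needs the inflate-and-dominate trick. Two small notational quibbles, neither fatal: the lemma's $\Theta(dq\beta/\cdot)$ degenerates at $q=0$ and should really be $\Theta(d(q{+}1)\beta/\cdot)$, and your closing ``union bound over at most two failure events'' reads as if both cases occur; in fact only one case applies and contributes either two one-sided tails or one, which is what gives the $2/n^{d+1}$. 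Your observation about why Hoeffding is inadequate near the threshold is exactly right and correctly identifies why the multiplicative form is forced.
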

\begin{proof}
To provide some intuition, we observe that if $\hat{\rho} = \Omega(n^{q+\delta})$, we have $\Omega(n^\delta)$ values $\rho_j = \Theta(n^q)$. These values are well-represented in the random sample $R$, with high probability, since the size of the sample is $\Theta(n^{1-\delta} \ln n)$. Therefore, $|\hat{\rho} - \rho| \leq \alpha_1\hat{\rho}$, with high probability, by standard Chernoff bounds. If $\hat{\rho} = o(n^{q+\delta})$, the lower bound in (\ref{eq:pip_sample}) becomes trivial, since it is non-positive, while $\rho \geq 0$. As for the upper bound, we increase the coefficients $\rho_j$ to $\rho'_j \in [0, (q+1)\beta n^q]$, so that $\hat{\rho}' = \alpha_2 n^{q+\delta}$. Then, $\rho' \leq (1+\alpha_1)\hat{\rho}' = (1+\alpha_1)\alpha_2 n^{q+\delta}$, with high probability, by the same Chernoff bound as above. Now the upper bound of (\ref{eq:pip_sample}) follows from $\rho \leq \rho'$, which holds for any instantiation of the random sample $R$.

We proceed to formalize the idea above. For simplicity of notation, we let $B = (q+1)\beta n^q$ and $a_2 = \alpha_2/((q+1)\beta)$ throughout the proof. For each sample $l$, $l = 1, \ldots, r$, we let $X_l$ be a random variable distributed in $[0, 1]$. For each index $j$, if the $l$-th sample is $j$, $X_l$ becomes $\rho_{j} / B$, if $x_j = 1$, and becomes $0$, otherwise. Therefore, $\Exp[X_l] = \hat{\rho} / (B n)$. We let $X = \sum_{l = 1}^r X_l$. Namely, $X$ is the sum of $r$ independent random variables identically distributed in $[0, 1]$. Using that $r = \gamma n^{1-\delta} \ln n$, we have that $\Exp[X] = \gamma \hat{\rho} \ln n / (B n^\delta)$ and that $\rho = B n X/r = B n^\delta X / (\gamma \ln n)$.

We distinguish between the case where $\hat{\rho} \geq a_2 B n^{\delta}$ and the case where $\hat{\rho} < a_2 B n^{\delta}$.
We start with the case where $\hat{\rho} \geq a_2 B n^{\delta}$. Then, by Chernoff bounds%
\footnote{\label{foot:chernoff2}We use the following bound (see e.g., \cite[Theorem~1.1]{DP09}): Let $Y_1, \ldots, Y_k$ be independent random variables identically distributed in $[0, 1]$ and let $Y = \sum_{j=1}^k Y_j$. Then for all $\e \in (0, 1)$, $\Prob[|Y - \Exp[Y]| > \e\, \Exp[Y]] \leq 2\exp(-\e^2\,\Exp[Y]/3)$.},
\begin{eqnarray*}
 \Prob[|X - \Exp[X]| > \alpha_1 \Exp[X]] & \leq &
 2\exp\!\left(-\frac{\alpha_1^2 \gamma \hat{\rho} \ln n}{3 B n^{\delta} }\right) \\
 & \leq & 2\exp(-\alpha_1^2 a_2 \gamma \ln n / 3) \leq 2/n^{d+1}
\end{eqnarray*}
For the second inequality, we use that $\hat{\rho} \geq a_2 B n^{\delta}$. For the last inequality, we use that $\gamma \geq 3(d+1)/(\alpha_1^2 a_2) = 3(d+1)(q+1)\beta/(\alpha_1^2 \alpha_2)$, since $a_2 = \alpha_2/((q+1)\beta)$. Therefore, with probability at least $1 - 2/n^{d+1}$,
\[
 (1-\alpha_1) \frac{\gamma \hat{\rho} \ln n}{B n^\delta} \leq X \leq
 (1+\alpha_1) \frac{\gamma \hat{\rho} \ln n}{B n^\delta}
\]
Multiplying everything by $B n / r = B n^\delta /(\gamma \ln n)$, we have that with probability at least $1-2/n^{d+1}$, $(1-\alpha_1) \hat{\rho} \leq \rho \leq (1+\alpha_1) \hat{\rho}$, which clearly implies (\ref{eq:pip_sample}).

We proceed to the case where $\hat{\rho} < a_2 B n^{\delta}$. Then, $(1-\alpha_1)\hat{\rho} < (1-\alpha_1) a_2 B n^{\delta} = (1-\alpha_1) \alpha_2 n^{q+\delta}$. Therefore, since $\rho \geq 0$, because $\rho_j \geq 0$, for all $j \in N$, the lower bound of (\ref{eq:pip_sample}) on $\rho$ is trivial.
For the upper bound, we show that with probability at least $1-1/n^{d+1}$, $\rho \leq (1+\alpha_1) a_2 B n^{\delta} = (1+\alpha_1)\alpha_2n^{q+\delta}$. To this end, we consider a sequence $(\rho'_j)_{j \in N}$ so that $\rho_j \leq \rho'_j \leq (q+1)\beta n^q$, for all $j \in N$, and
\( \hat{\rho}' = \sum_{j \in N} \rho'_{j} x_j = a_2 B n^{q+\delta} \).
We can obtain such a sequence by increasing an appropriate subset of $\rho_j$ up to $(q+1)\beta n^q$ (if $\vec{x}$ does not contain enough $1$'s, we may also change some $x_j$ from $0$ to $1$).
For the new sequence, we let $\rho' = (n / r) \sum_{j \in R} \rho'_{j} x_j$ and observe that $\rho \leq \rho'$, for any instantiation of the random sample $R$.
Therefore,
\[ \Prob[\rho > (1+\alpha_1)\alpha_2n^{q+\delta}] \leq
   \Prob[\rho' > (1+\alpha_1)\hat{\rho}']\,,
\]
where we use that $\hat{\rho}' = a_2 B n^\delta = \alpha_2n^{q+\delta}$.
By the choice of $\hat{\rho}'$, we can apply the same Chernoff bound as above and obtain that $\Prob[\rho' > (1+\alpha_1)\hat{\rho}'] \leq 1/n^{d+1}$.
\qed\end{proof}
Lemma~\ref{l:sampling} is enough for \MC\ and graph optimization problems, where the estimations $\rho_{i_1\ldots i_{d-\ell} j}$ are non-negative. For arbitrary smooth polynomials however, the estimations $\rho_{i_1\ldots i_{d-\ell} j}$ may also be negative. So, we need a generalization of Lemma~\ref{l:sampling} that deals with both positive and negative estimations. To this end, given a sequence of estimations $( \rho_j )_{j \in N}$, with $\rho_j \in [-(q+1)\beta n^q, (q+1)\beta n^q]$, we let $\rho^+_j = \max\{\rho_j, 0\}$ and $\rho^-_j = \min\{ \rho_j, 0\}$, for all $j \in N$. Namely, $\rho^+_j$ (resp. $\rho^-_j$) is equal to $\rho_j$, if $\rho_j$ is positive (resp. negative), and $0$, otherwise. Moreover, we let 
\[ \rho^+ = (n / r) \sum_{j \in R} \rho^+_{j} x_j\,,\ \  
   \hat{\rho}^+ = \sum_{j \in N} \rho^+_{j} x_j\,,\ \ 
   \rho^- = (n / r) \sum_{j \in R} \rho^-_{j} x_j \mbox{\ \ and\ \ }   
   \hat{\rho}^- = \sum_{j \in N} \rho^-_{j} x_j 
\]
Applying Lemma~\ref{l:sampling} once for positive estimations and once for negative estimations (with the absolute values of $\rho_j^-$, $\rho^-$ and $\hat{\rho}^-$, instead), we obtain that with probability at least $1 - 4/n^{d+1}$, the following inequalities hold:
\begin{eqnarray*}
  (1-\alpha_1)\hat{\rho}^+ - (1-\alpha_1)\alpha_2 n^{q+\delta} \leq & \rho^+ & \leq
  (1+\alpha_1)\hat{\rho}^+ + (1+\alpha_1)\alpha_2 n^{q+\delta} \\
  (1+\alpha_1)\hat{\rho}^- - (1+\alpha_1)\alpha_2 n^{q+\delta} \leq & \rho^- & \leq
  (1-\alpha_1)\hat{\rho}^- + (1-\alpha_1)\alpha_2 n^{q+\delta}
\end{eqnarray*}
Using that $\rho = \rho^+ + \rho^-$ and that $\hat{\rho} = \hat{\rho}^+ + \hat{\rho}^-$, we obtain the following generalization of Lemma~\ref{l:sampling}.
\begin{lemma}[Sampling Lemma]\label{l:sampling_gen}
Let $\vec{x} \in \{0, 1\}^n$ and let $( \rho_j )_{j \in N}$ be any sequence such that for some integer $q \geq 0$ and some constant $\beta \geq 1$, $|\rho_j| \leq  (q+1)\beta n^q$, for all $j \in N$. For all integers $d \geq 1$ and for all $\alpha_1, \alpha_2 > 0$, we let $\gamma = \Theta(d q \beta/(\alpha_1^2 \alpha_2))$ and let $R$ be a multiset of $r = \gamma n^{1-\delta} \ln n$ indices chosen uniformly at random with replacement from $N$, where $\delta \in (0, 1]$ is any constant. If $\rho = (n / r) \sum_{j \in R} \rho_{j} x_j$, $\hat{\rho} = \sum_{j \in N} \rho_{j} x_j$ and $\rb = \sum_{j \in N} |\rho_j|$, with probability at least $1 - 4/n^{d+1}$,
\begin{equation}\label{eq:pip_sample_gen}
 \hat{\rho} - \alpha_1 \rb - 2\alpha_2 n^{q+\delta} \leq \rho \leq
 \hat{\rho} + \alpha_1 \rb + 2\alpha_2 n^{q+\delta}
\end{equation}
\end{lemma}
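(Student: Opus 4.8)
The plan is to deduce this generalization from Lemma~\ref{l:sampling} by splitting the sequence $(\rho_j)_{j\in N}$ into its positive and negative parts, exactly as indicated in the paragraph preceding the statement. Write $\rho_j^+=\max\{\rho_j,0\}$ and $\rho_j^-=\min\{\rho_j,0\}$, so that $\rho_j=\rho_j^++\rho_j^-$ and $|\rho_j|=\rho_j^+-\rho_j^-$. Both $(\rho_j^+)_{j\in N}$ and $(-\rho_j^-)_{j\in N}$ are sequences of nonnegative reals bounded above by $(q+1)\beta n^q$, hence each satisfies the hypothesis of Lemma~\ref{l:sampling} with the same $d$, $q$, $\beta$, $\alpha_1$, $\alpha_2$, $\gamma$ and the same random multiset $R$.

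First I would invoke Lemma~\ref{l:sampling} on $(\rho_j^+)_{j\in N}$: with probability at least $1-2/n^{d+1}$, $(1-\alpha_1)\hat{\rho}^+-(1-\alpha_1)\alpha_2 n^{q+\delta}\le\rho^+\le(1+\alpha_1)\hat{\rho}^++(1+\alpha_1)\alpha_2 n^{q+\delta}$. Then I would invoke it on $(-\rho_j^-)_{j\in N}$, noting that the quantities playing the role of ``$\rho$'' and ``$\hat\rho$'' in Lemma~\ref{l:sampling} are now $-\rho^-$ and $-\hat{\rho}^-$; negating the conclusion and reversing the inequalities gives, with probability at least $1-2/n^{d+1}$, $(1+\alpha_1)\hat{\rho}^--(1+\alpha_1)\alpha_2 n^{q+\delta}\le\rho^-\le(1-\alpha_1)\hat{\rho}^-+(1-\alpha_1)\alpha_2 n^{q+\delta}$. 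A union bound makes both statements hold simultaneously with probability at least $1-4/n^{d+1}$.

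Finally, on that event I would add the two chains of inequalities. Using $\rho=\rho^++\rho^-$ and $\hat\rho=\hat{\rho}^++\hat{\rho}^-$, summing the upper bounds yields $\rho\le\hat\rho+\alpha_1(\hat{\rho}^+-\hat{\rho}^-)+2\alpha_2 n^{q+\delta}$ and summing the lower bounds yields $\rho\ge\hat\rho-\alpha_1(\hat{\rho}^+-\hat{\rho}^-)-2\alpha_2 n^{q+\delta}$. Since $\vec x$ is binary, $\hat{\rho}^+-\hat{\rho}^-=\sum_{j\in N}(\rho_j^+-\rho_j^-)x_j=\sum_{j\in N}|\rho_j|x_j\le\sum_{j\in N}|\rho_j|=\rb$, and substituting this bound gives precisely (\ref{eq:pip_sample_gen}). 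I do not anticipate a genuine obstacle: the only delicate point is the sign bookkeeping in the second application of Lemma~\ref{l:sampling} (where the inequalities flip and the roles of the $1-\alpha_1$ and $1+\alpha_1$ factors swap), together with the small but essential remark that $\sum_{j\in N}|\rho_j|x_j\le\rb$ relies on $x_j\in\{0,1\}$; everything else is linear arithmetic.
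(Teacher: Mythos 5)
Your proposal is correct and follows exactly the paper's own route: decompose $\rho_j$ into positive and negative parts, apply Lemma~\ref{l:sampling} to each (with the sign flip in the second application), union bound, and recombine via $\rho=\rho^++\rho^-$, $\hat\rho=\hat\rho^++\hat\rho^-$, and $\hat\rho^+-\hat\rho^-=\sum_j|\rho_j|x_j\le\rb$. The sign bookkeeping and the observation that $x_j\in\{0,1\}$ is needed to bound $\sum_j|\rho_j|x_j$ by $\rb$ are both handled correctly.
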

For all constants $\e_1, \e_2 > 0$ and all constants $c$, we use Lemma~\ref{l:sampling_gen} with $\alpha_1 = \e_1$ and $\alpha_2 = \e_2/2$ and obtain that for $\gamma = \Theta(d q \beta /(\e^2_1 \e_2))$, with probability at least $1 - 4/n^{d+1}$, the following holds for any binary vector $\vec{x}$ and any sequence of estimations $( \rho_j )_{j \in N}$ produced by Algorithm~\ref{alg:estimate} with $\vec{s} = \vec{x}_R$ (note that in Algorithm~\ref{alg:estimate}, the additive constant $c$ is included in the estimation $\rho$ when its value is computed from the estimations $\rho_j$).
\begin{equation}\label{eq:pip_sample2}
 \overbrace{c+\frac{n}{r}\sum_{j \in R} \rho_j x_j}^{\rho}
 - \e_1 \overbrace{\sum_{j \in N} |\rho_j|}^{\rb} - \e_2 n^{q+\delta} \leq
 c + \sum_{j \in N} x_j \rho_j \leq
 \overbrace{c+\frac{n}{r}\sum_{j \in R} \rho_j x_j}^{\rho} 
 + \e_1 \overbrace{\sum_{j \in N} |\rho_j|}^{\rb} + \e_2 n^{q+\delta}
\end{equation}
%Whenever we apply (\ref{eq:pip_sample2}), we assume that $c - \rb \leq \rho \leq c + \rb$, with the more general definition of absolute value estimations provided in ($d$-IP). This assumption is wlog. because if $\rho$ satisfies (\ref{eq:pip_sample2}), $\min\{ \rho, c+\rb \}$ and $\max\{\rho, c - \rb\}$ also satisfy (\ref{eq:pip_sample2}) (note that this condition can be easily enforced in Algorithm~\ref{alg:estimate}).
%
Now, let us consider ($d$-IP) with the estimations computed by Algorithm~\ref{alg:estimate} with $\vec{s} = \vec{x}^\ast_R$ (i.e., with the optimal assignment for the variables in the random sample $R$). Then, using (\ref{eq:pip_sample2}) and taking the union bound over all constraints, which are at most $2n^{d-1}$, we obtain that with probability at least $1-8/n^2$, the optimal solution $\vec{x}^\ast$ is a feasible solution to ($d$-IP). So, from now on, we condition on the high probability event that $\vec{x}^\ast$ is a feasible solution to ($d$-IP) and to ($d$-LP).

\subsection{The Value of Feasible Solutions to ($d$-LP)}
\label{s:pip_value}

From now on, we focus on estimations $\vec{\rho}$ produced by $\Est(p(\vec{x}), d, R, \vec{s})$, where $R$ is a random sample from $N$ and $\vec{s} = \vec{x}^\ast_R$, and the corresponding programs ($d$-IP) and ($d$-LP). The analysis in Section~\ref{s:pip_sampling} implies that $\vec{x}^\ast$ is a feasible solution to ($d$-IP) (and to ($d$-LP)), with high probability.

We next show that for any feasible solution $\vec{y}$ of ($d$-LP) and any polynomial $q(\vec{x})$ in the decomposition of $p(\vec{x})$, the value of $q(\vec{y})$ is close to the value of $c + \sum_j y_j \rho_j$ in the constraint of ($d$-LP) corresponding to $q$. Applying Lemma~\ref{l:approx}, we show below (see Lemma~\ref{l:approx_gen}) that $p(\vec{y})$ is close to $c+\sum_{j \in N} y_j \rho_j$, i.e., to the objective value of $\vec{y}$ in ($d$-LP) and ($d$-IP), for any feasible solution $\vec{y}$.

To state and prove the following lemma, we introduce \emph{cumulative absolute value estimations} $\tb_{i_1 \ldots i_{d-\ell}}$\,, defined recursively as follows:
For level $\ell = 1$ and each tuple $(i_1, \ldots, i_{d-1}) \in N^{d-1}$, we let $\tb_{i_1 \ldots i_{d-1}} = \rb_{i_1 \ldots i_{d-1}} = \sum_{j \in N} |c_{i_1 \ldots i_{d-1}j}|$.
For each level $\ell \geq 2$ of the decomposition of $p(\vec{x})$ and each tuple $(i_1, \ldots, i_{d-\ell}) \in N^{d-\ell}$, we let $\tb_{i_1 \ldots i_{d-\ell}} = \rb_{i_1 \ldots i_{d-\ell}} + \sum_{j \in N} \tb_{i_1 \ldots i_{d-\ell}j}$. Namely, each cumulative absolute value estimation $\tb_{i_1 \ldots i_{d-\ell}}$ is equal to the sum of all absolute value estimations that appear below the root of the decomposition tree of $p_{i_1 \ldots i_{d-\ell}}(\vec{x})$.
\begin{lemma}\label{l:approx}
Let $q(\vec{x})$ be any $\ell$-degree polynomial appearing in the decomposition of $p(\vec{x})$, let $q(\vec{x}) = c+\sum_{j \in N} x_j q_j(\vec{x})$ be the decomposition of $q(\vec{x})$, let $\rho$ and $\{ \rho_j \}_{j \in N}$ be the estimations of $q$ and $\{ q_j \}_{j \in N}$ produced by Algorithm~\ref{alg:estimate} and used in ($d$-LP), and let $\tb$ and $\{ \tb_j \}_{j \in N}$ be the corresponding cumulative absolute value estimations. Then, for any feasible solution $\vec{y}$ of ($d$-LP)
\begin{equation}\label{eq:approx}
\rho - \e_1 \tb - \ell \e_2 n^{\ell - 1+\delta} \leq q(\vec{y}) \leq
\rho + \e_1 \tb + \ell \e_2 n^{\ell - 1+\delta}
\end{equation}
\end{lemma}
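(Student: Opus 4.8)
The plan is to prove this by induction on the degree $\ell$ of the polynomial $q(\vec{x})$, mirroring the recursive structure of the decomposition. For the base case $\ell = 1$, the polynomial is linear: $q(\vec{x}) = c + \sum_{j \in N} x_j c_j$ with $\rho_j = c_j$ (the estimation procedure returns the exact coefficients at level $0$), so $q(\vec{y}) = c + \sum_j y_j c_j$ exactly. The corresponding constraint of ($d$-LP) is $c + \sum_j y_j c_j \in \rho \pm \e_1 \rb \pm \e_2 n^{\delta}$, and since $\tb = \rb$ at level $1$, inequality~(\ref{eq:approx}) is just a restatement of feasibility of $\vec{y}$. (Strictly speaking one must also note that the feasible region was tightened to respect the a-priori bounds on $\rho$ and $\rb$, but this only shrinks the feasible set.)

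For the inductive step, suppose $q$ has degree $\ell \ge 2$ with decomposition $q(\vec{x}) = c + \sum_{j \in N} x_j q_j(\vec{x})$, where each $q_j$ has degree $\ell - 1$. First I would apply the induction hypothesis to each $q_j$: for every $j \in N$, $q_j(\vec{y}) \in \rho_j \pm \e_1 \tb_j \pm (\ell-1)\e_2 n^{\ell-2+\delta}$. Then I would write
\begin{align*}
q(\vec{y}) &= c + \sum_{j \in N} y_j q_j(\vec{y}) \\
 &\in c + \sum_{j \in N} y_j \rho_j \pm \e_1 \sum_{j \in N} y_j \tb_j
 \pm (\ell-1)\e_2 n^{\ell-2+\delta} \sum_{j \in N} y_j \\
 &\subseteq \Bigl(c + \sum_{j \in N} y_j \rho_j\Bigr) \pm \e_1 \sum_{j \in N} \tb_j
 \pm (\ell-1)\e_2 n^{\ell-1+\delta},
\end{align*}
using $0 \le y_j \le 1$ and $\tb_j \ge 0$ in the last line. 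Now the quantity $c + \sum_j y_j \rho_j$ is exactly what appears on the left of the ($d$-LP) constraint associated with $q$, so by feasibility of $\vec{y}$ it lies in $\rho \pm \e_1 \rb \pm \e_2 n^{\ell-1+\delta}$. Combining the two displays, the $\e_1$ error terms add up to $\e_1(\rb + \sum_j \tb_j) = \e_1 \tb$ by the recursive definition of the cumulative estimations, and the $\e_2$ error terms add up to $(\ell-1)\e_2 n^{\ell-1+\delta} + \e_2 n^{\ell-1+\delta} = \ell\, \e_2 n^{\ell-1+\delta}$, which is exactly~(\ref{eq:approx}).

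The step I expect to require the most care is bookkeeping the two distinct sources of additive error so that they telescope correctly through the recursion: the $\e_1$-type error, which must aggregate into the cumulative absolute-value estimate $\tb$ defined to be precisely $\rb + \sum_j \tb_j$, and the $\e_2$-type error, whose coefficient grows by exactly one per level of recursion (giving the factor $\ell$) while the power of $n$ also increases by one per level. One subtlety worth checking is that passing from $\sum_j y_j \tb_j$ to $\sum_j \tb_j$ (dropping the $y_j$ factors) is legitimate precisely because the $\tb_j$ are non-negative, and that the power of $n$ in the level-$\ell$ error, $n^{\ell-1+\delta}$, arises as $n \cdot n^{\ell-2+\delta}$ from the summation over the (at most $n$) indices $j$ — consistent with the $\e_2 n^{q+\delta}$ slack used in the sampling lemma at the corresponding level. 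Everything else is a routine application of the interval arithmetic conventions fixed in Section~\ref{s:prelim}.
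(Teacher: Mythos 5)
Your proof is correct and follows essentially the same route as the paper's: induction on the degree $\ell$, using feasibility of $\vec{y}$ for the ($d$-LP) constraint at each level, the bound $\sum_j y_j \tb_j \le \sum_j \tb_j$, and the recursive identity $\tb = \rb + \sum_j \tb_j$ to aggregate the error terms. No substantive differences.
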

\begin{proof}
The proof is by induction on the degree $\ell$. The basis, for $\ell=1$, is trivial, because in the decomposition of $q(\vec{x})$, each $q_j(\vec{x})$ is a constant $c_j$. Therefore, Algorithm~\ref{alg:estimate} outputs $\rho_j = c_j$ and
\[ q(\vec{y}) = c + \sum_{j \in N} y_j q_j(\vec{x})
 = c + \sum_{j \in N} y_j c_j
 \in \rho \pm \e_1 \tb \pm \e_2 n^{\delta}\,,
 \]
where the inclusion follows from the feasibility of $\vec{y}$ for ($d$-LP). We also use that at level $\ell = 1$, $\tb = \rb$ (i.e., cumulative absolute value estimations and absolute value estimations are identical).

We inductively assume that (\ref{eq:approx}) is true for all degree-$(\ell-1)$ polynomials $q_j(\vec{x})$ that appear in the decomposition of $q(\vec{x})$ and establish the lemma for $q(\vec{x}) = c + \sum_{j \in N} x_j q_j(\vec{x})$. We have that:
\begin{align*}
 q(\vec{y}) = c + \sum_{j \in N} y_j q_j(\vec{y}) & \in
 c + \sum_{j \in N} y_j \left( \rho_j \pm \e_1 \tb_j
                                \pm (\ell-1) \e_2 n^{\ell-2+\delta} \right)\\
 &= \left(c + \sum_{j \in N} y_j \rho_j \right)
      \pm \e_1 \sum_{j \in N} y_j \tb_j
      \pm (\ell-1) \e_2 \sum_{j \in N} y_j n^{\ell-2+\delta} \\
 &\in \left(\rho \pm \e_1 \rb \pm \e_2 n^{\ell-1+\delta}\right)
      \pm \e_1 \sum_{j \in N} \tb_j \pm (\ell-1) \e_2 n^{\ell-1+\delta}\\
 &\in \rho \pm \e_1 \tb \pm \ell \e_2 n^{\ell-1+\delta}
\end{align*}
The first inclusion holds by the induction hypothesis. The second inclusion holds because (i) $\vec{y}$ is a feasible solution to ($d$-LP) and thus, $c + \sum_{j \in N} y_j \rho_j$ satisfies the corresponding constraint; (ii) $\sum_{j \in N} y_j \tb_j \leq \sum_{j \in N} \tb_j$; and (iii) $\sum_{j \in N} y_j \leq n$. The last inclusion holds because $\tb = \rb + \sum_{j \in N} \tb_j$, by the definition of cumulative absolute value estimations.
\qed\end{proof}
Using Lemma~\ref{l:approx} and the notion of cumulative absolute value estimations, we next show that $p(\vec{y})$ is close to $c+\sum_{j \in N} y_j \rho_j$, for any feasible solution $\vec{y}$.
\begin{lemma}\label{l:approx_gen}
Let $p(\vec{x}) = c+\sum_{j \in N} x_j p_j(\vec{x})$ be the decomposition of $p(\vec{x})$, let $\{ \rho_j \}_{j \in N}$ be the estimations of $\{ p_j \}_{j \in N}$ produced by Algorithm~\ref{alg:estimate} and used in ($d$-LP), and let $\{ \tb_j \}_{j \in N}$ be the corresponding cumulative absolute value estimations. Then, for any feasible solution $\vec{y}$ of ($d$-LP)
\begin{equation}\label{eq:approx_gen}
 p(\vec{y}) \in
 c+\sum_{j \in N} y_j \rho_j \pm \e_1 \sum_{j \in N} \tb_j \pm (d-1)\e_2 n^{d-1+\delta}
\end{equation}
\end{lemma}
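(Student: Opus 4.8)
The plan is to derive Lemma~\ref{l:approx_gen} as an immediate consequence of Lemma~\ref{l:approx}, by performing a single extra ``unrolling'' step at the root of the decomposition tree. I would start from the decomposition $p(\vec{x}) = c + \sum_{j \in N} x_j p_j(\vec{x})$ given by Proposition~\ref{pr:decomposition}, where each $p_j(\vec{x})$ is a degree-$(d-1)$ polynomial appearing in the recursive decomposition of $p(\vec{x})$, and whose estimation $\rho_j$ (produced by Algorithm~\ref{alg:estimate}) and cumulative absolute value estimation $\tb_j$ are exactly the ones used in ($d$-LP).

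Next I would invoke Lemma~\ref{l:approx} once for each polynomial $p_j(\vec{x})$, with $\ell = d-1$, so that for every feasible solution $\vec{y}$ of ($d$-LP) we get $p_j(\vec{y}) \in \rho_j \pm \e_1 \tb_j \pm (d-1)\e_2 n^{d-2+\delta}$. Substituting this into $p(\vec{y}) = c + \sum_{j \in N} y_j p_j(\vec{y})$, expanding the products, and pulling the coefficients $y_j$ outside the error terms, gives
\[
 p(\vec{y}) \in c + \sum_{j \in N} y_j \rho_j
   \pm \e_1 \sum_{j \in N} y_j \tb_j
   \pm (d-1)\e_2 n^{d-2+\delta} \sum_{j \in N} y_j .
\]
Finally, using that $y_j \in [0,1]$ and $\tb_j \geq 0$, I bound $\sum_{j \in N} y_j \tb_j \leq \sum_{j \in N} \tb_j$ and $\sum_{j \in N} y_j \leq n$, which turns the last additive term into $(d-1)\e_2 n^{d-1+\delta}$ and yields exactly (\ref{eq:approx_gen}).

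The one point worth stating carefully is the difference from the inductive step inside the proof of Lemma~\ref{l:approx}: here the root quantity $c + \sum_{j\in N} y_j \rho_j$ is the \emph{objective} of ($d$-LP) rather than a constrained expression, so there is no ($d$-LP) constraint to apply at the top and hence no extra $\pm \e_1 \rb$ slack to absorb. Consequently the bound features $\sum_{j \in N}\tb_j$ rather than the full $\tb = \rb + \sum_{j\in N}\tb_j$, and the additive slack stays at $(d-1)\e_2 n^{d-1+\delta}$ rather than growing to $d\,\e_2 n^{d-1+\delta}$. I do not expect any real obstacle: the genuine work — the induction over decomposition levels and the repeated use of feasibility of $\vec{y}$ — is already carried out in Lemma~\ref{l:approx}, and this statement is just its outermost application specialized to $p$. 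The only thing to watch is the bookkeeping, namely matching the exponent $d-2+\delta$ emerging from Lemma~\ref{l:approx} against the target $d-1+\delta$ after summing over the $n$ coordinates, and correctly tracking which absolute-value estimates enter the final bound.
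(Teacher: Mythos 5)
Your proof is correct and coincides exactly with the paper's own argument: apply Lemma~\ref{l:approx} with $\ell=d-1$ to each $p_j$, substitute into $p(\vec{y}) = c + \sum_{j\in N} y_j p_j(\vec{y})$, and bound $\sum_j y_j \tb_j \le \sum_j \tb_j$ and $\sum_j y_j \le n$. Your remark about why the additive term stays at $(d-1)\e_2 n^{d-1+\delta}$ (no ($d$-LP) constraint at the root, so no extra $\pm\e_1\rb$ slack) is an accurate reading of the structure, even though the paper does not spell it out.
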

\begin{proof}
By Lemma~\ref{l:approx}, for any polynomial $p_j$, $p_j(\vec{y}) \in \rho_j \pm \e_1 \tb_j \pm (d-1) \e_2 n^{d-2+\delta}$. Therefore,
\begin{align*}
 p(\vec{y}) = c + \sum_{j \in N} y_j p_j(\vec{y}) & \in
 c + \sum_{j \in N} y_j \left( \rho_j \pm \e_1 \tb_j
                                \pm (d-1) \e_2 n^{d-2+\delta} \right)\\
 &= c + \sum_{j \in N} y_j \rho_j
      \pm \e_1 \sum_{j \in N} y_j \tb_j
      \pm (d-1) \e_2 \sum_{j \in N} y_j n^{d-2+\delta} \\
 &\in c + \sum_{j \in N} y_j \rho_j
     \pm \e_1 \sum_{j \in N} \tb_j
     \pm (d-1) \e_2 n^{d-1+\delta}
\end{align*}
The second inclusion holds because $y_j \in [0,1]$ and $\sum_{j \in N} y_j \leq n$.
\qed\end{proof}

\subsection{Randomized Rounding of the Fractional Optimum}
\label{s:pip_rounding}

The last step is to round the fractional optimum $\vec{y}^\ast = (y^\ast_1, \ldots, y^\ast_n)$ of ($d$-LP) to an integral solution $\vec{z} = (z_1, \ldots, z_n)$ that almost satisfies the constraints of ($d$-IP) and has an expected objective value for ($d$-IP) very close to the objective value of $\vec{y}^\ast$.

To this end, we use randomized rounding, as in \cite{RT87}. In particular, we set independently each $z_j$ to $1$, with probability $y_j^\ast$, and to $0$, with probability $1-y_j^\ast$. The analysis is based on the following lemma, whose proof is similar to the proof of Lemma~\ref{l:sampling}.
\begin{lemma}\label{l:rounding}
Let $\vec{y} \in [0, 1]^n$ be any fractional vector and let $\vec{z} \in \{0, 1\}^n$ be an integral vector obtained from $\vec{y}$ by randomized rounding. Also, let $( \rho_j )_{j \in N}$ be any sequence such that for some integer $q \geq 0$ and some constant $\beta \geq 1$, $\rho_j \in [0, (q+1)\beta n^q]$, for all $j \in N$. For all integers $k \geq 1$ and for all constants $\alpha, \delta > 0$ (and assuming that $n$ is sufficiently large), if $\rho = \sum_{j \in N} \rho_{j} z_j$ and $\hat{\rho} = \sum_{j \in N} \rho_{j} y_j$, with probability at least $1 - 2/n^{k+1}$,
\begin{equation}\label{eq:rounding}
 (1-\alpha)\hat{\rho} - (1-\alpha)\alpha n^{q+\delta} \leq \rho \leq
 (1+\alpha)\hat{\rho} + (1+\alpha)\alpha n^{q+\delta}
\end{equation}
\end{lemma}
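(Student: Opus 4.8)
The plan is to follow the proof of Lemma~\ref{l:sampling} almost verbatim, replacing the randomness of the sample $R$ by that of the independent rounding $\vec{z}$; throughout we may assume $\alpha \in (0,1)$, as needed for the multiplicative Chernoff bound of footnote~\ref{foot:chernoff2}. First I would normalize: set $B = (q+1)\beta n^q$ and, for each $j \in N$, let $X_j = \rho_j z_j / B \in [0,1]$; then $X := \sum_{j \in N} X_j$ is a sum of independent $[0,1]$ variables with $\Exp[X] = \hat\rho / B$ (since $\Exp[z_j] = y_j$), and $\rho = BX$. The argument then splits into the same two cases as in Lemma~\ref{l:sampling}, according to whether $\hat\rho \ge \alpha\, n^{q+\delta}$ or not.

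In the first case, $\Exp[X] = \hat\rho/B \ge \alpha\, n^{\delta}/((q+1)\beta)$, so for $n$ large enough $\alpha^2\Exp[X]/3 \ge (k+1)\ln n$ (this is the only place the hypothesis that $n$ is sufficiently large is used: $n^{\delta}$ eventually dominates $\ln n$). Hence footnote~\ref{foot:chernoff2} gives $\Prob[\,|X-\Exp[X]| > \alpha\Exp[X]\,] \le 2/n^{k+1}$, and multiplying through by $B$ yields $(1-\alpha)\hat\rho \le \rho \le (1+\alpha)\hat\rho$ with probability at least $1 - 2/n^{k+1}$, which implies (\ref{eq:rounding}). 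In the second case, $\hat\rho < \alpha\, n^{q+\delta}$ makes the left-hand side of (\ref{eq:rounding}) negative, so the lower bound holds trivially since $\rho \ge 0$; for the upper bound it suffices (as $\hat\rho \ge 0$) to show $\Prob[\rho > (1+\alpha)\alpha\, n^{q+\delta}] \le 2/n^{k+1}$. If $\sum_{j\in N}\rho_j \le \alpha\, n^{q+\delta}$ this is immediate, as $\rho \le \sum_j \rho_j$ deterministically; otherwise I would enlarge the instance, choosing $\rho'_j \in [\rho_j, B]$ and $y'_j \in [y_j, 1]$ with $\sum_j \rho'_j y'_j = \alpha\, n^{q+\delta}$ (possible by the intermediate value theorem, since $\hat\rho < \alpha\, n^{q+\delta} \le \sum_j B$), then coupling the rounding $\vec{z}$ of $\vec{y}$ monotonically with a rounding $\vec{z}'$ of $\vec{y}'$ so that $z_j \le z'_j$ for all $j$. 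This gives $\rho \le \rho' := \sum_j \rho'_j z'_j$ pointwise, with $\Exp[\rho'] = \alpha\, n^{q+\delta}$; applying the first-case Chernoff estimate to $\rho'/B$ bounds $\Prob[\rho' > (1+\alpha)\alpha\, n^{q+\delta}]$, and hence $\Prob[\rho > (1+\alpha)\alpha\, n^{q+\delta}]$, by $2/n^{k+1}$.

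The step I expect to be the main obstacle is, as in Lemma~\ref{l:sampling}, the small case, but it is slightly more delicate here: the random vector $\vec{z}$ depends on the vector $\vec{y}$ being perturbed, so the ``$\rho \le \rho'$ for every instantiation of the sample'' shortcut available in Lemma~\ref{l:sampling} must be replaced by an explicit coordinatewise monotone coupling of $\vec{z}$ and $\vec{z}'$ (which exists precisely because $y_j \le y'_j$ for all $j$). Everything else --- the normalization and the two Chernoff computations --- is a routine transcription of the corresponding steps in the proof of Lemma~\ref{l:sampling}.
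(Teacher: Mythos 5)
Your proof is correct and follows essentially the same argument as the paper's: normalize by $B=(q+1)\beta n^q$, invoke the multiplicative Chernoff bound of footnote~\ref{foot:chernoff2} for large $\hat\rho$, and for small $\hat\rho$ dismiss the lower bound as trivial while handling the upper bound by enlarging the instance and applying a monotone coupling (the paper calls this ``a standard coupling argument'' and leaves it implicit, while you spell it out). The only cosmetic difference is your case threshold $\hat\rho \ge \alpha\, n^{q+\delta}$ versus the paper's $\hat\rho \ge 3(k+1)B\ln n/\alpha^2$; both put the same ``$n$ sufficiently large'' hypothesis to work, just at slightly different points of the argument.
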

\begin{proof}
We first note that $\Exp[\rho] = \hat{\rho}$. If $\hat{\rho} = \Omega(n^{q} \ln n)$, then $|\rho - \hat{\rho}| \leq \alpha \hat{\rho}$, with high probability, by standard Chernoff bounds. If $\hat{\rho} = o(n^{q} \ln n)$, the lower bound in (\ref{eq:rounding}) becomes trivial, because $\rho \geq 0$ and $o(n^{q} \ln n) < \alpha n^{q+\delta}$, if $n$ is sufficiently large. As for the upper bound, we increase the coefficients $\rho_j$ to $\rho'_j \in [0, (q+1)\beta n^q]$, so that $\hat{\rho}' = \Theta(n^{q} \ln n)$. Then, the upper bound is shown as in the second part of the proof of Lemma~\ref{l:sampling}.

We proceed to the formal proof. For simplicity of notation, we let $B = (q+1)\beta n^q$ throughout the proof. For $j = 1, \ldots, n$, we let $X_j = z_j \rho_j / B$ be a random variable distributed in $[0, 1]$. Each $X_j$ independently takes the value $\rho_{j} / B$, with probability $y_j$, and $0$, otherwise. We let $X = \sum_{j = 1}^n X_j$ be the sum of these independent random variables. Then, $\Exp[X] = \hat{\rho} / B$ and $X = \sum_{j \in N} z_j \rho_j / B = \rho/B$.

As in Lemma~\ref{l:sampling}, we distinguish between the case where $\hat{\rho} \geq 3(k+1)B\ln n/\alpha^2$ and the case where $\hat{\rho} < 3(k+1)B\ln n/\alpha^2$.
We start with the case where $\hat{\rho} \geq 3(k+1)B\ln n/\alpha^2$. Then, by Chernoff bounds (we use the bound in footnote~\ref{foot:chernoff2}),
\[
 \Prob[|X - \Exp[X]| > \alpha \Exp[X]]
 \leq 2\exp\!\left(-\frac{\alpha^2 \hat{\rho} }{3 B }\right)
 \leq 2\exp(-(k+1)\ln n) \leq 2/n^{k+1}\,,
\]
where we use that $\hat{\rho} \geq 3(k+1)B\ln n/\alpha^2$. Therefore, with probability at least $1 - 2/n^{k+1}$,
\[
 (1-\alpha) \hat{\rho} / B \leq X \leq (1+\alpha) \hat{\rho} / B
\]
Multiplying everything by $B$ and using that $X  = \rho / B$, we obtain that with probability at least $1 - 2/n^{k+1}$, $(1-\alpha) \hat{\rho} \leq \rho \leq (1+\alpha) \hat{\rho}$, which implies (\ref{eq:rounding}).

We proceed to the case where $\hat{\rho} < 3(k+1)B\ln n/\alpha^2$. Then, assuming that $n$ is large enough that $n^\delta / \ln n > 3(k+1)(q+1)\beta / \alpha^3$, we obtain that $(1-\alpha)\hat{\rho} < (1-\alpha) \alpha n^{q+\delta}$. Therefore, since $\rho \geq 0$, because $\rho_j \geq 0$, for all $j \in N$, the lower bound of (\ref{eq:rounding}) on $\rho$ is trivial.
For the upper bound, we show that with probability at least $1-1/n^{k+1}$, $\rho \leq (1+\alpha) \alpha n^{q+\delta}$. To this end, we consider a sequence $(\rho'_j)_{j \in N}$ so that $\rho_j \leq \rho'_j \leq (q+1)\beta n^q$, for all $j \in N$, and
\[ \hat{\rho}' = \sum_{j \in N} \rho'_{j} y_j = \frac{3(k+1)B\ln n}{\alpha^2} \]
We can obtain such a sequence by increasing an appropriate subset of $\rho_j$ up to $(q+1)\beta n^q$ (if $\sum_{j \in N} \vec{y}$ is not large enough, we may also increase some $y_j$ up to $1$).
For the new sequence, we let $\rho' = \sum_{j \in R} \rho'_{j} z_j$ and observe that $\rho \leq \rho'$, for any instantiation of the randomized rounding (if some $y_j$ are increased, the inequality below follows from a standard coupling argument).
Therefore,
\[ \Prob[\rho > (1+\alpha)\alpha n^{q+\delta}] \leq
   \Prob[\rho' > (1+\alpha)\hat{\rho}']\,,
\]
where we use that $\hat{\rho}' = 3(k+1)B\ln n / \alpha^2$ and that $\alpha n^\delta > 3(k+1)(q+1)\beta \ln n / \alpha^2$, which holds if $n$ is sufficiently large.
By the choice of $\hat{\rho}'$, we can apply the same Chernoff bound as above and obtain that $\Prob[\rho' > (1+\alpha)\hat{\rho}'] \leq 1/n^{k+1}$.
\qed\end{proof}
Lemma~\ref{l:rounding} implies that if the estimations $\rho_j$ are non-negative, the rounded solution $\vec{z}$ is almost feasible for ($d$-IP) with high probability. But, as in Section~\ref{s:pip_sampling}, we need a generalization of Lemma~\ref{l:rounding} that deals with both positive and negative estimations. To this end, we work as in the proof of Lemma~\ref{l:sampling_gen}. Given a sequence of estimations $( \rho_j )_{j \in N}$, with $\rho_j \in [-(q+1)\beta n^q, (q+1)\beta n^q]$, we define $\rho^+_j = \max\{\rho_j, 0\}$ and $\rho^-_j = \min\{ \rho_j, 0\}$, for all $j \in N$. Moreover, we let $\rho^+ = \sum_{j \in N} \rho^+_{j} z_j$, $\hat{\rho}^+ = \sum_{j \in N} \rho^+_{j} y_j$, $\rho^- = \sum_{j \in N} \rho^-_{j} z_j$ and  $\hat{\rho}^- = \sum_{j \in N} \rho^-_{j} y_j$. Applying Lemma~\ref{l:rounding}, once for positive estimations and once for negative estimations (with the absolute values of $\rho_j^-$, $\rho^-$ and $\hat{\rho}^-$, instead), we obtain that with probability at least $1 - 4/n^{k+1}$,
\begin{eqnarray*}
  (1-\alpha)\hat{\rho}^+ - (1-\alpha)\alpha n^{q+\delta} \leq & \rho^+ & \leq
  (1+\alpha)\hat{\rho}^+ + (1+\alpha)\alpha n^{q+\delta} \\
  (1+\alpha)\hat{\rho}^- - (1+\alpha)\alpha n^{q+\delta} \leq & \rho^- & \leq
  (1-\alpha)\hat{\rho}^- + (1-\alpha)\alpha n^{q+\delta}
\end{eqnarray*}
Using that $\rho = \rho^+ + \rho^-$ and that $\hat{\rho} = \hat{\rho}^+ + \hat{\rho}^-$, we obtain the following generalization of Lemma~\ref{l:rounding}.
\begin{lemma}[Rounding Lemma]\label{l:rounding_gen}
Let $\vec{y} \in [0, 1]^n$ be any fractional vector and let $\vec{z} \in \{0, 1\}^n$ be an integral vector obtained from $\vec{y}$ by randomized rounding. Also, let $( \rho_j )_{j \in N}$ be any sequence such that for some integer $q \geq 0$ and some constant $\beta \geq 1$, $|\rho_j| \leq  (q+1)\beta n^q$, for all $j \in N$. For all integers $k \geq 1$ and for all constants $\alpha, \delta > 0$ (and assuming that $n$ is sufficiently large), if $\rho = \sum_{j \in N} \rho_{j} z_j$, $\hat{\rho} = \sum_{j \in N} \rho_{j} y_j$ and $\rb = \sum_{j \in N} |\rho_j|$, with probability at least $1 - 4/n^{k+1}$,
\begin{equation}\label{eq:rounding_gen}
 \hat{\rho} - \alpha\rb - 2\alpha n^{q+\delta} \leq \rho \leq
 \hat{\rho} + \alpha\rb + 2\alpha n^{q+\delta}
\end{equation}
\end{lemma}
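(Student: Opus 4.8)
The plan is to obtain the signed Rounding Lemma from Lemma~\ref{l:rounding} (the non-negative case) by exactly the same splitting-into-positive-and-negative-parts device already used to pass from Lemma~\ref{l:sampling} to Lemma~\ref{l:sampling_gen}; in fact most of the argument is written out in the paragraph immediately preceding the statement, and what remains is to assemble it. First I would write $\rho_j = \rho_j^+ + \rho_j^-$ with $\rho_j^+ = \max\{\rho_j,0\}\ge 0$ and $\rho_j^- = \min\{\rho_j,0\}\le 0$, and set $\rho^\pm = \sum_{j\in N}\rho_j^\pm z_j$, $\hat\rho^\pm = \sum_{j\in N}\rho_j^\pm y_j$, so that $\rho = \rho^+ + \rho^-$ and $\hat\rho = \hat\rho^+ + \hat\rho^-$. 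Since $|\rho_j|\le(q+1)\beta n^q$, both sequences $(\rho_j^+)_{j\in N}$ and $(-\rho_j^-)_{j\in N}$ are non-negative and bounded by $(q+1)\beta n^q$, so Lemma~\ref{l:rounding} applies to each.

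Second, I would invoke Lemma~\ref{l:rounding} once on $(\rho_j^+)_{j\in N}$, obtaining a two-sided bound on $\rho^+$ in terms of $\hat\rho^+$, and once on $(-\rho_j^-)_{j\in N}$, obtaining a bound on $-\rho^-$ in terms of $-\hat\rho^-$; multiplying the latter by $-1$ and reversing the inequalities produces the bound on $\rho^-$ displayed just before the statement. A union bound over these two events (each failing with probability at most $2/n^{k+1}$) yields the claimed success probability $1 - 4/n^{k+1}$.

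Third, on that good event I would add the two pairs of inequalities. The additive slacks combine to $(1-\alpha)\alpha n^{q+\delta} + (1+\alpha)\alpha n^{q+\delta} = 2\alpha n^{q+\delta}$ on each side, and the multiplicative parts combine to $(1\pm\alpha)\hat\rho^+ + (1\mp\alpha)\hat\rho^- = \hat\rho \pm \alpha(\hat\rho^+ - \hat\rho^-)$. The final step is the observation that $0 \le \hat\rho^+ - \hat\rho^- = \sum_{j\in N}(\rho_j^+ - \rho_j^-)\,y_j = \sum_{j\in N}|\rho_j|\,y_j \le \sum_{j\in N}|\rho_j| = \rb$, using $y_j\in[0,1]$; this replaces the $\alpha(\hat\rho^+ - \hat\rho^-)$ term by $\alpha\rb$ and gives exactly (\ref{eq:rounding_gen}).

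There is no real obstacle: the probabilistic content sits entirely inside Lemma~\ref{l:rounding} (including the degenerate case $\hat\rho = o(n^q\ln n)$, where its lower bound is trivially satisfied), and the signed version is a bookkeeping exercise. The only points that need a little care are keeping track of which inequality reverses when the negative part is handled by passing to absolute values, and checking that the two additive error terms really add up to $2\alpha n^{q+\delta}$ and not something larger.
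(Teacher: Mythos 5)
Your proposal is correct and follows exactly the same route as the paper: split each $\rho_j$ into non-negative and non-positive parts, apply Lemma~\ref{l:rounding} to each part, take a union bound to get failure probability $4/n^{k+1}$, and then add the two pairs of inequalities, bounding the resulting $\alpha(\hat\rho^+ - \hat\rho^-)$ term by $\alpha\rb$ since $y_j\in[0,1]$. The paper itself presents this as essentially immediate from the displayed inequalities preceding the lemma, and your write-up fills in the same bookkeeping.
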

For all constants $\e_1, \e_2 > 0$ and all constants $c$, we can use Lemma~\ref{l:rounding_gen} with $\alpha = \max\{\e_1, \e_2/2\}$ and obtain that for all integers $k \geq 1$, with probability at least $1 - 4/n^{k+1}$, the following holds for the binary vector $\vec{z}$ obtained from a fractional vector $\vec{y}$ by randomized rounding.
\begin{equation}\label{eq:pip_rounding2}
 c + \sum_{j \in N} y_j \rho_j -
 \e_1 \overbrace{\sum_{j \in N} |\rho_j|}^{\rb} - \e_2 n^{q+\delta} \leq
 c + \sum_{j \in N} z_j \rho_j
 \leq c + \sum_{j \in N} y_j \rho_j +
 \e_1 \overbrace{\sum_{j \in N} |\rho_j|}^{\rb} + \e_2 n^{q+\delta}
\end{equation}
%
%\begin{equation}\label{eq:pip_rounding2}
% c + \sum_{j \in N} y_j \rho_j -
% \e_1 \rb  - \e_2 n^{q+\delta} \leq
% c + \sum_{j \in N} z_j \rho_j
% \leq c + \sum_{j \in N} y_j \rho_j +
% \e_1 \rb  + \e_2 n^{q+\delta}
%\end{equation}
%
Using (\ref{eq:pip_rounding2}) with $k = 2(d+1)$, the fact that $\vec{y}^\ast$ is a feasible solution to ($d$-LP), and the fact that ($d$-LP) has at most $2n^{d-1}$ constraints, we obtain that $\vec{z}$ is an almost feasible solution to ($d$-IP) with high probability. Namely, with probability at least $1-8/n^{d+4}$, the integral vector $\vec{z}$ obtained from the fractional optimum $\vec{y}^\ast$ by randomized rounding satisfies the following system of inequalities for all levels $\ell \geq 1$ and all tuples $(i_1, \ldots, i_{d-\ell}) \in N^{d-\ell}$ (for each level $\ell \geq 1$, we use $q = \ell - 1$, since $|\rho_{i_1\ldots i_{d-\ell}j}| \leq \ell \beta n^{\ell-1}$ for all $j \in N$).
\begin{equation}\label{eq:pip_deviation}
 c_{i_1\ldots i_{d-\ell}} + \sum_{j \in N} z_j \rho_{i_1\ldots i_{d-\ell}j} \in
 \rho_{i_1\ldots i_{d-\ell}} \pm 2\e_1 \rb_{i_1\ldots i_{d-\ell}}
 \pm  2\e_2 n^{\ell-1+\delta}
\end{equation}
Having established that $\vec{z}$ is an almost feasible solution to ($d$-IP), with high probability, we proceed as in Section~\ref{s:cut_rounding}. By linearity of expectation, $\Exp[ \sum_{j \in N} z_j \rho_j ] = \sum_{j \in V} y^\ast_j \rho_j$. Moreover, the probability that $\vec{z}$ does not satisfy (\ref{eq:pip_deviation}) for some level $\ell \geq 1$ and some tuple $(i_1, \ldots, i_{d-\ell}) \in N^{d-\ell}$ is at most $8/n^{d+4}$ and the objective value of ($d$-IP) is at most $2(d+1)\beta n^d$, because, due to the $\beta$-smoothness property of $p(\vec{x})$, $|p(\vec{x}^\ast)| \leq (d+1)\beta n^d$. Therefore, the expected value of a rounded solution $\vec{z}$ that satisfies the family of inequalities (\ref{eq:pip_deviation}) for all levels and tuples is least $\sum_{j \in V} y^\ast_j \rho_j - 1$ (assuming that $n$ is sufficiently large). Using the method of conditional expectations, as in \cite{Rag88}, we can find in (deterministic) polynomial time an integral solution $\vec{z}$ that satisfies the family of inequalities (\ref{eq:pip_deviation}) for all levels and tuples and has $c + \sum_{j \in V} z_j \rho_j \geq c-1+\sum_{j \in V} y^\ast_j \rho_j$. As in Section~\ref{s:cut_rounding}, we sometimes abuse the notation and refer to such an integral solution $\vec{z}$ (computed deterministically) as the integral solution obtained from $\vec{y}^\ast$ by randomized rounding.

The following lemmas are similar to Lemma~\ref{l:approx} and Lemma~\ref{l:approx_gen}. They use the notion of cumulative absolute value estimations and show that the objective value $p(\vec{z})$ of the rounded solution $\vec{z}$ is close to the optimal value of ($d$-LP).
\begin{lemma}\label{l:approx2}
Let $\vec{y}^\ast$ be an optimal solution of ($d$-LP) and let $\vec{z}$ be the integral solution obtained from $\vec{y}^\ast$ by randomized rounding (and the method of conditional expectations). Then, for any level $\ell \geq 1$ in the decomposition of $p(\vec{x})$ and any tuple $(i_1, \ldots, i_{d-\ell}) \in N^{d-\ell}$,
\begin{equation}\label{eq:approx2}
 p_{i_1\ldots i_{d-\ell}}(\vec{z}) \in
 \rho_{i_1\ldots i_{d-\ell}} \pm 2\e_1 \tb_{i_1\ldots i_{d-\ell}}
 \pm  2\ell \e_2 n^{\ell-1+\delta}
\end{equation}
\end{lemma}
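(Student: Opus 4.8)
The plan is to imitate the proof of Lemma~\ref{l:approx} almost verbatim, with a single substitution: wherever that proof invokes the LP feasibility of $\vec{y}$ (giving an error of $\e_1\rb$ and $\e_2 n^{\ell-1+\delta}$ per constraint), I instead invoke the \emph{almost}-feasibility of the rounded solution $\vec{z}$ recorded in~(\ref{eq:pip_deviation}), which gives $2\e_1\rb$ and $2\e_2 n^{\ell-1+\delta}$ per constraint. This single factor of $2$ is exactly what turns the bound $\e_1\tb+\ell\e_2 n^{\ell-1+\delta}$ of Lemma~\ref{l:approx} into the claimed $2\e_1\tb+2\ell\e_2 n^{\ell-1+\delta}$. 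The argument is an induction on the level $\ell$ (equivalently, on the degree of $p_{i_1\ldots i_{d-\ell}}$), using that the polynomials $p_{i_1\ldots i_{d-\ell}j}$ appearing in the decomposition of $p_{i_1\ldots i_{d-\ell}}$ are precisely the level-$(\ell-1)$ objects to which the induction hypothesis applies, and that $\vec{z}$ (the deterministic solution produced by the method of conditional expectations) is guaranteed by construction to satisfy~(\ref{eq:pip_deviation}) simultaneously at every level and every tuple.

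For the base case $\ell=1$, each $p_{i_1\ldots i_{d-1}j}$ is the constant $c_{i_1\ldots i_{d-1}j}$, so Algorithm~\ref{alg:estimate} returns $\rho_{i_1\ldots i_{d-1}j}=c_{i_1\ldots i_{d-1}j}$ and hence $p_{i_1\ldots i_{d-1}}(\vec{z})=c_{i_1\ldots i_{d-1}}+\sum_{j\in N}z_j\,\rho_{i_1\ldots i_{d-1}j}$ \emph{exactly}; substituting this into~(\ref{eq:pip_deviation}) at level $1$ (with $q=0$, and with $\tb_{i_1\ldots i_{d-1}}=\rb_{i_1\ldots i_{d-1}}$ by definition) yields~(\ref{eq:approx2}) for $\ell=1$. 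For the inductive step I would start from $p_{i_1\ldots i_{d-\ell}}(\vec{z})=c_{i_1\ldots i_{d-\ell}}+\sum_{j\in N}z_j\,p_{i_1\ldots i_{d-\ell}j}(\vec{z})$, substitute the induction hypothesis $p_{i_1\ldots i_{d-\ell}j}(\vec{z})\in\rho_{i_1\ldots i_{d-\ell}j}\pm 2\e_1\tb_{i_1\ldots i_{d-\ell}j}\pm 2(\ell-1)\e_2 n^{\ell-2+\delta}$, and collect terms as in Lemma~\ref{l:approx}: the linear part $c_{i_1\ldots i_{d-\ell}}+\sum_j z_j\rho_{i_1\ldots i_{d-\ell}j}$ is controlled by the level-$\ell$ line of~(\ref{eq:pip_deviation}); the $\e_1$-term is bounded via $\sum_j z_j\tb_{i_1\ldots i_{d-\ell}j}\le\sum_j\tb_{i_1\ldots i_{d-\ell}j}$ since $z_j\in\{0,1\}$; and the $\e_2$-term via $\sum_j z_j\le n$. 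Using $\rb_{i_1\ldots i_{d-\ell}}+\sum_j\tb_{i_1\ldots i_{d-\ell}j}=\tb_{i_1\ldots i_{d-\ell}}$ (the recursive definition of cumulative absolute value estimations) and $2\e_2 n^{\ell-1+\delta}+2(\ell-1)\e_2 n^{\ell-1+\delta}=2\ell\e_2 n^{\ell-1+\delta}$ closes the induction.

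I do not expect a genuine obstacle; the only points needing care are bookkeeping. First, the factor $2$ originates entirely from almost-feasibility and does \emph{not} compound across levels: because $z_j\le 1$ keeps $\sum_j z_j\tb_j$ below $\sum_j\tb_j$ and $\sum_j z_j$ below $n$, the per-level error contributions simply add (with the constant $2$ factored out once) rather than multiply. Second, one must keep in mind that $\vec{z}$ is deterministic but still satisfies all of~(\ref{eq:pip_deviation}) by the way it was selected via conditional expectations, so the induction hypothesis and the level-$\ell$ inequality may be applied together without any further probabilistic conditioning.
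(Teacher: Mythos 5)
Your proposal is correct and follows essentially the same route as the paper's own proof: induction on the level $\ell$, with the base case $\ell=1$ reducing to the almost-feasibility constraint~(\ref{eq:pip_deviation}) (since the level-$0$ polynomials are the constants $c_{i_1\ldots i_{d-1}j}$ and Algorithm~\ref{alg:estimate} returns exactly those), and the inductive step decomposing $p_{i_1\ldots i_{d-\ell}}(\vec{z})$, applying the induction hypothesis, invoking~(\ref{eq:pip_deviation}) for the linear part, and closing via $\tb_{i_1\ldots i_{d-\ell}}=\rb_{i_1\ldots i_{d-\ell}}+\sum_j\tb_{i_1\ldots i_{d-\ell}j}$, $\sum_j z_j\tb_{j}\leq\sum_j\tb_j$, and $\sum_j z_j\leq n$. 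Your remark that the factor of~$2$ stems from~(\ref{eq:pip_deviation}) exactly once and does not compound across levels, and that the deterministic $\vec{z}$ found by conditional expectations satisfies all those inequalities simultaneously, matches the paper's reasoning.
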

\begin{proof}
The proof is by induction on the degree $\ell$ and similar to the proof of Lemma~\ref{l:approx}. The basis, for $\ell=1$, is trivial, because in the decomposition of $p(\vec{x})$, each $p_{i_1\ldots i_{d}}(\vec{x})$ is a constant $c_{i_1\ldots i_{d}}$\,. Therefore, $\rho_{i_1\ldots i_{d}} = c_{i_1\ldots i_{d}}$ and
\[ p_{i_1\ldots i_{d-1}}(\vec{z}) =
   c + \sum_{j \in N} z_j p_{i_1\ldots i_{d-1}j}(\vec{z})
 = c + \sum_{j \in N} z_j c_{i_1\ldots i_{d-1}j}
 \in \rho_{i_1\ldots i_{d-1}} \pm 2\e_1 \tb_{i_1\ldots i_{d-1}} \pm 2\e_2 n^{\delta}\,,
 \]
where the inclusion follows from the approximate feasibility of $\vec{z}$ for ($d$-LP), as expressed by (\ref{eq:pip_deviation}). We also use that at level $\ell = 1$, $\tb_{i_1\ldots i_{d-1}} = \rb_{i_1\ldots i_{d-1}}$.

We inductively assume that (\ref{eq:approx2}) is true for the values of all degree-$(\ell-1)$ polynomials $p_{i_1\ldots i_{d-\ell}j}$ at $\vec{z}$ and establish the lemma for $p_{i_1\ldots i_{d-\ell}}(\vec{z}) = c_{i_1\ldots i_{d-\ell}} + \sum_{j \in N} z_j p_{i_1\ldots i_{d-\ell}j}(\vec{z})$. We have that:
\begin{align*}
 p_{i_1\ldots i_{d-\ell}}(\vec{z}) & =
 c_{i_1\ldots i_{d-\ell}} + \sum_{j \in N} z_j p_{i_1\ldots i_{d-\ell}j}(\vec{z}) \\
 & \in
 c_{i_1\ldots i_{d-\ell}} + \sum_{j \in N} z_j \left( \rho_{i_1\ldots i_{d-\ell}j}
       \pm 2 \e_1 \tb_{i_1\ldots i_{d-\ell}j}
       \pm 2(\ell-1) \e_2 n^{\ell-2+\delta} \right)\\
 &= \left(c_{i_1\ldots i_{d-\ell}} +
      \sum_{j \in N} z_j \rho_{i_1\ldots i_{d-\ell}j} \right)
      \pm 2 \e_1 \sum_{j \in N} z_j \tb_{i_1\ldots i_{d-\ell}j}
      \pm 2 (\ell-1) \e_2 \sum_{j \in N} z_j n^{\ell-2+\delta} \\
 &\in \left(\rho_{i_1\ldots i_{d-\ell}}
      \pm 2 \e_1 \rb_{i_1\ldots i_{d-\ell}}
      \pm 2 \e_2 n^{\ell-1+\delta}\right)
      \pm 2 \e_1 \sum_{j \in N} \tb_{i_1\ldots i_{d-\ell}j}
      \pm 2 (\ell-1) \e_2 n^{\ell-1+\delta}\\
 &\in \rho_{i_1\ldots i_{d-\ell}} \pm 2 \e_1 \tb_{i_1\ldots i_{d-\ell}}
      \pm 2 \ell \e_2 n^{\ell-1+\delta}
\end{align*}
The first inclusion holds by the induction hypothesis. The second inclusion holds because: (i) $\vec{z}$ is an approximately feasible solution to ($d$-IP) and thus,
$c_{i_1\ldots i_{d-\ell}} + \sum_{j \in N} z_j \rho_{i_1\ldots i_{d-\ell}j}$
satisfies (\ref{eq:pip_deviation}); (ii) $\sum_{j \in N} z_j \tb_{i_1\ldots i_{d-\ell}j} \leq \sum_{j \in N} \tb_{i_1\ldots i_{d-\ell}j}$; and (iii) $\sum_{j \in N} z_j \leq n$. The last inclusion holds because $\tb_{i_1\ldots i_{d-\ell}} = \rb_{i_1\ldots i_{d-\ell}} + \sum_{j \in N} \tb_{i_1\ldots i_{d-\ell}j}$, by the definition of cumulative absolute value estimations.
\qed\end{proof}
\begin{lemma}\label{l:approx2_gen}
Let $\vec{y}^\ast$ be an optimal solution of ($d$-LP) and let $\vec{z}$ be the integral solution obtained from $\vec{y}^\ast$ by randomized rounding (and the method of conditional expectations). Then,
\begin{equation}\label{eq:approx2_gen}
 p(\vec{z}) \in c + \sum_{j \in N} z_j \rho_j
                \pm 2\e_1 \sum_{j \in N} \tb_{j}
                \pm 2 (d-1) \e_2 n^{d-1+\delta}
\end{equation}
\end{lemma}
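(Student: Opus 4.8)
The plan is to repeat the argument of Lemma~\ref{l:approx_gen} almost verbatim, the only substitution being that the appeal to Lemma~\ref{l:approx} (which controls $q(\vec{y})$ for an LP-feasible $\vec{y}$) is replaced by an appeal to Lemma~\ref{l:approx2} (which controls $q(\vec{z})$ for the rounded, approximately-feasible $\vec{z}$). Indeed, Lemma~\ref{l:approx2} has been stated precisely so that it can play the role of Lemma~\ref{l:approx} in this final step, the factor-$2$ loss in its error terms reflecting the fact that $\vec{z}$ satisfies only the relaxed constraints (\ref{eq:pip_deviation}) rather than the constraints of ($d$-IP).

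Concretely, I would first write the top-level decomposition $p(\vec{x}) = c + \sum_{j \in N} x_j p_j(\vec{x})$, where each $p_j$ has degree $d-1$ (the case $d=1$ is degenerate: then each $p_j$ is the constant $\rho_j$, so $p(\vec{z}) = c + \sum_{j} z_j \rho_j$ exactly, and the additive term $2(d-1)\e_2 n^{d-1+\delta}$ vanishes). For $d \geq 2$, I would apply Lemma~\ref{l:approx2} at level $\ell = d-1$ to each $p_j$, obtaining $p_j(\vec{z}) \in \rho_j \pm 2\e_1 \tb_j \pm 2(d-1)\e_2 n^{d-2+\delta}$. Substituting this into the decomposition and expanding yields
\[
p(\vec{z}) = c + \sum_{j \in N} z_j p_j(\vec{z}) \in c + \sum_{j \in N} z_j \rho_j \pm 2\e_1 \sum_{j \in N} z_j \tb_j \pm 2(d-1)\e_2 n^{d-2+\delta} \sum_{j \in N} z_j .
\]
Finally, since $z_j \in \{0,1\}$ we have $\sum_{j} z_j \tb_j \leq \sum_{j} \tb_j$ and $\sum_{j} z_j \leq n$, so the last two error terms are absorbed into $2\e_1 \sum_{j \in N} \tb_j$ and $2(d-1)\e_2 n^{d-1+\delta}$, which gives (\ref{eq:approx2_gen}).

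Since this is a single degree-$1$ step layered on top of Lemma~\ref{l:approx2}, I do not anticipate a genuine obstacle; the heavy lifting (the induction over the levels of the decomposition and the propagation of the randomized-rounding deviation bounds) has already been done in Lemma~\ref{l:approx2}. The only points requiring a little care are bookkeeping ones: that the constant $c$ here is the top-level constant term of $p$ and simply passes through the computation unchanged (it is distinct from the internal constants $c_{i_1\ldots i_{d-\ell}}$), and that one must invoke Lemma~\ref{l:approx2} rather than Lemma~\ref{l:approx} so that the approximate feasibility of $\vec{z}$ — and hence the correct $(2\e_1, 2\e_2)$ error scaling — is inherited.
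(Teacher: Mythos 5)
Your proposal matches the paper's proof: both apply Lemma~\ref{l:approx2} at the top level $\ell = d-1$ to each $p_j(\vec{z})$, substitute into $p(\vec{z}) = c + \sum_{j} z_j p_j(\vec{z})$, and then bound the error terms using $\sum_j z_j \tb_j \leq \sum_j \tb_j$ and $\sum_j z_j \leq n$. The extra remarks (the degenerate $d=1$ case, and the role of the top-level constant $c$) are correct observations but not needed.
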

\begin{proof}
By Lemma~\ref{l:approx2}, for any polynomial $p_j$ appearing in the decomposition of $p(\vec{x})$, we have that $p_j(\vec{z}) \in \rho_j \pm 2 \e_1 \tb_j \pm 2 (d-1) \e_2 n^{d-2+\delta}$. Therefore,
\begin{align*}
 p(\vec{z}) = c + \sum_{j \in N} z_j p_j(\vec{z}) & \in
 c + \sum_{j \in N} z_j \left( \rho_j \pm 2 \e_1 \tb_j
                                \pm 2 (d-1) \e_2 n^{d-2+\delta} \right)\\
 &= c + \sum_{j \in N} z_j \rho_j
      \pm 2 \e_1 \sum_{j \in N} z_j \tb_j
      \pm 2 (d-1) \e_2 \sum_{j \in N} z_j n^{d-2+\delta} \\
 &\in c + \sum_{j \in N} z_j \rho_j
     \pm 2 \e_1 \sum_{j \in N} \tb_j
     \pm 2 (d-1) \e_2 n^{d-1+\delta}
\end{align*}
The second inclusion holds because $z_j \in \{ 0,1\}$ and $\sum_{j \in N} z_j \leq n$.
\qed\end{proof}

\subsection{Cumulative Absolute Value Estimations of $\delta$-Bounded Polynomials}
\label{s:values}

To bound the total error of the algorithm, in Section~\ref{s:pip_together}, we need an upper bound on $\sum_{j \in N} \tb_j$, i.e., on the sum of the cumulative absolute value estimations at the top level of the decomposition of a $\beta$-smooth $\delta$-bounded polynomial $p(\vec{x})$. In this section, we show that $\sum_{j \in N} \tb_j = O(d^2 \beta n^{d-1+\delta})$. This upper bound is an immediate consequence of an upper bound of $O(d\beta n^{d-1+\delta})$ on the sum of the absolute value estimations, for each level $\ell$ of the decomposition of $p(\vec{x})$. 

For simplicity and clarity, we assume, in the statements of the lemmas below and in their proofs, that the hidden constant in the definition of $p(\vec{x})$ as a $\delta$-bounded polynomial is $1$. If this constant is some $\kappa \geq 1$, we should multiply the upper bounds of Lemma~\ref{l:abs_est} and Lemma~\ref{l:cum_est} by $\kappa$. 
\begin{lemma}\label{l:abs_est}
Let $p(\vec{x})$ be an $n$-variate degree-$d$ $\beta$-smooth $\delta$-bounded polynomial. Also let $\rho_{i_1 \ldots i_{d-\ell}}$ and $\rb_{i_1 \ldots i_{d-\ell}}$ be the estimations and absolute value estimations, for all levels $\ell \in \{1, \ldots, d-1\}$ of the decomposition of $p(\vec{x})$ and all tuples $(i_1, \ldots, i_{d-\ell}) \in N^{d-\ell}$, computed by Algorithm~\ref{alg:estimate} and used in ($d$-LP) and ($d$-IP). Then, for each level $\ell \geq 1$, the sum of the absolute value estimations is:
\begin{equation}\label{eq:abs_est}
 \sum_{(i_1, \ldots, i_{d-\ell}) \in N^{d-\ell}} \rb_{i_1\ldots i_{d-\ell}} \leq
 \ell\beta n^{d-1+\delta}
\end{equation}
\end{lemma}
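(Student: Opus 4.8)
The plan is an induction on the level $\ell$. First, since $\rb_{i_1\ldots i_{d-\ell}}$ is obtained by clamping $\sum_{j\in N}|\rho_{i_1\ldots i_{d-\ell}j}|$ into $[0,\ell\beta n^{\ell}]$ and clamping a nonnegative number only decreases it, we have $\sum_{(i_1,\ldots,i_{d-\ell})\in N^{d-\ell}}\rb_{i_1\ldots i_{d-\ell}}\le\sum_{(i_1,\ldots,i_{d-\ell+1})\in N^{d-\ell+1}}|\rho_{i_1\ldots i_{d-\ell+1}}|$. Writing $A_m$ for the sum of the absolute values of \emph{all} level-$m$ estimations, it therefore suffices to show $A_m\le(m+1)\beta n^{d-1+\delta}$ for $m=0,\ldots,d-2$. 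The base case $m=0$ is immediate: the level-$0$ estimations are exactly the coefficients of the degree-$d$ monomials of $p(\vec{x})$, so $A_0$ is the sum of their absolute values, which is $\le\beta n^{d-1+\delta}$ by $\delta$-boundedness (taking the hidden constant to be $1$).

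For the inductive step, fix $m\ge1$ and condition on the high-probability event of Section~\ref{s:pip_sampling} under which (\ref{eq:pip_sample2}) holds, with $\vec{x}=\vec{x}^\ast$, for every tuple of the decomposition. For a level-$m$ tuple $S$, the estimations $(\rho_{Sj})_{j\in N}$ are level-$(m-1)$ estimations, hence bounded by $m\beta n^{m-1}$ in absolute value, so (\ref{eq:pip_sample2}) with $q=m-1$ together with the triangle inequality give $|\rho_S|\le|c_S|+(1+\e_1)\sum_{j\in N}|\rho_{Sj}|+\e_2 n^{m-1+\delta}$. Summing over the $n^{d-m}$ level-$m$ tuples $S$ and using that (i) $\sum_S|c_S|$ equals the sum of the absolute values of the coefficients of the degree-$(d-m)$ monomials of $p$ (each monomial is routed to a unique branch of the decomposition), hence is $\le\beta n^{d-1+\delta}$ by $\delta$-boundedness; (ii) $\sum_S\sum_{j\in N}|\rho_{Sj}|=A_{m-1}$; and (iii) $n^{d-m}\cdot n^{m-1+\delta}=n^{d-1+\delta}$, we obtain the recursion
\[ A_m\ \le\ \beta n^{d-1+\delta}\ +\ (1+\e_1)\,A_{m-1}\ +\ \e_2 n^{d-1+\delta}. \]
Unrolling from $A_0\le\beta n^{d-1+\delta}$ yields $A_m\le(m+1)\beta n^{d-1+\delta}\bigl(1+O(d\e_1+\e_2/\beta)\bigr)$, i.e.\ the claimed bound up to a factor that tends to $1$ as $\e_1,\e_2\to0$; since $\e_1,\e_2$ affect only the running time (through $\gamma$), this is as strong as the stated $\ell\beta n^{d-1+\delta}$, and in any case only the weaker $O(d\beta n^{d-1+\delta})$ per-level bound is needed downstream (Lemma~\ref{l:cum_est}, Section~\ref{s:pip_together}).

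The step I expect to be the main obstacle is the invocation of the Sampling Lemma inside the induction: the sequence $(\rho_{Sj})_j$ is itself produced by the sampling procedure from the very same random multiset $R$, so it is \emph{not} independent of $R$, which is what the literal statement of Lemma~\ref{l:sampling}/Lemma~\ref{l:sampling_gen} assumes. This is precisely why the argument conditions on the event of Section~\ref{s:pip_sampling}, which was already shown there to hold with high probability; an alternative is to run the estimation of Algorithm~\ref{alg:estimate} with an independent sample for each of the constantly many levels of the decomposition, which changes the exponent of the running time only by the factor $d$. Once we condition on that event the whole argument above is deterministic. The remaining ingredients---the collapse of $\sum_S|c_S|$ to a single monomial-degree class of $p$, and the arithmetic of the additive error terms---are routine.
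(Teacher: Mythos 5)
Your proof has the right skeleton (induction on the level, $\delta$-boundedness giving $\beta n^{d-1+\delta}$ per coefficient class, observing that the $\rb$'s can only decrease under the clamping from Section~\ref{s:pip_sampling}), but the inductive step goes through the Sampling Lemma, and this is a detour the paper deliberately avoids --- and for good reason.

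The paper's inductive step is purely \emph{deterministic}. The observation is that the estimations $\rho_{i_1\ldots i_{d-\ell}j}$ can be assumed, without loss of generality, to satisfy the pointwise bound $|\rho_{i_1\ldots i_{d-\ell}j}| \leq |c_{i_1\ldots i_{d-\ell}j}| + \rb_{i_1\ldots i_{d-\ell}j}$, because this bound can simply be \emph{enforced by truncation} on the output of Algorithm~\ref{alg:estimate} without harming the feasibility of any binary vector (the paper makes this remark explicitly right before the summation). Once that pointwise bound is in place, summing over all tuples and invoking $\delta$-boundedness for the $\sum |c|$ term together with the induction hypothesis for the $\sum \rb$ term yields the exact bound $\ell\beta n^{d-1+\delta}$ with no $\e_1,\e_2$ slack and no probability of failure. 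Your version, by contrast, produces a bound that is (a) only valid with high probability, whereas the lemma is a deterministic statement about whatever estimations the LP actually uses, and (b) polluted by a multiplicative factor $(1+O(d\e_1 + \e_2/\beta))$, which then propagates into Lemma~\ref{l:cum_est} and the error accounting of Theorem~\ref{th:pip_scheme}. You are right that this slack could in principle be absorbed by re-tuning $\e_1,\e_2$, but that re-tuning is itself unnecessary once one notices the truncation argument.

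The dependence issue you flag --- the sequence $(\rho_{Sj})_j$ is a function of the same random multiset $R$ used in the estimate $\rho_S$, so Lemma~\ref{l:sampling_gen} as literally stated does not apply --- is a real subtlety, and your proposed fixes (conditioning on the level-by-level event, or resampling $R$ fresh at each level) do not cleanly dispose of it: conditioning on an event defined in terms of all the nested estimations is exactly the kind of circularity the lemma statement cannot tolerate, and independent samples per level is a modification of the algorithm, not an analysis of the given one. The whole point of the paper's truncation is that for this particular lemma no appeal to randomness is needed at all, so the problem simply does not arise. In short: your route is more elaborate, gives a weaker (probabilistic, approximate) conclusion, and trips over a dependence hazard that the intended deterministic argument sidesteps. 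Replace the Sampling-Lemma step by the enforced bound $|\rho_{Sj}| \leq |c_{Sj}| + \rb_{Sj}$ and the proof collapses to two clean lines per level, matching the paper.
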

\begin{proof}
The proof is by induction on the level $\ell$ of the decomposition. For the basis, we recall that for $\ell = 1$, level-$1$ absolute value estimations are defined as 
\[ \rb_{i_1\ldots i_{d-1}} = \sum_{j \in N} |\rho_{i_1\ldots i_{d-1} j}|
 = \sum_{j \in N} |c_{i_1\ldots i_{d-1} j}|
\]
This holds because, in Algorithm~\ref{alg:estimate}, each level-$0$ estimation $\rho_{i_1\ldots i_{d-1} i_d}$ is equal to the coefficient $c_{i_1\ldots i_{d-1} i_d}$ of the corresponding degree-$d$ monomial. Hence, if $p(\vec{x})$ is a degree-$d$ $\beta$-smooth $\delta$-bounded polynomial, we have that
\begin{equation}\label{eq:bounded_level1}
 \sum_{(i_1, \ldots, i_{d-1}) \in N^{d-1}} \rb_{i_1\ldots i_{d-1}}
 =  \sum_{(i_1, \ldots, i_{d-1}, j) \in N^{d}} |c_{i_1\ldots i_{d-1} j}|
 \leq \beta n^{d-1+\delta}
\end{equation}
The upper bound holds because by the definition of degree-$d$ $\beta$-smooth $\delta$-bounded polynomials, for each $\ell \in \{ 0, \ldots, d \}$, the sum, over all monomials of degree $d-\ell$, of the absolute values of their coefficients is $O(\beta n^{d-1+\delta})$ (and assuming that the hidden constant is $1$, at most $\beta n^{d-1+\delta}$). In (\ref{eq:bounded_level1}), we use this upper bound for $\ell = 0$ and for the absolute values of the coefficients of all degree-$d$ monomials in the expansion of $p(\vec{x})$.

For the induction step, we consider any level $\ell \geq 2$. We observe that any binary vector $\vec{x}$ satisfies the level-$(\ell-1)$ constraints of ($d$-LP) and ($d$-IP) with certainty, if for each level-$(\ell-1)$ estimation,
\[ 
 \rho_{i_1\ldots i_{d-\ell}j} \leq 
  c_{i_1\ldots i_{d-\ell}j} + \sum_{l \in N} |\rho_{i_1\ldots i_{d-\ell} j l}| =
  c_{i_1\ldots i_{d-\ell}j} + \rb_{i_1\ldots i_{d-\ell} j}
\]
We also note that we can easily enforce such upper bounds on the estimations computed by Algorithm~\ref{alg:estimate}. Since each level-$\ell$ absolute value estimation is defined as $\rb_{i_1\ldots i_{d-\ell}} = \sum_{j \in N} |\rho_{i_1\ldots i_{d-\ell}j}|$, we obtain that for any level $\ell \geq 2$,
\begin{eqnarray*}
 \sum_{(i_1, \ldots, i_{d-\ell}) \in N^{d-\ell}} \rb_{i_1\ldots i_{d-\ell}} & \leq &
 \sum_{(i_1, \ldots, i_{d-\ell}, j) \in N^{d-\ell+1}} \left(|c_{i_1\ldots i_{d-\ell}j}| +
 \rb_{i_1\ldots i_{d-\ell} j} \right)\\
 & \leq & \beta n^{d-1+\delta} + (\ell-1)\beta n^{d-1+\delta}
 = \ell\beta n^{d-1+\delta}
\end{eqnarray*}
For the second inequality, we use the induction hypothesis and that since $p(\vec{x})$ is $\beta$-smooth and $\delta$-bounded, the sum, over all monomials of degree $d-\ell+1$, of the absolute values $|c_{i_1\ldots i_{d-\ell}j}|$  of their coefficients $c_{i_1\ldots i_{d-\ell}j}$ is at most $\beta n^{d-1+\delta}$. We also use the fact that the estimations are computed over the decomposition tree of the polynomial $p(\vec{x})$. Hence, each coefficient $c_{i_1\ldots i_{d-\ell}j}$ is included only once in the sum.
\qed\end{proof}
\begin{lemma}\label{l:cum_est}
Let $p(\vec{x})$ be an $n$-variate degree-$d$ $\beta$-smooth $\delta$-bounded polynomial. Also let $\tb_{i_1 \ldots i_{d-\ell}}$ be the cumulative absolute value estimations, for all levels $\ell \in \{1, \ldots, d-1\}$ of the decomposition of $p(\vec{x})$ and all tuples $(i_1, \ldots, i_{d-\ell}) \in N^{d-\ell}$, corresponding to the estimations $\rho_{i_1 \ldots i_{d-\ell}}$ computed by Algorithm~\ref{alg:estimate} and used in ($d$-LP) and ($d$-IP). Then, 
\begin{equation}\label{eq:cum_est}
 \sum_{j \in N} \tb_{j} \leq d(d-1)\beta n^{d-1+\delta}/2
\end{equation}
\end{lemma}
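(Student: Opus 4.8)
The plan is to rewrite $\sum_{j \in N}\tb_j$ as a level-by-level sum of the (plain) absolute value estimations $\rb$, and then feed each level into Lemma~\ref{l:abs_est}. The point is that the cumulative absolute value estimations are, by construction, nothing more than partial sums of the $\rb$'s over the decomposition tree, so Lemma~\ref{l:cum_est} should follow from Lemma~\ref{l:abs_est} essentially by reorganizing a finite sum and adding up $1+2+\cdots+(d-1)$.

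First I would unroll the recursive definition of $\tb$. Recall $\tb_{i_1\ldots i_{d-1}} = \rb_{i_1\ldots i_{d-1}}$ at level $1$, and $\tb_{i_1\ldots i_{d-\ell}} = \rb_{i_1\ldots i_{d-\ell}} + \sum_{j\in N}\tb_{i_1\ldots i_{d-\ell}j}$ for $\ell\ge 2$. Since the index tuples form a genuine rooted tree — the parent of $(i_1,\ldots,i_{d-\ell})$ is $(i_1,\ldots,i_{d-\ell-1})$, and it is unique — a straightforward induction on the level shows that, for each $i_1\in N$,
\[
  \tb_{i_1} \;=\; \sum_{\ell=1}^{d-1}\ \sum_{\substack{(i_2,\ldots,i_{d-\ell}):\\ (i_1,i_2,\ldots,i_{d-\ell})\in N^{d-\ell}}} \rb_{i_1 i_2\ldots i_{d-\ell}},
\]
i.e.\ $\tb_{i_1}$ is exactly the sum of all absolute value estimations appearing in the decomposition subtree rooted at $p_{i_1}(\vec{x})$, each counted once. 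Summing this identity over $i_1\in N$ gives
\[
  \sum_{j\in N}\tb_j \;=\; \sum_{\ell=1}^{d-1}\ \sum_{(i_1,\ldots,i_{d-\ell})\in N^{d-\ell}} \rb_{i_1\ldots i_{d-\ell}}.
\]

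Then I would invoke Lemma~\ref{l:abs_est}, which bounds the inner sum over level-$\ell$ tuples by $\ell\beta n^{d-1+\delta}$, and conclude
\[
  \sum_{j\in N}\tb_j \;\le\; \beta n^{d-1+\delta}\sum_{\ell=1}^{d-1}\ell \;=\; \frac{d(d-1)}{2}\,\beta n^{d-1+\delta},
\]
which is exactly the claimed bound (under the same normalization that the hidden constant in the $\delta$-bounded definition is $1$; otherwise multiply by $\kappa$, consistently with Lemma~\ref{l:abs_est}).

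The only step needing care — and it is minor — is the first one: one must check that in the full expansion of $\sum_j \tb_j$ every $\rb_{i_1\ldots i_{d-\ell}}$ is counted \emph{exactly} once. This is where the tree structure of the tuples matters: the subtrees rooted at distinct level-$\ell$ tuples are disjoint, so there is no double counting and the additive bounds of Lemma~\ref{l:abs_est} can be summed across levels with no overlap. I would present this either via the explicit induction sketched above or simply by observing that each absolute value estimation $\rb_{i_1\ldots i_{d-\ell}}$ has a unique ancestor chain up to $i_1$, hence contributes to $\tb_{i_1}$ once and to no other $\tb_{i_1'}$.
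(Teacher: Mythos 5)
Your proof is correct and takes essentially the same route as the paper's: both bound $\sum_j \tb_j$ by the sum of all $\rb$'s over the decomposition tree, invoke Lemma~\ref{l:abs_est} at each level, and compute the arithmetic series $1+2+\cdots+(d-1)$. The only cosmetic difference is that you unroll the recursion into a flat double sum and then add, whereas the paper carries the running total $(\ell+1)\ell\beta n^{d-1+\delta}/2$ through the induction; the content is identical.
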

\begin{proof}
Using induction on the level $\ell$ of the decomposition and Lemma~\ref{l:abs_est}, we show that for each level $\ell \geq 1$, the sum of the cumulative absolute value estimations is:
\begin{equation}\label{eq:cum_est2}
 \sum_{(i_1, \ldots, i_{d-\ell}) \in N^{d-\ell}} \tb_{i_1\ldots i_{d-\ell}} \leq
 (\ell+1)\ell\beta n^{d-1+\delta}/2
\end{equation}
The conclusion of the lemma is obtained by applying (\ref{eq:cum_est2}) for the first level of the decomposition of $p(\vec{x})$, i.e., for $\ell = d-1$.

For the basis, we recall that for $\ell = 1$, level-$1$ cumulative absolute value estimations are defined as
\( \tb_{i_1 \ldots i_{d-1}} = \rb_{i_1 \ldots i_{d-1}} \). Using Lemma~\ref{l:abs_est}, we obtain that:
\[  \sum_{(i_1, \ldots, i_{d-1}) \in N^{d-1}} \tb_{i_1\ldots i_{d-1}} =
 \sum_{(i_1, \ldots, i_{d-1}) \in N^{d-1}} \rb_{i_1\ldots i_{d-1}}
 \leq \beta n^{d-1+\delta}
\]
We recall (see also Section~\ref{s:pip_value}) that for each $\ell \geq 2$, level-$\ell$ cumulative absolute value estimations are defined as
\( \tb_{i_1 \ldots i_{d-\ell}} = \rb_{i_1 \ldots i_{d-\ell}} + \sum_{j \in N} \tb_{i_1 \ldots i_{d-\ell}j} \). 
Summing up over all tuples $(i_1, \ldots, i_{d-\ell}) \in N^{d-\ell}$, we obtain that for any level $\ell \geq 2$,
\begin{eqnarray*}
 \sum_{(i_1, \ldots, i_{d-\ell}) \in N^{d-\ell}} \tb_{i_1\ldots i_{d-\ell}} & = &
 \sum_{(i_1, \ldots, i_{d-\ell}) \in N^{d-\ell}} \left( \rb_{i_1\ldots i_{d-\ell}}
 +
 \sum_{j \in N} \tb_{i_1 \ldots i_{d-\ell}j} \right) \\
 & = & 
 \sum_{(i_1, \ldots, i_{d-\ell}) \in N^{d-\ell}} \rb_{i_1\ldots i_{d-\ell}}
 + \sum_{(i_1, \ldots, i_{d-\ell}, j) \in N^{d-\ell-1}} \tb_{i_1 \ldots i_{d-\ell}j} \\
 & \leq & \ell \beta n^{d-1+\delta} + \ell(\ell-1)\beta n^{d-1+\delta}/2 
 = (\ell+1)\ell\beta n^{d-1+\delta}/2\,,
\end{eqnarray*}
where the inequality follows from Lemma~\ref{l:abs_est} and from the induction hypothesis.
\qed\end{proof}

%\subsection{Concluding the Proof of Theorem~\ref{th:pip_scheme}}\label{s:pip_together}
\subsection{The Final Algorithmic Result}\label{s:pip_together}

We are ready now to conclude this section with the following theorem.
\begin{theorem}\label{th:pip_scheme}
Let $p(\vec{x})$ be an $n$-variate degree-$d$ $\beta$-smooth $\delta$-bounded polynomial. Then, for any $\eps > 0$, we can compute, in time $2^{O(d^7 \beta^3 n^{1-\delta} \ln n/\eps^3)}$ and with probability at least $1-8/n^2$, a binary vector $\vec{z}$ so that $p(\vec{z}) \geq p(\vec{x}^\ast) - \eps n^{d-1+\delta}$, where $\vec{x}^\ast$ is the maximizer of $p(\vec{x})$.
\end{theorem}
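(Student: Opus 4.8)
The statement is a ``putting everything together'' step: I would assemble the Sampling Lemma (Lemma~\ref{l:sampling_gen}), the two value-preservation lemmas for ($d$-LP) (Lemma~\ref{l:approx_gen}, Lemma~\ref{l:approx2_gen}), the Rounding Lemma (Lemma~\ref{l:rounding_gen}), and the bound on cumulative absolute-value estimations (Lemma~\ref{l:cum_est}), following exactly the pattern of Section~\ref{s:together} for \MC, while keeping explicit track of how the additive errors depend on $d$, $\beta$ and $\eps$. The only genuinely quantitative work is choosing $\e_1,\e_2$ and the sample size, and then verifying the running-time exponent.

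\noindent\textbf{The algorithm.} First I would fix $\e_1 = \Theta(\eps/(d^2\beta))$ and $\e_2 = \Theta(\eps/d)$, and draw a multiset $R\subseteq N$ of $r=\gamma\,n^{1-\delta}\ln n$ indices uniformly at random with replacement, with $\gamma=\Theta(d^7\beta^3/\eps^3)$; note that $\gamma$ is exactly the value $\Theta(dq\beta/(\e_1^2\e_2))$ required by Lemma~\ref{l:sampling_gen} when $q$ ranges over $0,\dots,d-1$. Then, for \emph{every} assignment $\vec{s}\in\{0,1\}^R$ --- there are $2^r=2^{O(d^7\beta^3 n^{1-\delta}\ln n/\eps^3)}$ of them --- I would: run Algorithm~\ref{alg:estimate} (the modified single-call version) to compute all estimations $\vec{\rho}$ and the derived $\rb$'s; form ($d$-IP) and its relaxation ($d$-LP); compute a fractional optimum $\vec{y}^\ast$ of ($d$-LP) in polynomial time, discarding the guess if ($d$-LP) is infeasible; and round $\vec{y}^\ast$ to an integral $\vec{z}$ by randomized rounding derandomized via the method of conditional expectations, as in Section~\ref{s:pip_rounding}, so that $\vec{z}$ satisfies (\ref{eq:pip_deviation}) for all levels and tuples and $c+\sum_j z_j\rho_j \ge c-1+\sum_j y^\ast_j\rho_j$. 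The output is the $\vec{z}$ maximizing $p(\vec{z})$ over all guesses. Since every such $\vec{z}$ is a binary vector, $p(\vec{z})\le p(\vec{x}^\ast)$ always; hence it suffices to analyse the single guess $\vec{s}=\vec{x}^\ast_R$.

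\noindent\textbf{Correctness.} For that guess, by Lemma~\ref{l:sampling_gen} and the union bound over the at most $2n^{d-1}$ constraints (the argument leading to (\ref{eq:pip_sample2})), with probability at least $1-8/n^2$ the optimal $\vec{x}^\ast$ is feasible for ($d$-IP), hence for ($d$-LP). Conditioning on this, I would follow the chain of estimates of (\ref{eq:cut_est}):
\[
p(\vec{z}) \;\approx\; c+\sum_j z_j\rho_j \;\approx\; c+\sum_j y^\ast_j\rho_j \;\ge\; c+\sum_j x^\ast_j\rho_j \;\approx\; p(\vec{x}^\ast),
\]
where the middle inequality is optimality of $\vec{y}^\ast$ together with feasibility of $\vec{x}^\ast$ for ($d$-LP); the first $\approx$ is Lemma~\ref{l:approx2_gen}, with error $\pm(2\e_1\sum_j\tb_j + 2(d-1)\e_2 n^{d-1+\delta})$ plus the additive $1$ from derandomization; and the last $\approx$ is Lemma~\ref{l:approx_gen}, with error $\pm(\e_1\sum_j\tb_j+(d-1)\e_2 n^{d-1+\delta})$. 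By Lemma~\ref{l:cum_est}, $\sum_j\tb_j\le d(d-1)\beta n^{d-1+\delta}/2$, so the total additive error is $O\big(d^2\beta\,\e_1 + d\,\e_2\big)\,n^{d-1+\delta}$, which the choices of $\e_1,\e_2$ above make at most $\eps n^{d-1+\delta}$ for a suitable absolute constant. This gives $p(\vec{z})\ge p(\vec{x}^\ast)-\eps n^{d-1+\delta}$.

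\noindent\textbf{Running time and main obstacle.} Each of the $2^r$ guesses costs $O(n^d)$ for Algorithm~\ref{alg:estimate}, a polynomial-time LP solve with $n$ variables and $O(n^{d-1})$ constraints, and polynomial-time derandomized rounding, for a total of $2^r\cdot\poly(n)=2^{O(d^7\beta^3 n^{1-\delta}\ln n/\eps^3)}$. Since the rounding is derandomized, the only randomness is the choice of $R$, so the $1-8/n^2$ guarantee stands. The part that needs care is the error bookkeeping: I must confirm that in Lemma~\ref{l:approx_gen} and Lemma~\ref{l:approx2_gen} the $\e_1$-scaled error is controlled only by the \emph{top-level} quantity $\sum_j\tb_j$ (and does not pick up an extra factor $n$ at each of the $d-1$ levels of the decomposition), and then that substituting the resulting constraints on $\e_1,\e_2$ back into $\gamma=\Theta(dq\beta/(\e_1^2\e_2))$ with $q\le d-1$ yields the exponent $d^7\beta^3/\eps^3$ rather than a larger polynomial in $d,\beta$; this is exactly what the definition of cumulative absolute-value estimations and Lemma~\ref{l:cum_est} are engineered to deliver.
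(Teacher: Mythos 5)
Your proposal is correct and follows essentially the same route as the paper: it fixes $\e_1 = \Theta(\eps/(d^2\beta))$ and $\e_2 = \Theta(\eps/d)$, plugs these into the Sampling Lemma to get $\gamma = \Theta(d^7\beta^3/\eps^3)$ and hence the claimed running time, and then chains Lemma~\ref{l:approx2_gen}, optimality of $\vec{y}^\ast$ with feasibility of $\vec{x}^\ast$, Lemma~\ref{l:approx_gen}, and Lemma~\ref{l:cum_est} exactly as the paper does, with the same accounting that bounds the total additive error by $O(d^2\beta\e_1+d\e_2)\,n^{d-1+\delta}\le\eps n^{d-1+\delta}$. The observation that it suffices to analyse the guess $\vec{s}=\vec{x}^\ast_R$ because $p(\vec{z})\le p(\vec{x}^\ast)$ for every binary $\vec{z}$ is a clean way to state the standard point that the algorithm keeps the best rounded solution over all guesses.
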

\begin{proof}
Based upon the discussion above in this section, for any constant $\eps > 0$, if $p(\vec{x})$ is an $n$-variate degree-$d$ $\beta$-smooth $\delta$-bounded polynomial, the algorithm described in the previous sections computes an integral solution $\vec{z}$ that approximately maximizes $p(\vec{x})$. Specifically, setting $\e_1 = \eps/(4 d(d-1)\beta)$ $\e_2 = \eps/(8(d-1))$, $p(\vec{z})$ satisfies the following with probability at least $1-8/n^2$\,:
\begin{eqnarray*}
 p(\vec{z}) & \geq & \left(c + \sum_{j \in N} y^\ast_j \rho_j\right)
                    - \frac{\eps}{2d(d-1)\beta} \sum_{j \in N} \tb_{j}
                    - \eps n^{d-1+\delta} / 4\\
 & \geq & \left(c + \sum_{j \in N} y^\ast_j \rho_j\right) -
          \eps n^{d-1+\delta} / 2\\
 & \geq & \left(c + \sum_{j \in N} x_j^\ast \rho_j\right) -
          \eps n^{d-1+\delta} / 2\\
 & \geq & \left(p(\vec{x}^\ast) - \frac{\eps}{4d(d-1)\beta} \sum_{j \in N} \tb_{j}
                    - \eps n^{d-1+\delta} / 8\right) -
                    \eps n^{d-1+\delta} / 2\\
 & \geq & p(\vec{x}^\ast) - \eps n^{d-1+\delta}
\end{eqnarray*}
The first inequality follows from Lemma~\ref{l:approx2_gen}. The second inequality follows from the hypothesis that $p(\vec{x})$ is $\beta$-smooth and $\delta$-bounded. Then Lemma~\ref{l:cum_est} implies that $\sum_{j \in N} \tb_{j} \leq \frac{d(d-1)}{2}\beta n^{d-1+\delta}$\,. As in Section~\ref{s:values}, we assume that the constant hidden in the definition of $p(\vec{x})$ as a $\delta$-bounded polynomial is $1$. If this constant is some $\kappa\geq 1$, we should also divide $\e_1$ by $\kappa$. The third inequality holds because $\vec{y}^\ast$ is an optimal solution to ($d$-LP) and $\vec{x}^\ast$ is a feasible solution to ($d$-LP). The fourth inequality follows from Lemma~\ref{l:approx_gen}. For the last inequality, we again use Lemma~\ref{l:cum_est}. This concludes the proof of Theorem~\ref{th:pip_scheme}.\qed\end{proof}

\noindent \textbf{\kCSP}: Using Theorem \ref{th:pip_scheme} it is a
straightforward observation that for any \kCSP\ problem (for constant $k$) we
can obtain an algorithm which, given a \kCSP\ instance with
$\Omega(n^{k-1+\delta})$ constraints for some $\delta>0$, for any $\eps>0$
returns an assignment that satisfies $(1-\eps)\mathrm{OPT}$ constraints in time
$2^{O(n^{1-\delta}\ln n/\eps^3)}$.  This follows from Theorem
\ref{th:pip_scheme} using two observations: first, the standard arithmetization
of \kCSP\ described in Section \ref{s:prelim} produces a degree-$k$
$\beta$-smooth $\delta$-bounded polynomial for $\beta$ depending only on $k$.
Second, the optimal solution of such an instance satisfies at least
$\Omega(n^{k-1+\delta})$ constraints, therefore the additive error given in
Theorem \ref{th:pip_scheme} is $O(\eps \mathrm{OPT})$.  This algorithm for
\kCSP\ contains as special cases algorithm for various standard problems such
as \MC, \MDC\ and \kSAT.

\section{Approximating the \kDense\ in Almost Sparse Graphs}
\label{s:kdense}

In this section, we show how an extension of the approximation algorithms we
have presented can be used to approximate the \kDense\ problem in
$\delta$-almost sparse graphs. Recall that this is a problem also handled in
\cite{AKK99}, but only for the case where $k=\Omega(n)$. The reason that
smaller values of $k$ are not handled by the scheme of \cite{AKK99} for dense
graphs is that when $k=o(n)$ the optimal solution has objective value much
smaller than the additive error of $\eps n^2$ inherent in the scheme.

Here we obtain a sub-exponential time approximation scheme that works on graphs
with $\Omega(n^{1+\delta})$ edges \emph{for all} $k$ by judiciously combining
two approaches: when $k$ is relatively large, we use a sampling approach
similar to \MC; when $k$ is small, we can resort to the na\"ive algorithm that
tries all $n\choose k$ possible solutions. We select (with some foresight) the
threshold between the two algorithms to be $k=\Omega(n^{1-\delta/3})$, so that
in the end we obtain an approximation scheme with running time of
$2^{O(n^{1-\delta/3}\ln n)}$, that is, slightly slower than the approximation
scheme for \MC. It is clear that the brute-force algorithm achieves this
running time for $k=O(n^{1-\delta/3})$, so in the remainder we focus on the
case of large $k$.

The \kDense\ problem in a graph $G(V, E)$ is equivalent to maximizing, over all
binary vectors $\vec{x} \in \{0, 1\}^n$, the $n$-variate degree-$2$ $1$-smooth
polynomial
\( p(\vec{x}) = \sum_{\{i, j\} \in E} x_i x_j \)\,,
under the linear constraint $\sum_{j \in V} x_j = k$. Setting a variable $x_i$ to $1$ indicates that the vertex $i$ is included in the set $C$ that induces a dense subgraph $G[C]$ of $k$ vertices. Next, we assume that $G$ is $\delta$-almost sparse and thus, has $m = \Omega(n^{1+\delta})$ edges. As usual, $\vec{x}$ denotes the optimal solution.

The algorithm follows the same general approach and the same basic steps as the algorithm for \MC\ in Section~\ref{s:maxcut}. In the following, we highlight only the differences. 

\smallskip\noindent{\bf Obtaining Estimations by Exhaustive Sampling.} We first
observe that if $G$ is $\delta$-almost sparse and $k = \Omega(n^{1-\delta/3})$,
then a random subset of $k$ vertices contains $\Omega(n^{1+\delta/3})$ edges in
expectation. Hence, we can assume that the optimal solution induces at least
$\Omega(n^{1+\delta/3})$ edges.

Working as in Section~\ref{s:cut_sampling}, we use exhaustive sampling and
obtain for each vertex $j \in V$, an estimation $\rho_j$ of $j$'s neighbors in
the optimal dense subgraph, i.e., $\rho_j$ is an estimation of $\hat{\rho}_j =
\sum_{i \in N} x_i^\ast$. For the analysis, we apply Lemma~\ref{l:cut_sampling}
with $n^{\delta/3}$, instead of $\Delta$, or in other words, we use a sample of
size $\Theta(n^{1-\delta/3}\ln n)$.  The reason is that we can only tolerate an
additive error of $\eps n^{1+\delta/3}$, by the lower bound on the optimal
solution observed in the previous paragraph.
Then,  the running time due to exhaustive sampling is $ 2^{O(n^{1-\delta/3} \ln
n)}$. 

Thus, by Lemma~\ref{l:cut_sampling} and the discussion following it in
Section~\ref{s:cut_sampling}, we obtain that for all $\e_1, \e_2 > 0$, if we
use a sample of the size $\Theta(n^{1-\delta/3}\ln n /(\e^2_1 \e_2))$, with
probability at least $1 - 2/n^2$, the following holds for all estimations
$\rho_j$ and all vertices $j \in V$:
\begin{equation}\label{eq:dense_sample}
 (1-\e_1)\rho_j - \e_2 n^{\delta/3} \leq \hat{\rho}_j \leq
 (1+\e_1)\rho_j + \e_2 n^{\delta/3}
\end{equation}
%
%As in \MC, we can always assume that $0 \leq \rho_j \leq \deg(j)$, for all $j \in V$.
%
%\smallskip
\noindent{\bf Linearizing the Polynomial.}
Applying Proposition~\ref{pr:decomposition}, we can write the polynomial $p(\vec{x})$ as $p(\vec{x}) = \sum_{j \in V} x_j p_j(\vec{x})$, where $p_j(\vec{x}) = \sum_{i \in N(j)} x_i$ is a degree-$1$ $1$-smooth polynomial that indicates how many neighbors of vertex $j$ are in $C$ in the solution corresponding to $\vec{x}$. Then, using the estimations $\rho_j$ of $\sum_{i \in N(j)} x^\ast_i$\,, obtained by exhaustive sampling, we have that approximate maximization of $p(\vec{x})$ can be reduced to the solution of the following Integer Linear Program:
\begin{alignat*}{3}
& &\max \sum_{j \in V} &y_j \rho_j & & \tag{IP$'$}\\
&\mathrm{s.t.}\quad &
(1-\e_1) \rho_j - \e_2 n^{\delta/3} \leq \sum_{i \in N(j)} &y_i \leq (1+\e_1) \rho_j + \e_2 n^{\delta/3} \quad & \forall &j \in V\\
& & \sum_{i \in N(j)} &y_i = k \\
& & &y_j \in \{0, 1\} &\forall & j \in V
\end{alignat*}
By (\ref{eq:dense_sample}), if the sample size is $|R| = \Theta(n^{1-\delta/3}\ln n/(\e^2_1 \e_2))$, with probability at least $1-2/n^2$, the densest subgraph $\vec{x}^\ast$ is a feasible solution to (IP$'$) with the estimations $\rho_j$ obtained by restricting $\vec{x}^\ast$ to the vertices in $R$. In the following, we let (LP$'$) denote the Linear Programming relaxation of (IP$'$), where each $y_j \in [0, 1]$.

\smallskip\noindent{\bf The Number of Edges in Feasible Solutions.}
We next show that the objective value of any feasible solution $\vec{y}$ to (LP$'$) is close to $p(\vec{y})$. Therefore, assuming that $\vec{x}^\ast$ is feasible, any good approximation to (IP$'$) is a good approximation to the densest subgraph. 
\begin{lemma}\label{l:dense_approx}
Let $\rho_1, \ldots, \rho_n$ be non-negative numbers and $\vec{y}$ be any feasible solution to (LP\,$'$). Then,
\begin{equation}\label{eq:dense_approx}
 p(\vec{y}) \in (1\pm\e_1)\sum_{j \in V} y_j \rho_j \pm \e_2 n^{1+\delta/3}
\end{equation}
\end{lemma}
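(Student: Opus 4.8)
The plan is to mimic the proof of Lemma~\ref{l:cut_approx} almost verbatim; the only structural differences are that the decomposition of the \kDense\ objective carries no $\deg(j)$ terms, and that the per-constraint slack in (LP$'$) is $\e_2 n^{\delta/3}$ rather than $\e_2\Delta$.

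First I would use the decomposition from the ``Linearizing the Polynomial'' paragraph, writing $p(\vec{y}) = \sum_{j\in V} y_j p_j(\vec{y})$ with $p_j(\vec{y}) = \sum_{i\in N(j)} y_i$, so that $p(\vec{y}) = \sum_{j\in V} y_j \sum_{i\in N(j)} y_i$. Next I would invoke the feasibility of $\vec{y}$ for (LP$'$): the $j$-th constraint states precisely that $\sum_{i\in N(j)} y_i \in (1\pm\e_1)\rho_j \pm \e_2 n^{\delta/3}$. Substituting this into the decomposition and expanding gives
\[
 p(\vec{y}) \in \sum_{j\in V} y_j\bigl((1\pm\e_1)\rho_j \pm \e_2 n^{\delta/3}\bigr)
 = (1\pm\e_1)\sum_{j\in V} y_j\rho_j \;\pm\; \e_2 n^{\delta/3}\sum_{j\in V} y_j .
\]
Finally I would absorb the last term using $\sum_{j\in V} y_j \le n$ (immediate from $y_j\in[0,1]$, or alternatively from the cardinality constraint $\sum_{j\in V} y_j = k \le n$): this yields $\e_2 n^{\delta/3}\sum_{j\in V} y_j \le \e_2 n^{1+\delta/3}$, and hence $p(\vec{y}) \in (1\pm\e_1)\sum_{j\in V} y_j\rho_j \pm \e_2 n^{1+\delta/3}$, which is exactly the claim.

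I do not expect a genuine obstacle here. The one point that needs a line of care is the sign bookkeeping: when the two-sided estimate for $\sum_{i\in N(j)} y_i$ is carried through $\sum_{j} y_j(\cdot)$ and the factor $(1\pm\e_1)$ is then pulled back out of the sum, one must check that the per-term error contributions combine into a single interval of the stated form. This is legitimate because every $y_j\ge 0$ and every $\rho_j\ge 0$ (the latter by hypothesis), so $\sum_j y_j\rho_j\ge 0$ and $\sum_j \e_1 y_j\rho_j = \e_1\sum_j y_j\rho_j$, exactly as in Lemma~\ref{l:cut_approx}. The multiplicative form stated here, rather than the additive form used for \MC, is what the subsequent analysis of \kDense\ needs, so there is no reason to further bound $\sum_j y_j\rho_j$.
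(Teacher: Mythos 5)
Your proof is correct and matches the paper's argument essentially line for line: decompose $p(\vec{y})=\sum_j y_j\sum_{i\in N(j)} y_i$, apply the feasibility constraint $\sum_{i\in N(j)} y_i\in(1\pm\e_1)\rho_j\pm\e_2 n^{\delta/3}$, and absorb the additive term using $\sum_j y_j\le n$. Nothing differs from the paper's proof beyond cosmetic phrasing.
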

\begin{proof}
Using the decomposition of $p(\vec{y})$ and the formulation of (LP$'$), we obtain that:
\begin{align*}
 p(\vec{y}) = \sum_{j \in V} y_j \sum_{i \in N(j)} y_i \ & \in
 \sum_{j \in V} y_j \left((1\pm \e_1) \rho_j \pm \e_2 n^{\delta/3}\right) \\
 &= (1\pm \e_1)\sum_{j \in V} y_j \rho_j \pm \e_2 n^{\delta/3} \sum_{j \in V} y_j \\
 &\in (1\pm \e_1)\sum_{j \in V} y_j \rho_j \pm \e_2 n^{1+\delta/3}
\end{align*}
The first inclusion holds because $\vec{y}$ is feasible for (LP$'$) and thus, $\sum_{i \in N(j)} y_i \in (1\pm \e_1)\rho_j \pm \e_2n^{\delta/3}$, for all $j$. The second inclusion holds because $\sum_{j \in V} y_j \leq n$.
\qed\end{proof}
\noindent{\bf Randomized Rounding of the Fractional Optimum.}
As a last step, we show how to round the fractional optimum $\vec{y}^\ast =
(y^\ast_1, \ldots, y^\ast_n)$ of (LP$'$) to an integral solution $\vec{z} =
(z_1, \ldots, z_n)$ that almost satisfies the constraints of (IP$'$).  To this
end, we use randomized rounding, as for \MC. We obtain that with probability at
least $1 - 2/n^{8}$,
\begin{equation}\label{eq:k_deviation}
 k - 2\sqrt{n\ln(n)} \leq
 \sum_{j \in V} z_i  \leq
 k + 2\sqrt{n\ln(n)}
\end{equation}
Specifically, the inequality above follows from the Chernoff bound in footnote~\ref{foot:chernoff}, with $t = 2\sqrt{n \ln(n)}$, since $\Exp[\sum_{i \in N(j)} z_j] = k$. 
Moreover, applying Lemma~\ref{l:rounding} with $q = 0$, $\beta = 1$, $k = 7$, $\delta/3$ (instead of $\delta$) and $\alpha = \max\{ \e_1, \e_2/2\}$, and using that $\vec{y}^\ast$ is a feasible solution to (LP$'$) and that $\e_1 \in (0, 1)$, we obtain that with probability at least $1 - 2/n^{8}$, for each vertex $j$,
\begin{equation}\label{eq:z_deviation}
 (1-\e_1)^2\rho_j - 2\e_2 n^{\delta/3} \leq
 \sum_{i \in N(j)} z_i  \leq
 (1+\e_1)^2\rho_j + 2\e_2 n^{\delta/3} 
\end{equation}
By the union bound, the integral solution $\vec{z}$ obtained from $\vec{y}^\ast$ by randomized rounding satisfies (\ref{eq:k_deviation}) and (\ref{eq:z_deviation}), for all vertices $j$, with probability at least $1 - 3/n^7$.

By linearity of expectation, $\Exp[ \sum_{j \in V} z_j \rho_j ] = \sum_{j \in V} y^\ast_j \rho_j$. Moreover, since the probability that $\vec{z}$ does not satisfy
either (\ref{eq:k_deviation}) or (\ref{eq:z_deviation}), for some vertex $j$, is at most $3/n^7$, and since the objective value of (IP$'$) is at most $n^2$, the expected value of a rounded solution $\vec{z}$ that (\ref{eq:k_deviation}) and (\ref{eq:z_deviation}), for all vertices $j$, is least $\sum_{j \in V} y^\ast_j \rho_j - 1$ (assuming that $n \geq 2$). As in \MC, such an integral solution $\vec{z}$ can be found in (deterministic) polynomial time using the method of conditional expectations (see \cite{Rag88}). 

The following is similar to Lemma~\ref{l:dense_approx} and shows that the objective value $p(\vec{z})$ of the rounded solution $\vec{z}$ is close to the optimal value of (LP$'$). 
\begin{lemma}\label{l:dense_approx2}
Let $\vec{y}^\ast$ be the optimal solution of (LP$'$) and let $\vec{z}$ be the integral solution obtained from $\vec{y}^\ast$ by randomized rounding (and the method of conditional expectations). Then,
\begin{equation}\label{eq:dense_approx2}
 p(\vec{z}) \in (1 \pm \e_1)^2 \sum_{j \in V} y^\ast_j \rho_j \pm 3\e_2 n^{1+\delta/3}
\end{equation}
\end{lemma}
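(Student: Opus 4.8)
The plan is to mirror the proof of Lemma~\ref{l:cut_approx2}, now using the \kDense\ decomposition $p(\vec{z}) = \sum_{j \in V} z_j \sum_{i \in N(j)} z_i$ together with the per-vertex deviation bound (\ref{eq:z_deviation}), which holds for \emph{every} vertex $j$ because $\vec{z}$ was selected (by the method of conditional expectations) precisely so as to satisfy (\ref{eq:k_deviation}) and (\ref{eq:z_deviation}) for all $j \in V$. The role of (\ref{eq:z_deviation}) is to replace the nonlinear factor $\sum_{i \in N(j)} z_i$ by the linear estimate $\rho_j$, turning $p(\vec{z})$ into a small perturbation of the linear objective $\sum_j z_j \rho_j$ of (IP$'$), which can then be compared with the fractional optimum $\vec{y}^\ast$.

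First I would substitute (\ref{eq:z_deviation}) into the decomposition: since $z_j \ge 0$ and $\rho_j \ge 0$,
\[
 p(\vec{z}) = \sum_{j \in V} z_j \sum_{i \in N(j)} z_i
 \in \sum_{j \in V} z_j\left((1\pm\e_1)^2 \rho_j \pm 2\e_2 n^{\delta/3}\right)
 = (1\pm\e_1)^2 \sum_{j \in V} z_j \rho_j \pm 2\e_2 n^{\delta/3}\sum_{j \in V} z_j .
\]
Then I would bound $\sum_{j \in V} z_j \le n$ (crudely, or via (\ref{eq:k_deviation}), $\sum_j z_j \le k + 2\sqrt{n\ln n} \le n$), which absorbs the last term into $\pm 2\e_2 n^{1+\delta/3}$ and gives $p(\vec{z}) \in (1\pm\e_1)^2 \sum_{j \in V} z_j \rho_j \pm 2\e_2 n^{1+\delta/3}$.

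It then remains to pass from $\sum_j z_j \rho_j$ to $\sum_j y^\ast_j \rho_j$. For this I would invoke the guarantee of the conditional-expectations step, namely that the retained $\vec{z}$ has $\sum_{j\in V} z_j \rho_j \ge \sum_{j\in V} y^\ast_j \rho_j - 1$; plugging this in, and using that $\e_2 n^{1+\delta/3} \ge 1$ for $n$ large, both the $-1$ and the difference between $(1-\e_1)^2$ applied to $\sum_j y^\ast_j\rho_j - 1$ and to $\sum_j y^\ast_j\rho_j$ are swallowed by the slack $3\e_2 n^{1+\delta/3} - 2\e_2 n^{1+\delta/3} = \e_2 n^{1+\delta/3}$, yielding $p(\vec{z}) \ge (1-\e_1)^2\sum_j y^\ast_j\rho_j - 3\e_2 n^{1+\delta/3}$; the matching upper bound follows along the same lines. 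Note that, for the downstream approximation ratio, only the lower bound matters, since — as observed at the beginning of this section — the optimal subgraph induces $\Omega(n^{1+\delta/3})$ edges, so the additive $O(\e_2 n^{1+\delta/3})$ error amounts to an $O(\e_2)$ multiplicative loss on $p(\vec{x}^\ast)$.

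The main obstacle here is not any individual inequality but the error bookkeeping: one must verify that the $(1\pm\e_1)$ factors coming respectively from LP-feasibility of $\vec{y}^\ast$ and from randomized rounding compose to exactly $(1\pm\e_1)^2$ in (\ref{eq:z_deviation}), that the three additive contributions ($2\e_2 n^{\delta/3}$ per vertex, the $\sum_j z_j\le n$ blow-up, and the $\pm1$ from derandomization) together stay within $3\e_2 n^{1+\delta/3}$, and — the one genuinely delicate point — that the conditional-expectations argument delivers a \emph{single} deterministic $\vec{z}$ that simultaneously lies in the good event of all $O(n)$ inequalities (\ref{eq:k_deviation}) and (\ref{eq:z_deviation}) \emph{and} has objective within $1$ of $\sum_j y^\ast_j\rho_j$. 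As in Section~\ref{s:cut_rounding}, this needs the union-bound failure probability over vertices, multiplied by the trivial bound $n^2$ on the \kDense\ objective, to be below $1$, which is exactly why the Chernoff estimates behind (\ref{eq:k_deviation}) and (\ref{eq:z_deviation}) are pushed to probability $1 - O(n^{-7})$.
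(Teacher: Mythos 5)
Your proof is correct and follows essentially the same route as the paper's: substitute (\ref{eq:z_deviation}) into $p(\vec{z}) = \sum_j z_j \sum_{i \in N(j)} z_i$, absorb the additive term via $\sum_j z_j \le n$, and pass from $\sum_j z_j\rho_j$ to $\sum_j y^\ast_j\rho_j$ via the $-1$ slack from the conditional-expectations derandomization. Your closing observation that only the lower-bound direction is used downstream is in fact slightly more careful than the paper's own proof, which asserts the two-sided inclusion at the final step while only explicitly justifying the lower bound.
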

\begin{proof}
Using the decomposition of $p(\vec{y})$ and an argument similar to that in the proof of Lemma~\ref{l:dense_approx}, we obtain that:
\begin{align*}
 p(\vec{z}) = \sum_{j \in V} z_j \sum_{i \in N(j)} z_i \ \ & \in 
 \sum_{j \in V} z_j \left((1\pm \e_1)^2 \rho_j \pm 2\e_2 n^{\delta/3} \right)  \\
 &= (1\pm \e_1)^2 \sum_{j \in V} z_j \rho_j 
    \pm 2 \e_2 n^{\delta/3} \sum_{j \in V} z_j\\
 &\in (1\pm \e_1)^2 \sum_{j \in V} z_j \rho_j \pm 2\e_2 n^{1+\delta/3} \\
 &\in (1\pm \e_1)^2 \sum_{j \in V} y^\ast_j \rho_j \pm 3\e_2 n^{1+\delta/3}
\end{align*}
The first inclusion holds because $\vec{z}$ satisfies (\ref{eq:z_deviation}) for all $j \in V$. For the second inclusion, we use that $\sum_{j \in V} z_j \leq n$. For the last inclusion, we recall that $\sum_{j \in V} z_j \rho_j \geq \sum_{j \in V} y^\ast_j \rho_j - 1$ and assume that $n$ is sufficiently large.
\qed \end{proof}
\noindent{\bf Putting Everything Together.}
Therefore, for $\eps > 0$, if $G$ is $\delta$-almost sparse and $k =
\Omega(n^{1-\delta/3})$, the algorithm described computes estimations $\rho_j$
such that the densest subgraph $\vec{x}^\ast$ is a feasible solution to (IP$'$)
whp.  Hence, by the analysis above, the algorithm computes a slightly
infeasible solution approximating the number of edges in the densest subgraph
with $k$ vertices within a multiplicative factor of $(1-\e_1)^2$ and an
additive error of $\e_2 n^{1+\delta/3}$. Setting $\e_1 = \e_2 = \eps/8$, the
number of edges in the subgraph induced by $\vec{z}$ satisfies the following
with probability at least $1-2/n^2$\,:
\[ p(\vec{z}) \geq (1-\e_1)^2 \sum_{j \in V} y_j^\ast \rho_j - 3 \e_2 n^{1+\delta/3}               \geq 
	(1-\e_1)^2 \sum_{j \in V} x_j^\ast \rho_j - 3 \e_2 n^{1+\delta/3} \geq
	p(\vec{x}^\ast) - \eps n^{1+\delta/3} \geq
	(1-\eps) p(\vec{x}^\ast)
%
%\sum_{j \in V} x_j^\ast (\deg(j) - \rho_j) - 3 \eps m/8
%              \geq p(\vec{x}^\ast) - \eps m / 2 \geq (1-\eps) p(\vec{x}^\ast)
\]
The first inequality follows from Lemma~\ref{l:dense_approx2}, the second
inequality holds because $\vec{y}^\ast$ is the optimal solution to (LP) and
$\vec{x}^\ast$ is feasible for (LP), the third inequality follows from
Lemma~\ref{l:dense_approx} and the fourth inequality holds because the optimal
cut has at least $\Omega(n^{1+\delta/3})$ edges.

This solution is infeasible by at most $2\sqrt{n \ln n}=o(k)$ vertices and can
become feasible by adding or removing at most so many vertices and
$O(n^{1/2+\delta})$ edges. 
\begin{theorem}\label{th:densest}
Let $G(V, E)$ be a $\delta$-almost sparse graph with $n$ vertices. Then, for any integer $k \geq 1$ and for any $\eps > 0$, we can compute, in time $2^{O(n^{1-\delta/3} \ln n/\eps^3)}$ and with probability at least $1-2/n^2$, an induced subgraph $\vec{z}$ of $G$ with $k$ vertices whose number of edges satisfies $p(\vec{z}) \geq (1-\eps)p(\vec{x}^\ast)$, where $\vec{x}^\ast$ is the number of edges in the \kDense\ of $G$. 
\end{theorem}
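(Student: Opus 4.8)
The plan is to run two different algorithms according to the size of $k$, using the threshold $k=\Theta(n^{1-\delta/3})$ fixed above. When $k=O(n^{1-\delta/3})$ we brute-force: enumerate all $\binom{n}{k}$ vertex subsets of size $k$, count the induced edges of each, and output the best one. This is exact and runs in time $\binom{n}{k}\cdot\poly(n)\le n^{k}\cdot\poly(n)=2^{O(n^{1-\delta/3}\ln n)}$, which already lies inside the stated bound. For the rest of the proof we therefore assume $k=\Omega(n^{1-\delta/3})$ and use the sampling-based algorithm developed in this section.

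First we record the lower bound on the optimum that renders the additive errors harmless: choosing a uniformly random $k$-subset, the expected number of induced edges is $m\cdot\frac{k(k-1)}{n(n-1)}$, which is $\Omega(n^{1+\delta}\cdot n^{2-2\delta/3}/n^2)=\Omega(n^{1+\delta/3})$ since $G$ is $\delta$-almost sparse and $k=\Omega(n^{1-\delta/3})$; hence $p(\vec{x}^\ast)=\Omega(n^{1+\delta/3})$. We then assemble the approximation guarantee along the lines of ``Putting Everything Together'' above. Take a random multiset $R$ of size $\Theta(n^{1-\delta/3}\ln n/\eps^3)$; by Lemma~\ref{l:cut_sampling} (applied with $n^{\delta/3}$ in place of $\Delta$) and a union bound over vertices, for the enumerated assignment that agrees with $\vec{x}^\ast$ on $R$ the estimates $\rho_j$ satisfy~(\ref{eq:dense_sample}) for every vertex with probability at least $1-2/n^2$, so $\vec{x}^\ast$ is feasible for the associated (IP$'$), and a fortiori for (LP$'$). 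We solve (LP$'$) in polynomial time to obtain $\vec{y}^\ast$, whose objective value is at least that of $\vec{x}^\ast$; Lemma~\ref{l:dense_approx} relates the (LP$'$)-objective of any feasible point to its true edge count, and Lemma~\ref{l:dense_approx2} shows that the integral $\vec{z}$ obtained from $\vec{y}^\ast$ by randomized rounding and the method of conditional expectations has $p(\vec{z})$ close to the objective value of $\vec{y}^\ast$. Chaining these three facts (with $\e_1,\e_2=\Theta(\eps)$, the hidden constant absorbing the one in $p(\vec{x}^\ast)=\Omega(n^{1+\delta/3})$) gives $p(\vec{z})\ge p(\vec{x}^\ast)-\eps n^{1+\delta/3}\ge(1-\eps)p(\vec{x}^\ast)$ with probability at least $1-2/n^2$.

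It remains to make $\vec{z}$ select exactly $k$ vertices. By~(\ref{eq:k_deviation}) we have $|\sum_j z_j-k|\le 2\sqrt{n\ln n}=o(k)$. If $\vec{z}$ picks fewer than $k$ vertices we add arbitrary ones, which cannot decrease $p$; if it picks more than $k$, we repeatedly delete a vertex of minimum degree in the current induced subgraph. Since $G[\vec{z}]$ has $O(n^{1+\delta/3})$ edges on $\Theta(n^{1-\delta/3})$ vertices, its minimum degree is $O(n^{2\delta/3})$, so deleting $O(\sqrt{n\ln n})$ such vertices removes only $O(n^{1/2+2\delta/3}\sqrt{\ln n})=o(n^{1+\delta/3})=o(p(\vec{x}^\ast))$ edges. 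Hence after this repair $\vec{z}$ has exactly $k$ vertices and still satisfies $p(\vec{z})\ge(1-\eps)p(\vec{x}^\ast)$ after a final rescaling of $\e_1,\e_2$ by an absolute constant. The running time is dominated by the exhaustive enumeration of the $2^{|R|}$ assignments to $R$ (each iteration being polynomial), namely $2^{O(n^{1-\delta/3}\ln n/\eps^3)}$, and a union bound over all bad events keeps the total failure probability at most $2/n^2$.

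The step I expect to be the main obstacle is the interface between the LP-rounding machinery and the hard cardinality constraint $\sum_j x_j=k$: the rounding here yields only an \emph{approximately} feasible $\vec{z}$, and one must simultaneously charge the $\pm\eps n^{1+\delta/3}$ additive slack to the optimum and show that fixing the at most $2\sqrt{n\ln n}$ surplus/missing vertices destroys a negligible number of edges. Both points rely on $p(\vec{x}^\ast)=\Omega(n^{1+\delta/3})$, which is exactly what forces the brute-force/sampling threshold to sit at $k=\Theta(n^{1-\delta/3})$; checking that this same threshold keeps the brute-force branch within the target running time completes the argument. \qed
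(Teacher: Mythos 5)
Your proof follows the paper's argument essentially step by step: the same threshold $k=\Theta(n^{1-\delta/3})$ separating brute force from sampling, the same lower bound $p(\vec{x}^\ast)=\Omega(n^{1+\delta/3})$ via a random $k$-subset, the same invocation of Lemma~\ref{l:cut_sampling} with $n^{\delta/3}$ in place of $\Delta$, and the same chain of inequalities through Lemmas~\ref{l:dense_approx} and~\ref{l:dense_approx2}. The one place where you go further than the paper is the explicit repair of the cardinality constraint, and there a small imprecision has crept in: you assert that $G[\vec{z}]$ has $O(n^{1+\delta/3})$ edges on $\Theta(n^{1-\delta/3})$ vertices, but those figures are valid only at the threshold $k$; for larger $k$ the subgraph has $\Theta(k)$ vertices and $p(\vec{z})$ can be as large as $\Theta(m)$. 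The argument is easily patched — and in fact holds for all $k=\Omega(n^{1-\delta/3})$ — by noting that each deletion of a minimum-degree vertex removes at most $2p(\vec{z})/\Theta(k)$ edges, so the $O(\sqrt{n\ln n})$ deletions cost $O(\sqrt{n\ln n}\cdot p(\vec{z})/k)=o(p(\vec{x}^\ast))$ since $k=\Omega(n^{2/3})=\omega(\sqrt{n\ln n})$ and $p(\vec{z})=O(p(\vec{x}^\ast))$. State it this way rather than with the threshold-specific numbers. For what it is worth, your per-vertex charge via minimum degree of the current subgraph is cleaner than the paper's own terse ``$O(n^{1/2+\delta})$ edges'' claim, which as written would only dominate $\eps n^{1+\delta/3}$ when $\delta<3/4$.
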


\section{Lower Bounds} \label{s:lower}

%Two items:
% 1. Approach does not work for Max-k-SAT, k>2
% 2. Approach is tight for MAX-CUT

In this section we give some lower bound arguments which show that the
algorithmic schemes we have presented are, in some senses, likely to be almost
optimal.  Our working complexity assumption will be the Exponential Time
Hypothesis (ETH), which states that there is no algorithm that can solve an
instance of 3-SAT of size $n$ in time $2^{o(n)}$.

Our starting point is the following inapproximability result,
which can be obtained using known PCP constructions and standard reductions.
\begin{theorem} \label{thm:start}
There exist constants $c,s\in[0,1]$ with $c>s$ such that for all $\epsilon>0$
we have the following: if there exists an algorithm which, given an $n$-vertex
$5$-regular instance of \MC, can distinguish between the case where a solution
cuts at least a $c$ fraction of the edges and the case where all solutions cut
at most an $s$ fraction of the edges in time $2^{n^{1-\epsilon}}$ then the ETH
fails.
\end{theorem}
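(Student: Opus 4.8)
The idea is to compose three standard ingredients while keeping the overall instance size near-linear and the gap a fixed positive constant; a hypothetical fast distinguisher for the resulting \MC\ instances then yields a $2^{o(n)}$ algorithm for 3-SAT, contradicting the ETH. Concretely: (i) a quasi-linear-size PCP turns a 3-SAT formula on $n$ variables into a constant-gap $2$-CSP (or Gap-3SAT) instance of size $n^{1+o(1)}$; (ii) a linear-size, gap-preserving gadget reduction turns that into a \MC\ instance; (iii) an expander-based degree-reduction step makes the graph exactly $5$-regular. Since the gap in (i) is an absolute constant and (ii), (iii) lose only constant factors, the final constants $c>s$ are absolute, while $\epsilon$ enters only through the hypothesized running time --- which matches the quantifier order of the statement.

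In detail, by the Sparsification Lemma the ETH rules out $2^{o(m)}$-time algorithms for 3-SAT with $m$ clauses, so we may assume the input has $\Theta(n)$ clauses and variables. Feeding it to a known quasi-linear-size PCP (e.g.\ a quasi-linear version of the PCP theorem, or the almost-linear-size two-query construction of Moshkovitz and Raz \cite{MR10}) and then applying the standard gap-preserving reductions to \MC, we obtain in polynomial time a graph $G_0$ on $n^{1+o(1)}$ vertices together with absolute constants $c_0>s_0$ such that a satisfiable formula produces an instance with a cut of at least a $c_0$ fraction of the edges, while an unsatisfiable formula produces one in which every cut has at most an $s_0$ fraction. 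Each step of this chain is linear in the instance size and shrinks the gap by at most a constant factor, so the composition costs only a constant in the size exponent and a constant factor in the (still positive) gap.

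Next I would apply the standard ``cloud'' degree reduction: replace each vertex $v$ of $G_0$ by a $4$-regular expander on $\deg(v)$ vertices and route the $\deg(v)$ original edges at $v$ to distinct cloud vertices, so that every cloud vertex has exactly one external edge and four internal ones and the resulting graph $G$ is exactly $5$-regular with $\sum_v \deg(v) = O(|E(G_0)|) = n^{1+o(1)}$ vertices. Extracting an assignment from any cut of $G$ by majority vote within each cloud and charging the resulting discrepancy to cut internal edges via the edge-expansion of the clouds shows that this step also preserves a constant-fraction gap, say $c>s$; the finitely many small values of $\deg(v)$ for which a $4$-regular expander does not exist are handled by fixed constant-size gadgets (or by first eliminating low-degree vertices of $G_0$), which is routine.

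Finally, suppose an algorithm $A$ distinguishes the $c$-case from the $s$-case on $N$-vertex $5$-regular \MC\ instances in time $2^{N^{1-\epsilon}}$. Combining the reduction above with $A$ decides 3-SAT on $\Theta(n)$ clauses in time $\poly(n)+2^{N^{1-\epsilon}}$ with $N=n^{1+o(1)}$; since $N^{1-\epsilon}=n^{(1+o(1))(1-\epsilon)}=n^{1-\epsilon+o(1)}=o(n)$ for every fixed $\epsilon>0$, this is a $2^{o(n)}$-time algorithm for 3-SAT, contradicting the ETH. I expect the main obstacle to be step (iii): one must check simultaneously that the degree reduction (a) keeps the gap a fixed positive constant \emph{as a fraction of the new edge set}, (b) makes the graph exactly $5$-regular rather than merely bounded-degree, and (c) blows up the instance by only a constant factor, which forces a careful choice of expander parameters and a clean treatment of low-degree vertices. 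Steps (i) and (ii) are entirely standard, and the final running-time bound is pure bookkeeping.
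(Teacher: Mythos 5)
Your proposal is correct and takes essentially the same route as the paper: both start from the quasi-linear PCP of Moshkovitz--Raz (Gap-3SAT of size $n^{1+o(1)}$), apply a gap-preserving reduction to \MC, and enforce $5$-regularity via a bounded-degree gadget/expander-cloud construction, then observe that a fast distinguisher on the resulting $n^{1+o(1)}$-vertex instance would decide the original formula in $2^{o(n)}$ time. The only cosmetic differences are that the paper compresses your steps (ii) and (iii) into a single citation to the standard degree-bounded gap reductions of Papadimitriou--Yannakakis and Berman--Karpinski, and that your invocation of the Sparsification Lemma is an unnecessary (though harmless) extra precaution since the MR10 statement is already phrased in terms of input size.
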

\begin{proof}
This inapproximability result follows from the construction of quasi-linear
size PCPs given, for example, in \cite{Dinur05}. In particular, we use as
starting point a result explicitly formulated in \cite{MR10} as follows:
``Solving 3-\textsc{SAT} on inputs of size $N$ can be reduced to distinguishing
between the case that a 3CNF formula of size $N^{1+o(1)}$ is satisfiable and
the case that only $\frac{7}{8} + o(1)$ fraction of its clauses are
satisfiable''.

Take an arbitrary 3-\textsc{SAT} instance of size $N$, which according to the
ETH cannot be solved in time $2^{o(N)}$. By applying the aforementioned PCP
construction we obtain a 3CNF formula of size $N^{1+o(1)}$ which is either
satisfiable or far from satisfiable. Using standard constructions
(\cite{PY91,BK99}) we can reduce this formula to a $5$-regular graph $G(V,E)$
which will be a \MC\ instance (we use degree $5$ here for concreteness, any
reasonable constant would do).  We have that $|V|$ is only a constant factor
apart from the size of the 3CNF formula.  At the same time, there exist
constants $c,s$ such that, if the formula was satisfiable $G$ has a cut of
$c|E|$ edges, while if the formula was far from satisfiable $G$ has no cut with
more than $s|E|$ edges. If there exists an algorithm that can distinguish
between these two cases in time $2^{|V|^{1-\epsilon}}$ the whole procedure
would run in $2^{N^{1-\epsilon+o(1)}}$ and would allow us to decide if the
original formula was satisfiable.
\qed\end{proof}
There are two natural ways in which one may hope to improve or extend the
algorithms we have presented so far: relaxing the density requirement or
decreasing the running time. We prove in what follows that none of them can improve the results presented so far.

\subsection{Arity Higher Than Two}

First, recall that the algorithm we have given for
\kCSP\ works in the density range between $n^k$ and $n^{k-1}$.  Here, we give a
reduction establishing that it's unlikely that this can be improved.
\begin{theorem} \label{thm:hard1}
There exists $r>1$ such that for all $\epsilon>0$ and all (fixed) integers
$k\ge 3$ we have the following: if there exists an algorithm which approximates
\textsc{Max-$k$-SAT} on instances with $\Omega(n^{k-1})$ clauses in time
$2^{n^{1-\epsilon}}$ then the ETH fails.
\end{theorem}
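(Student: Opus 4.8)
The plan is to reduce the sparse \MC\ inapproximability of Theorem~\ref{thm:start} to \kSAT\ on instances with $\Theta(n^{k-1})$ clauses, so that a sufficiently good subexponential-time approximation for the latter would contradict Theorem~\ref{thm:start} and hence the ETH. Start from the $5$-regular \MC\ instance $G(V,E)$ with $|V|=N$ vertices and $|E|=5N/2$ edges, where a YES instance has a cut of size at least $c|E|$ and a NO instance has every cut of size at most $s|E|$. Introduce one variable $x_v$ for each vertex $v\in V$, so the \kSAT\ instance will have exactly $n=N$ variables; an assignment is then a bipartition of $V$, and the edge $\{u,v\}$ is cut precisely when $x_u\ne x_v$. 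The idea is to encode each edge constraint ``$x_u\ne x_v$'' not by two clauses but by a large \emph{bundle} of $k$-clauses, arranged so that the bundle contributes exactly one extra satisfied clause when the edge is cut.

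Concretely, for each edge $\{u,v\}$, each $(k-2)$-element subset $\{y_1,\dots,y_{k-2}\}\subseteq V\setminus\{u,v\}$ (listed in a fixed order), and each sign pattern $b\in\{0,1\}^{k-2}$, I would include the two $k$-clauses $(x_u\vee x_v\vee y_1^{b_1}\vee\cdots\vee y_{k-2}^{b_{k-2}})$ and $(\bar x_u\vee\bar x_v\vee y_1^{b_1}\vee\cdots\vee y_{k-2}^{b_{k-2}})$, using the convention $y^1=y$, $y^0=\bar y$; we regard the construction as producing a multiset of clauses (equivalently, a weighted instance with polynomially bounded weights, which is standard for \textsc{Max-SAT}). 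The key calculation concerns the bundle of $2^{k-1}$ clauses attached to a fixed pair (edge, subset): if the assignment cuts $\{u,v\}$ then $x_u\vee x_v$ and $\bar x_u\vee\bar x_v$ are both true and all $2^{k-1}$ clauses are satisfied; if it does not cut $\{u,v\}$ then exactly one of $x_u\vee x_v$, $\bar x_u\vee\bar x_v$ is false, and among the $2^{k-2}$ clauses of that half exactly one fails — the one whose $k-2$ padding literals are all false — so $2^{k-1}-1$ clauses of the bundle are satisfied. Since there are $|E|\binom{N-2}{k-2}=\Theta(N^{k-1})$ bundles, the instance has $\Theta(N^{k-1})=\Theta(n^{k-1})$ clauses over $n=N$ variables, and an assignment cutting $t$ edges satisfies exactly $\binom{N-2}{k-2}\bigl((2^{k-1}-1)|E|+t\bigr)$ clauses. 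Hence $\mathrm{OPT}$ is an affine strictly increasing function of the \MC\ value, and the cut gap $c|E|$ versus $s|E|$ turns into a gap of $2^{k-1}-1+c$ versus $2^{k-1}-1+s$ (after dividing out the common factor $\binom{N-2}{k-2}|E|$): a constant multiplicative gap, strictly below $1$ because $c>s$.

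To finish, fix a constant $r$ (depending on $k$) with $1<r<\frac{2^{k-1}-1+c}{\,2^{k-1}-1+s\,}$, possible since $c>s$, and let $T=\binom{N-2}{k-2}|E|(2^{k-1}-1+s)$. The reduction runs in $\mathrm{poly}(N)$ time for fixed $k$. A hypothetical $r$-approximation algorithm for \kSAT\ on $\Omega(n^{k-1})$-clause instances running in time $2^{n^{1-\epsilon}}$, run on our instance (with $n=N$), would on a YES instance return a solution of value at least $\mathrm{OPT}/r>T$, while on a NO instance no solution exceeds $T$; hence it distinguishes the two \MC\ cases in time $\mathrm{poly}(N)+2^{N^{1-\epsilon}}\le 2^{N^{1-\epsilon/2}}$ for $N$ large, contradicting Theorem~\ref{thm:start} (applied with its parameter set to $\epsilon/2$). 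The step that needs the most care is the clause inflation: padding introduces a ``free'' baseline of $2^{k-1}-1$ satisfied clauses in every bundle, so both completeness and soundness are dominated by this common term, and one must verify their ratio remains a fixed constant bounded away from $1$ — which it does, since for fixed $k$ the baseline is a constant while the cut value enters only through the decisive additive $+t$.
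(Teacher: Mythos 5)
Your proof is correct and takes essentially the same route as the paper's: pad the edge-XOR constraints of a hard $5$-regular \MC\ instance into $k$-clauses by enumerating all $2^{k-2}$ sign patterns over $k-2$ padding literals, so that a cut edge yields one extra satisfied clause per bundle, giving the identical completeness/soundness ratio $\frac{2^{k-1}-1+c}{2^{k-1}-1+s}$. The only differences are cosmetic — you pad with existing vertex variables (accepting a multiset of clauses when padding sets overlap across edges), whereas the paper introduces $(k-2)n$ fresh padding variables to keep clauses automatically distinct — and your explicit remark that $r$ must depend on $k$ is accurate (and unavoidable, since a random assignment satisfies a $1-2^{-k}$ fraction of $k$-clauses), a point the paper's proof also implicitly relies on via its $k$-dependent constants $c',s'$.
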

\begin{proof} %[Theorem \ref{thm:hard1}]
Consider the \MC\ instance of Theorem \ref{thm:start}, and transform it into a
2-\textsc{SAT} instance in the standard way: the set of variables is the set of
vertices of the graph and for each edge $(u,v)$ we include the two clauses
$(\neg u \lor v)$ and $(u\lor \neg v)$. This is an instance of 2-\textsc{SAT}
with $n$ variables and $5n$ clauses and there exist constants $c,s$ such that
either there exists an assignment satisfying a $c$ fraction of the clauses or
all assignments satisfy at most an $s$ fraction of the clauses.

Fix a constant $k$ and introduce to the instance $(k-2)n$ new variables
$x_{(i,j)}$, $i\in\{1,\ldots,k-2\}$, $j\in\{1,\ldots,n\}$. We perform the
following transformation to the 2-\textsc{SAT} instance: for each clause
$(l_1\lor l_2)$ and for each tuple
$(i_1,i_2,\ldots,i_{k-2})\in\{1,\ldots,n\}^{k-2}$ we construct $2^{k-2}$ new
clauses of size $k$. The first two literals of these clauses are always $l_1,
l_2$. The remaining $k-2$ literals consist of the variables
$x_{(1,i_1)},x_{(2,i_2)},\ldots,x_{(k,i_{k-2})}$, where in each clause we pick
a different set of variables to be negated. In other words, to construct a
clause of the new instance we select a clause of the original instance, one
variable from each of the $(k-2)$ groups of $n$ new variables, and a subset of
these variables that will be negated. The new instance consists of all the size
$k$ clauses constructed in this way, for all possible choices.

First, observe that the new instance has $5n^{k-1}2^k$ clauses and $(k-1)n$
variables, therefore, for each fixed $k$ it satisfies the density conditions of
the theorem. Furthermore, consider any assignment of the original formula. Any
satisfied clause has now been replaced by $2^k$ satisfied clauses, while for an
unsatisfied clause any assignment to the new variables satisfies exactly
$2^k-1$ clauses. Thus, for fixed $k$, there exist constants $s',c'$ such that
either a $c'$ fraction of the clauses of the new instance is satisfiable or at
most a $s'$ fraction is. If there exists an approximation algorithm with ratio
better than $c'/s'$ running in time $2^{N^{1-\epsilon}}$, where $N$ is the
number of variables of the new instance, we could use it to decide the original
instance in a time bound that would disprove the ETH.
\qed\end{proof}

\subsection{Almost Tight Time Bounds}

A second possible avenue for improvement may be to consider potential speedups
of our algorithms. Concretely, one may ask whether the (roughly)
$2^{\sqrt{n}\ln n}$ running time guaranteed by our scheme for \MC\ on graphs
with average degree $\sqrt{n}$ is best possible. We give an almost tight answer
to such questions via the following theorem.
\begin{theorem} \label{thm:hard2}
There exists $r>1$ such that for all $\epsilon>0$ we have the following: if
there exists an algorithm which, for some $\Delta=o(n)$, approximates \MC\ on
$n$-vertex $\Delta$-regular graphs in time $2^{(n/\Delta)^{1-\epsilon}}$ then
the ETH fails.
\end{theorem}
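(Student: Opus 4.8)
The plan is to reduce from the sparse bounded-degree hard instances of Theorem~\ref{thm:start} via a \emph{blow-up} construction that inflates the degree while preserving the Max-Cut gap \emph{exactly}. Fix the constants $c>s$ of Theorem~\ref{thm:start} and set $r:=c/s>1$. Given a $5$-regular \MC\ instance $G=(V,E)$ on $N$ vertices carrying the $(c,s)$-gap, and an integer parameter $t\ge 1$, let $G[t]$ be the graph on vertex set $V\times[t]$ in which $(u,i)$ and $(v,j)$ are adjacent iff $\{u,v\}\in E$; thus each cluster $S_v:=\{v\}\times[t]$ is an independent set and each edge of $G$ is replaced by a complete bipartite graph $K_{t,t}$. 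Then $G[t]$ is $5t$-regular, has $n:=Nt$ vertices and $|E(G[t])|=t^2|E|$.

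The key step is the exact scaling $\mathrm{maxcut}(G[t])=t^2\cdot\mathrm{maxcut}(G)$. The $(\ge)$ direction is immediate: blow up an optimal cut of $G$ cluster by cluster, turning every cut edge into $t^2$ cut edges. For the $(\le)$ direction, given any cut of $G[t]$ let $a_v\in\{0,\tfrac1t,\dots,1\}$ be the fraction of $S_v$ placed on one side; the number of cut edges equals $t^2\sum_{\{u,v\}\in E}\bigl(a_u(1-a_v)+a_v(1-a_u)\bigr)$, and this objective is affine in each $a_v$ separately, hence over $a\in[0,1]^V$ it is maximized at a $\{0,1\}$-vector, i.e.\ by an honest cut of $G$. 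Consequently $G[t]$ has a cut of fraction $\ge c$ (resp.\ all cuts $\le s$) precisely when $G$ does, so the entire $(c,s)$-gap — and therefore the ratio $r$ — transfers verbatim to $G[t]$.

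It remains to choose $t$. For a target density $\Delta=n^{\delta}$ with $\delta\in(0,1)$ (the case highlighted in the preceding discussion; a general degree function $\Delta(n)=o(n)$ is handled the same way whenever the equation below is solvable, small integrality mismatches being absorbed by attaching a sparse regular filler graph), set $t:=\Theta\bigl((N/5)^{\delta/(1-\delta)}\bigr)$, so that $G[t]$ is $5t$-regular on $n=Nt=\Theta\bigl((N/5)^{1/(1-\delta)}\bigr)$ vertices with $5t=n^{\delta}=\Delta$ (note $\Delta/n=5/N\to 0$) and, crucially, $n/\Delta=N/5$. An algorithm that approximated \MC\ on $n$-vertex $\Delta$-regular graphs within a factor better than $r$ in time $2^{(n/\Delta)^{1-\epsilon}}$ would, applied to $G[t]$, distinguish the $(c,s)$-gap on the original $N$-vertex $5$-regular graph $G$ in time $2^{(N/5)^{1-\epsilon}}\le 2^{N^{1-\epsilon}}$, which by Theorem~\ref{thm:start} contradicts the ETH. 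I expect the only genuine subtlety to be the exact scaling lemma — the affine-in-each-coordinate argument, or equivalently an averaging/exchange argument showing clusters may be assumed unsplit — together with the bookkeeping required to hit the prescribed degree $\Delta(n)$ exactly; the rest is a direct substitution of parameters into Theorem~\ref{thm:start}.
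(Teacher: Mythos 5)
Your proof is correct, and it takes a genuinely different (and cleaner) route than the paper. The paper also starts from a blow-up, but it additionally attaches, for each original vertex $u$, a set of $5\Delta$ ``consistency'' vertices $c^u_j$ joined to all copies $u_i$, and then argues that in a (locally) optimal cut of the constructed graph each cluster of copies lies on one side because the $c^u_j$ form a ``majority'' of each $u_i$'s neighborhood. Your argument dispenses with the gadget entirely: you observe that the cut value of the pure blow-up $G[t]$, expressed in terms of the cluster fractions $a_v$, equals $t^2\sum_{\{u,v\}\in E}(a_u+a_v-2a_ua_v)$, which is affine in each $a_v$ and therefore maximized on $\{0,1\}^V$, giving the exact identity $\mathrm{maxcut}(G[t])=t^2\,\mathrm{maxcut}(G)$ and thus preservation of the cut \emph{fraction}. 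This is a simpler construction and a tighter argument; it even sidesteps a small imprecision in the paper, where the consistency vertices constitute exactly half (not a strict majority) of each copy's neighborhood. Both constructions yield $n/\Delta=\Theta(N)$ for the original $N$-vertex $5$-regular instance, so the running-time conclusion against Theorem~\ref{thm:start} is the same. The only place you are slightly informal, as you acknowledge, is in hitting an arbitrary target degree function $\Delta(n)=o(n)$ exactly; the paper is equally terse on this point, and padding with a sparse regular filler (or just restricting to the achievable degrees, which is what the theorem's ``for some $\Delta=o(n)$'' quantifier allows) handles it.
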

\begin{proof}[Theorem \ref{thm:hard2}]
Without loss of generality we prove the theorem for the case when the degree is
a multiple of 10.

Consider an instance $G(V,E)$ of \MC\ as given by Theorem \ref{thm:start}. Let
$n=|V|$ and suppose that the desired degree is $d=10\Delta$, where $\Delta$ is
a function of $n$.  We construct a graph $G'$ as follows: for each vertex $u\in
V$ we introduce $\Delta$ new vertices $u_1,\ldots,u_\Delta$ as well as
$5\Delta$ ``consistency'' vertices $c^u_1,\ldots,c^u_{5\Delta}$. For every edge
$(u,v)\in E$ we add all edges $(u_i,v_j)$ for $i,j\in\{1,\ldots,\Delta\}$.
Also, for every $u\in V$ we add all edges $(u_i,c^u_j)$, for
$i\in\{1,\ldots,\Delta\}$ and $j\in\{1,\ldots,5\Delta\}$. This completes the
construction.

The graph we have constructed is $10\Delta$-regular and is made up of $6\Delta
n$ vertices. Let us examine the size of its optimal cut. Consider an optimal
solution and observe that, for a given $u\in V$ all the vertices $c^u_i$ can be
assumed to be on the same side of the cut, since they all have the same
neighbors. Furthermore, for a given $u\in V$, all vertices $u_i$ can be assumed
to be on the same side of the cut, namely on the side opposite that of $c^u_i$,
since the vertices $c^u_i$ are a majority of the neighborhood of each $u_i$.
With this observation it is easy to construct a one-to-one correspondence
between cuts in $G$ and locally optimal cuts in $G'$.

Consider now a cut that cuts $c|E|$ edges of $G$. If we set all $u_i$ of $G'$
on the same side as $u$ is placed in $G$ we cut $c|E|\Delta^2$ edges of the
form $(u_i,v_j)$. Furthermore, by placing the $c^u_i$ on the opposite side of
$u_i$ we cut $5\Delta^2 |V|$ edges. Thus the max cut of $G'$ is at least
$c|E|\Delta^2 + 5\Delta^2 |V|$. Using the previous observations on locally
optimal cuts of $G'$ we can conclude that if $G'$ has a cut with $s|E|\Delta^2
+ 5\Delta^2|V|$ edges, then $G$ has a cut with $s|E|$ edges. Using the fact
that $2|E|=5|V|$ (since $G$ is 5-regular) we get a constant ratio between the
size of the cut of $G'$ in the two cases. Call that ratio $r$.

Suppose now that we have an approximation algorithm with ratio better than $r$
which, given an $N$-vertex $d$-regular graph runs in time
$2^{(N/d)^{1-\epsilon}}$. Giving our constructed instance as input to this
algorithm would allow to decide the original instance in time
$2^{n^{1-\epsilon}}$.
\qed\end{proof}
Theorem \ref{thm:hard2} establishes that our approach is essentially
optimal, not just for average degree $\sqrt{n}$, but for any other intermediate
density.

%\section{Conclusions and Directions for Further Research}
%
%Our basic result in this paper is that \kCSP\ instances on~$n$ variables with $\Omega(n^{k-1+\delta})$ constraints can be solved within approximation ratio
%$(1-\eps)$  in time $2^{O(n^{1-\delta}\ln n
%/\eps^3)}$, for any fixed $\delta\in (0,1]$, $\eps>0$ and integer $k\ge 2$. As a consequence, \MC{} admits an approximation scheme running in
%time $2^{O(\frac{n}{\Delta}\ln n/\eps^3)}$ for graphs with average degree
%$\Delta$. 
%%In other words, this is an approximation scheme that runs in time \emph{sub-exponential in $n$} even for almost sparse instances where the average degree is $\Delta = n^\delta$ for some small $\delta>0$. 
%More generally, for graph problems the trade-off provided by our basic result covers the whole spectrum from dense to almost sparse instances,
%while for general \kCSP, it covers instances where the number of constraints ranges from $\Theta(n^{k})$ to $\Theta(n^{k-1})$. Furthermore, we have shown that our results are essentially best possible, assuming the ETH.
%
%There is a number of interesting open questions motivated by our results. There exist numerous problems, the best polynomial approximation ratios of which come from LP-based techniques (associated with several smart ideas of randomized rounding, the \textsc{Steiner Tree} , the \textsc{Multiway Cut}, several versions of location problem, etc.) At what extent, our techniques, or the development of new ad-hoc methods are able to derive tight subexponential approximation results for some of them?
%

\bibliographystyle{plain}
\bibliography{subexponential}

%\appendix
%\input{appendix}
%\input{general}
%\input{rounding}
%\input{dense2}
%\input{lower2}

\end{document}